\documentclass[11pt]{article}
\usepackage[margin=0.8in]{geometry}
\usepackage{booktabs} % For formal tables
\usepackage[ruled,linesnumbered]{algorithm2e} % For algorithms

\SetAlFnt{\small}
\SetAlCapFnt{\small}
\SetAlCapNameFnt{\small}
\SetAlCapHSkip{0pt}
\IncMargin{-\parindent}

\makeatletter
\AddToHook{cmd/appendix/before}{\def\cref@section@alias{appendix}}
\makeatother

\usepackage{enumitem}

\usepackage{url}              % simple URL typesetting   
\usepackage{amsfonts,amssymb,mathtools,bbm,verbatim}  
\usepackage{amsthm}
\usepackage[authoryear]{natbib} %  (Author, Year)

\newtheorem{claim}{Claim} % added by Paritosh
\newtheorem{lemma}{Lemma}
\newtheorem{theorem}{Theorem}
\newtheorem{definition}{Definition}

\usepackage{thm-restate}
\usepackage{mathtools}
\usepackage{bbm}
\usepackage[capitalize]{cleveref}

\newcommand{\items}{[m]}
\newcommand{\agents}{[n]}
\newcommand{\alloc}{\mathcal{A}}

\newcommand{\vtrue}{v^*}

\newcommand{\vestimate}{\hat{v}}

\newcommand{\I}{\mathbbm{1}}

\newcommand{\D}{\mathcal{D}}
\newtheorem{application}{Application}

\crefname{algoLine}{line}{lines}
\Crefname{algoLine}{Line}{Lines}

\def\sF{{\psi}}

\def\P{{\mathbb{P}}}
\def\E{{\mathbb{E}}}

\def\var{{\mathrm{Var}}}

\def\R{{\mathbb{R}}}

\def\1{{\mathbf{1}}}
\def\0{{\mathbf{0}}}
\def\I{{\mathbbm{1}}}

\newcommand{\T}{\mathrm{T}} % transpose
\DeclareMathOperator*{\argmax}{arg\,max}
\DeclareMathOperator*{\argmin}{arg\,min}

\newcommand\numberthis{\addtocounter{equation}{1}\tag{\theequation}}

\title{Fair Division Under Inaccurate Preferences}

\author{Trung Dang\thanks{The University of Texas at Austin. {\tt dddtrung@cs.utexas.edu}} \and Daniel Halpern\thanks{Google Research. {\tt dhalpern@google.com}} \and Anuran Makur\thanks{Purdue University. {\tt amakur@purdue.edu}} \and Alexandros Psomas\thanks{Purdue University and Google Research. {\tt apsomas@purdue.edu}} \ \and Japneet Singh\thanks{Purdue University. {\tt japneetsingh776@gmail.com}} \and Paritosh Verma\thanks{Purdue University. {\tt paritoshverma97@gmail.com}}}

\date{}

\begin{document}

\maketitle

% Abstract. Note that this must come before \maketitle.
\begin{abstract}
    The fair allocation of scarce resources is a central problem in mathematics, computer science, operations research, and economics. While much of the fair-division literature assumes that individuals have underlying cardinal preferences, eliciting exact numerical values is often cognitively burdensome and prone to inaccuracies. A growing body of work in fair division addresses this challenge by assuming access only to ordinal preferences. However, the restricted expressiveness of ordinal preferences makes it challenging to quantify and optimize cardinal fairness objectives such as envy.

    In this paper, we explore the broad landscape of fair division of indivisible items given inaccurate cardinal preferences, with a focus on minimizing envy. We consider various settings based on whether the true preferences of the agents are stochastic or worst-case, and whether the inaccuracies, modeled as additive noise, are stochastic or worst-case. When the true preferences are stochastic, we show that envy-free allocations can be computed with high probability; this is achieved both in the setting with stochastic and worst-case noise. This generalizes a notable result in stochastic fair division, which establishes a similar guarantee, albeit in the absence of any noise. When the true preferences are worst-case, and the noise is bounded, we analyze the maximum envy achieved by the well-known Round-Robin algorithm. This bound is shown to be tight for deterministic algorithms, and applications of this bound involving known results from statistical estimation are provided. Lastly, we consider a setting with worst-case preferences and noise, where the true preferences for each item are revealed upon its allocation. Here, we give an efficient online algorithm that guarantees logarithmic maximum envy with high probability, for bounded noise. This result builds upon and generalizes a known result from algorithmic discrepancy to a setting with noisy input data. The broad range of tools used in our results highlights the theoretical relevance of this natural fair division problem.
\end{abstract}

% Title page for title and abstract only.
% \begin{titlepage}

% \maketitle

% Optionally include a table of contents
% \vspace{1cm}
% \setcounter{tocdepth}{2} % adjust to 1 if desired
% \tableofcontents

% \end{titlepage}

\section{Introduction} \label{sec: Introduction}
% 6-8 pages

Fair division studies the fundamental problem of fairly allocating resources to agents with heterogeneous preferences over these resources. Mathematical formulations date back to~\cite{steinhaus1948problem}
and~\cite{dubins1961cut}, though some of the field's core questions (and even key algorithms) can be traced back to the Bible and ancient Greece. Most of the existing literature assumes that agents' preferences are represented by valuation functions, which the agents report to the designer, or the designer can access through exact valuation queries.

While we have vastly deepened our understanding of when fairness is --- and when it is not --- attainable under these models, when taking this approach to practice, an especially unappealing aspect is that participants must report \emph{exact} numerical values to the algorithm. Motivated by the reality that eliciting cardinal valuations is prone to inaccuracies as well as cognitively burdensome for the participants, a growing body of work studies the problem of a designer that only has access to \emph{ordinal information}.
This idea originates from the seminal work of~\cite{procaccia2006distortion} that defines the concept of distortion: the worst-case loss with respect to some aggregate
cardinal objective (e.g., utilitarian social welfare) due to limited expressiveness; see~\cite{anshelevich2021DistortionIS} for a survey.

Most works in this area take a worst-case approach; after the algorithm has made its decisions, the underlying cardinal valuations are chosen adversarially (but, of course, consistently with the ordinal information available to the algorithm). This approach gives robust, albeit pessimistic guarantees. While we agree on the assumption that participants being able to submit \emph{precise} numerical values is unrealistic in practice, we also seek a model which takes into account the fact that participants \emph{can} provide useful, but perhaps imperfect, cardinal signals.

%Notably, this worst-case viewpoint stands in sharp contrast to the vast literature on discrete choice models, which, like our problem here, also assumes underlying cardinal utilities and access to ordinal information.
%Standard discrete choice models assume that the ordinal observations are \emph{stochastic}. For example, the Bradley-Terry-Luce (BTL) model (sometimes referred to as the multinomial logit model), arguably the most popular discrete choice model, attempts to estimate the (cardinal) utility parameters of the BTL model based on (ordinal) comparison data. However, such attempts to recover cardinal information through ordinal comparison data are inevitably prone to inefficiency, since the former is more informative than the latter. Working with ordinal preference data, in the context of fair division, also prevents us from quantifying and optimizing prominent cardinal notions of fairness like envy.

Hence, in this paper, we consider a middle ground between the ``perfectly accurate cardinal'' and ``fully ordinal'' models. We initiate the study of fair division given access to inaccurate cardinal preferences. Our goal is to design algorithms that minimize the maximum envy between agents (with respect to true preferences), using only inaccurate preferences as input. 
%An agent is said to envy another agent if the value of the items that the second agent has, as per the first agent, is higher than the value that the first agent has for its own items. Achieving envy-freeness, or the general pursuit of minimizing envy between agents, is one of the most well-studied objectives in fair division~\cite{varian1973equity,lipton2004approximately}. 
We investigate various formal realizations of this general fair division problem, such as different 
ways to model inaccurate cardinal preferences as well as different assumptions about agents' true preferences.

\subsection{Our Contributions}

We introduce the problem of fair division with imperfect information. There are $n$ agents with additive preferences, and $m$ indivisible goods/items. Given noisy estimates $\vestimate_{i,j}$ of the true values $\vtrue_{i,j}$, for each agent $i$ and item $j$ as input, our goal is to allocate the items in a way that minimizes the envy, with respect to the true values $\vtrue_{i,j}$. We model inaccuracies as additive noise on the true values. Our results are broadly divided into two categories depending on whether the true values of the agents are stochastic or worst-case.\\

\noindent
\textbf{Stochastic true valuations.} These results focus on scenarios in which the true valuations $\vtrue_{i,j}$ are stochastic. We consider two subsettings, depending on whether the added noise is stochastic or worst-case. In both cases, we establish the existence of envy-free allocations, with high probability.

\begin{itemize}[leftmargin=*]
\item First, we consider the case of stochastic true valuations and stochastic noise, i.e., $\vtrue_{i,j} \sim \mathcal{D}$, $\eta_{i,j} \sim \mathcal{D}'$, where we get to observe inaccurate/noisy values $\vestimate_{i,j} = \vtrue_{i,j} + \eta_{i,j}$; here $\mathcal{D}$, $\mathcal{D}'$ are two unknown, fixed, real-valued distributions. In \Cref{theorem:observed-value-maximization}, we show that the simple procedure of maximizing welfare with respect to the observed values produces envy-free allocations with high probability. This generalizes a result of \cite{dickerson2014computational} to the setting with inaccurate preferences.\footnote{\cite{dickerson2014computational} show that envy-free allocations exist with high probability in a setting where $\eta_{i,j} = 0$ for all $i,j$.} To prove our result, we establish a strict statistical correlation inequality (\Cref{lemma:strict-fkg}) that might be of independent interest. We show that for any two real-valued, nondecreasing functions $f$, $g$ and any real-valued random variable $X$, we have $\E[f(X)\cdot g(X)] > \E[f(X)] \cdot \E[g(X)]$ if and only if $\var(f(X)) > 0$ and $\var(g(X)) > 0$. This is used to establish the inequality $\mathbb{E}[\vtrue_{i,j} \mid i \in \argmax_{k \in [n]}\{\vestimate_{k,j}\}] > \mathbb{E}[\vtrue_{i,j} \mid i \notin \argmax_{k \in [n]}\{\vestimate_{k,j}\}]$ (\Cref{lemma:noisy-order-statistic}), which, in turn, allows us to argue that agents have higher true values ($\vtrue_{i,j}$) for items they are allocated (compared to the ones they are not allocated), where the allocation decisions are based on the noisy values ($\vestimate_{i,j}$). Note that the weaker inequality, $\mathbb{E}[\vtrue_{i,j} \mid i \in \argmax_{k \in [n]}\{\vestimate_{k,j}\}] \geq \mathbb{E}[\vtrue_{i,j} \mid i \notin \argmax_{k \in [n]}\{\vestimate_{k,j}\}]$, follows directly from Chebyshev's association inequality; unfortunately though, it is insufficient for our result, which necessarily requires a strict inequality.
\item For stochastic true valuations and worst-case noise, we consider a general setting where the value vector for each item $j$ is drawn i.i.d.\@ from a bounded non-negative vector-valued distribution $\D$ over $[0,b]^n$; that is, agents' valuations for a specific item are arbitrarily correlated, but values across items are independent. We prove that, if $\D$ satisfies a mild technical condition (essentially that no two agents are identical up to scaling) then as long as the noise is bounded by a constant, there exists an algorithm that, given noisy estimates, outputs an allocation that is envy-free with high probability (\Cref{theorem:stochastic-value-3-worst-case-noise}). To prove \Cref{theorem:stochastic-value-3-worst-case-noise}, we establish the existence of strongly envy-free allocations, with high probability, in the underlying noiseless setting, by adapting an existence result from the literature on cake-cutting~\cite{barbanel2005geometry}.
\end{itemize}

% We consider four regimes depending on whether the true values are worst-case or stochastic, and whether the noisy estimates are worst-case or stochastic.

\noindent
\textbf{Worst-case true valuations.} These results focus on scenarios where the underlying true valuations $\vtrue_{i,j}$s are fixed (i.e., worst-case).

\begin{itemize}[leftmargin=*]
\item Our first set of results concerns the case of worst-case true valuations and worst-case noise, i.e., we assume that the unknown true values $\vtrue_{i,j} \in [0,b]$ are fixed, and we have estimates $\vestimate_{i,j}$ that satisfy $|\vestimate_{i,j} - \vtrue_{i,j}| \leq \varepsilon$ for some (possibly unknown) $\varepsilon \geq 0$. Here, we establish a tight bound (\Cref{thm: worst-case valuations and worst-case error upper bound,thm: worst-case valuations and worst-case error lower bound}) on the best possible envy guarantee that can be achieved by any deterministic algorithm that only has access to $\vestimate_{i,j}$. In particular, we show that the standard Round-Robin algorithm\footnote{The Round-Robin algorithm operates in rounds, where in each round, agents follow an order $1,2,\ldots, n$, and pick, from the set of unallocated/available items, an item having the highest value. Here we assume that these picks are made based on the noisy estimates $\vestimate_{i,j}$.} achieves an optimal guarantee. %While the proofs of both the upper and lower bounds, which establish this tight guarantee, are based on standard arguments, we note that our choice of using round robin is crucially based on the fact that it ensures that all the agents get roughly equal \emph{number} of items. As we discuss in \Cref{section:bringing-together}, this property enables us to use the tools of BTL parameter estimation for fair division. %, making round robin essentially the unique deterministic algorithm suitable for our two-step framework.

\item We proceed to give two applications of our Round-Robin result. 

In the first application (\Cref{application:combined_round_robin}), we present an end-to-end framework where we use \emph{ordinal information} to estimate values, and then use the estimated values to make allocation decisions. Specifically, we assume access to stochastic ordinal observations; we choose arguably the most popular discrete choice model, the Bradley-Terry-Luce
(BTL) model (sometimes referred to as the multinomial logit model). In this model, the probability of an agent ranking one alternative higher than another is given by the logistic function applied to the difference between the valuations of the two options~\cite{Zermelo1929,BradleyTerry1952,Luce1959,Plackett1975,McFadden1973}. Given such ordinal information, we first produce estimates $\vestimate_{i,j}$ for the true preferences. Then these estimates are plugged into the Round-Robin algorithm to get an allocation with a bounded maximum envy. The fact that Round-Robin outputs allocations that are \emph{balanced}\footnote{An allocation $\alloc = (A_1, A_2, \ldots, A_n)$ is called balanced if all agents get roughly equal number of items, i.e., $|A_i| - |A_j| \leq 1$ for all $i,j \in [n]$.} in addition to being fair is a key property that enables us to combine BTL and Round-Robin in an end-to-end framework. 

The second application (\Cref{application:worst-case-true-val-stochastic-noise}) deals with stochastic noise with an \emph{unbiased} error. Specifically, we assume that $\vestimate_{i, j} \coloneqq \vtrue_{i, j} + \varepsilon_{i, j}$, where $\varepsilon_{i,j} \coloneqq \sigma_{i,j} \cdot z_{i,j}$ where $z_{i, j}$s are drawn i.i.d. from a non-negative (possibly unknown) distribution $\D$ and $\sigma_{i,j}$s are drawn uniformly at random from the set $\{ -1 , 1 \}$. We focus on small tail distributions $\D$, and in particular, MHR distributions.\footnote{A distribution $\D$ has monotone hazard rate (MHR) iff $\frac{1 - F(x)}{f(x)}$ is non-increasing, where $F(x)$ and $f(x)$ are its cumulative distribution function and probability density function, respectively.} We show that, given $\vestimate_{i,j}$s, allocations with at most a constant maximum envy can be computed efficiently. Towards this, we employ known results about the order statistics of MHR distributions and a recent result concerning the concentration of order statistics of MHR distributions. \Cref{application:worst-case-true-val-stochastic-noise} is obtained by combining the bound on the largest noise with our aforementioned Round-Robin result.
\item Finally, after exploring various aspects of worst-case valuations and bounded noise (the tight Round-Robin guarantee and its applications), we proceed to study a different, more permissive noise model. In particular, we focus on a scenario where, just as before, we have to allocate items to agents based on the noisy estimates $\vestimate_{i,j}$; however, this time, once any item $j$ is allocated, then the true valuation vector $\vtrue_j$ corresponding to this item is revealed.\footnote{The revelation of information (preference) that governs the objective (envy), following a decision of the algorithm (allocation of an item) is also a key theme in many well-studied problems like the online convex optimization (OCO), the experts problem, etc.} This immediate revelation of the true preference gives more power to the algorithm designer. In this model, we show that there is a simple, online algorithm that achieves a maximum envy bound of $\mathcal{O}\left(\sqrt{n}(\log{\frac{mn}{\delta}} + m\varepsilon)\right)$ with probability at least $1-\delta$ (\Cref{theorem:envy-minimization-discrepancy}), where the noise is bounded by $\varepsilon$, i.e., $|\vtrue_{i,j} - \vestimate_{i,j}| \leq \varepsilon$. The algorithm achieving this bound (\Cref{algo:BalanceUnderNoise}) is inspired by (and is a modified version of) an online discrepancy minimization algorithm given by~\cite{alweiss2021discrepancy}.\footnote{\cite{alweiss2021discrepancy} study the problem of online discrepancy minimization in the absence of any noise.} Notably, if the noise satisfies the upper bound $\varepsilon \leq \frac{1}{m}\log{\frac{mn}{\delta}}$, then we get an envy as well as a discrepancy bound of $\mathcal{O}\left(\sqrt{n}\cdot \log{\frac{mn}{\delta}}\right)$ that grows logarithmically in $m$. This matches the result of \cite{alweiss2021discrepancy} in the noiseless setting, and is obtained by a comparatively simpler induction-based analysis. To establish this result, we first bound discrepancy in the online vector balancing problem (\Cref{corollary:infinity-norm-bound}), and then bound multicolored discrepancy in an online setting with noisy vectors (\Cref{theorem:multicolored-discrepancy-bound}). A known reduction from envy minimization to multicolored discrepancy minimization then gives us our final envy minimization result (\Cref{theorem:envy-minimization-discrepancy}).
\end{itemize}

\subsection{Additional Related Work}
Most literature on fair division assumes that agents' preferences are captured by valuation functions. An (often implicit) assumption in this literature is that the agents have the ability to report \emph{exact} numerical values to the designer. In practice, eliciting cardinal valuations is error-prone, as well as cognitively burdensome. A growing body of work focuses on settings where access to only \emph{ordinal information} is available.  A central notion is distortion~\cite{procaccia2006distortion}: the worst-case loss with respect to some aggregate
cardinal objective resulting from the access to limited ordinal information. Bounds on distortion have been studied for problems, including fair division~\cite{amanatidis2016truthful,halpern2021},
matching~\cite{anshelevich2016blind,abramowitz2018utilitarians,anshelevich2019tradeoffs,ebadianevery},
voting~\cite{boutilier2012optimal,anshelevich2018approximating} %,caragiannis2017subset,goel2017metric,mandal2020optimal,munagala2019improved,kempe2020communication,gkatzelis2020resolving,kizilkaya2022plurality,charikar2024breaking,voudouris2025metric,gonczarowski2024distortion,gkatzelis2023best}, 
and participatory budgeting~\cite{benade2021preference} (see~\cite{anshelevich2021DistortionIS} for a survey).

Closer in spirit to our setting are works that elicit ordinal feedback (often via pairwise comparisons) while allowing the underlying cardinal valuations to be chosen adversarially, e.g.,~\cite{benade2025dynamic,amanatidis2016truthful,halpern2021,ebadian2022efficient,amanatidis2022few}. Our results for stochastic true valuations also relate to the well-studied area of stochastic fair division~\cite{dickerson2014computational,bai2021envy,benade2023existence}. Finally,~\cite{bai2022fair} study a \emph{smoothed} model, where a worst-case utility profile is slightly perturbed at random. In~\cite{bai2022fair}, the perturbed profile constitutes agents' true valuations; in contrast, we keep true valuations fixed and the perturbed profile is the observed instance.
\section{Preliminaries}\label{preliminaries:fair-division}

We consider the problem of dividing a set of $m$ indivisible items among a set of $n \ge 2$ agents. We will represent the set of agents by $[n]$ and the set of items by $[m]$.  Each agent $i \in \agents$ has a valuation function $\vtrue_i : 2^{[m]} \rightarrow \mathbb{R}$, which maps every subset of items $S \subseteq \items$ to $\vtrue_i(S)$, the agent's value for $S$. In this paper, we focus on the case of \emph{additive} valuation functions. That is, each agent $i \in \agents$ has a value $\vtrue_{i,j} \in \mathbb{R}$ for each item $j \in \items$, and the value of agent $i$ for a subset of items $S \subseteq \items$ is $\vtrue_i(S) = \sum_{j \in S} \vtrue_{i,j}$.

\paragraph{Allocations.} An allocation $\alloc = (A_1, \ldots, A_n)$ is an $n$-partition of $\items$, where $A_i$ denotes the set of items allocated to agent $i \in \agents$. That is, in any allocation $\alloc$, all items are allocated, $\bigcup_{i \in \agents} A_i = \items$, and each item is allocated to only one agent, $A_i \cap A_j = \emptyset$ for all distinct agents $i, j \in \agents$.

\paragraph{Fairness Notions.} We are interested in finding allocations that are fair, i.e., each agent is satisfied with the items they receive. Envy-freeness is the central fairness notion in fair division.
 \begin{definition}[Envy-freeness]
     An allocation $\alloc = (A_1, \ldots, A_n)$ is \emph{envy-free (EF)} if each agent prefers their own bundle over the bundle of any other agent, i.e., for all agents $i,j \in \agents$, we have $\vtrue_i(A_i) \geq \vtrue_i(A_j)$.
 \end{definition}

Notably, envy-free allocations do not always exist when the items being allocated are indivisible. Hence, our goal in such scenarios is to minimize the maximum envy between any pair of agents. For any two agents $i, j \in \agents$, the envy that agent $i$ has for agent $j$ is defined as $\vtrue_i(A_j) - \vtrue_i(A_i)$. Our objective will be to minimize the maximum envy, i.e., $\max_{i \neq j} \vtrue_i(A_j) - \vtrue_i(A_i)$. In an envy-free allocation, the maximum envy is non-positive.

\paragraph{Inaccurate Preference Model.} Our paper focuses on the setting where we don't get to observe the true values $\vtrue_{i,j}$ of the agents. Instead, we assume that for each agent $i \in [n]$ and item $j \in [m]$, we only have access to a noisy estimate $\vestimate_{i,j}$ of the true values $\vtrue_{i,j}$. Indeed, the envy between the agents is still based on the true values $\vtrue_{i,j}$.

\section{Stochastic True Valuations}\label{subsec: stochastic valuations}

In this section, we focus on the setting where the true valuations of all the agents, $\{\vtrue_{i,j}\}_{i,j}$, are stochastic. \Cref{section:stochastic-val-stochastic-err} describes our result for the case when the noise ($\vestimate_{i,j}-\vtrue_{i,j}$) is stochastic and \Cref{section:stochastic-val-worst-case-errors} describes the result for worst-case, bounded noise.

\subsection{Stochastic Values and Stochastic Noise}\label{section:stochastic-val-stochastic-err}

Let $\mathcal{D}$ and $\mathcal{D}'$ be two real-valued distributions. We consider a model where for every agent $i \in [n]$, item $j\in [m]$, we have $\vtrue_{i,j} \sim \mathcal{D}$, $\eta_{i,j} \sim \mathcal{D}'$, and the algorithm gets to observe values $\vestimate_{i,j} = \vtrue_{i,j} + \eta_{i,j}$. We further assume that the distributions $\mathcal{D}$ and $\mathcal{D}'$ are unknown to the algorithm.
In this setting, we prove that maximizing welfare with respect to the observed values --- a simple and computationally efficient algorithm --- outputs an envy-free allocation, with high probability (\Cref{theorem:observed-value-maximization}). 

\Cref{theorem:observed-value-maximization},  only requires $\mathcal{D}$ to be a continuous distribution with positive variance; $\mathcal{D}'$ can be an arbitrary (discrete or continuous) distribution. It is easy to see that positive variance for $\mathcal{D}$ is a necessary requirement for envy-freeness. If $\mathcal{D}$ has a zero variance, for e.g., $\mathcal{D}$ takes the value $1$ with probability $1$ and $m$ is not a multiple of $n$, then an envy-free allocation does not exist. Continuity of $\mathcal{D}$ is assumed for the ease of exposition; in particular, to avoid tie-breaking in the algorithm.

To prove this result, we first develop a correlation inequality (\Cref{lemma:strict-fkg}) which is a refinement of the well-known Chebyshev's association inequality stated below in~\Cref{lemma:fkg}.

\begin{lemma}[Chebyshev's Association Inequality]\label{lemma:fkg}{\cite{conc_ineq_book}}
    Given any two real-valued, nondecreasing functions $f$, $g$ and a real-valued random variable $X$, $\E[f(X)\cdot g(X)] \geq \E[f(X)] \cdot \E[g(X)]$.
\end{lemma}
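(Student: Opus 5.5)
The plan is the standard symmetrization argument. Let $X'$ be an independent copy of $X$, with the same distribution, defined on a common probability space. Consider the product
\[
(f(X)-f(X'))\,(g(X)-g(X')).
\]
Both $f$ and $g$ are nondecreasing. If $X \ge X'$, both factors are nonnegative; if $X < X'$, both are nonpositive. So the product is pointwise nonnegative, and
\[
\E\bigl[(f(X)-f(X'))(g(X)-g(X'))\bigr] \ge 0 .
\]

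Next we expand this expectation. Since $X$ and $X'$ are independent and identically distributed, $\E[f(X)g(X')] = \E[f(X)]\,\E[g(X)]$, and likewise for the other cross term. The expectation therefore equals
\[
2\E[f(X)g(X)] - 2\E[f(X)]\,\E[g(X)],
\]
which gives the claim after dividing by $2$.

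The one technical point is integrability, so that the expectations are finite and the expansion is legitimate. I will assume, as the statement implicitly does, that $f(X)$, $g(X)$ and $f(X)g(X)$ are integrable. If one prefers, first prove the result for bounded truncations $\max(\min(f,M),-M)$, which are still nondecreasing. Then pass to the limit $M \to \infty$ by dominated convergence, using $|f(X)g(X)|$ and the integrability of $f(X)$ and $g(X)$ as dominating functions. This limiting step is the only potential obstacle; the core inequality is immediate from the pointwise sign argument.
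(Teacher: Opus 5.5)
Your proof is correct, and the integrability remark can be shortened. Under your assumptions the cross terms are already integrable, since $\E|f(X)g(X')| = \E|f(X)|\,\E|g(X)| < \infty$ by independence, so the truncation step is harmless but unnecessary.

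The paper does not prove this lemma itself. It cites \cite{conc_ineq_book}, whose proof is exactly your independent-copy symmetrization. The paper does, however, re-derive the weak inequality inside its proof of \Cref{lemma:strict-fkg}, by a different route:
\begin{itemize}
\item it writes the covariance as $\E[f(X)(g(X)-\E[g(X)])]$;
\item it splits on the sign of $g(X)-\E[g(X)]$, whose positive and negative parts have equal mass;
\item it compares the resulting deviation-weighted averages of $f(X)$ through threshold points, $u \ge f(x_u) \ge f(x_\ell) \ge \ell$.
\end{itemize}

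What that route buys is that the covariance appears as a positive multiple of a difference of averages of $f$ over the upper and lower regions of $g$. When $g$ is an indicator, this is exactly the comparison of conditional means used in \Cref{lemma:noisy-order-statistic}.

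Your route is shorter and needs no threshold bookkeeping. It also gives the strict characterization of \Cref{lemma:strict-fkg} with little extra work, because the integrand is nonnegative and so it suffices that it be positive with positive probability:
\begin{itemize}
\item If $f(X)$ and $g(X)$ are both nondegenerate, choose $c_f, c_g$ so that $\{f<c_f\}$, $\{f\ge c_f\}$, $\{g<c_g\}$, $\{g\ge c_g\}$ all have positive probability under the law of $X$.
\item The two ``lower'' sets are down-closed subsets of $\mathbb{R}$, hence nested, so their intersection has positive probability. The same holds for the two up-closed ``upper'' sets.
\item On the event that $X$ lies in both lower sets and $X'$ lies in both upper sets, both factors are strictly negative, so the product is strictly positive. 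By independence this event has positive probability.
\end{itemize}
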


The following correlation inequality recovers \Cref{lemma:fkg} and additionally presents the necessary and sufficient condition for a strict inequality to hold. Notably, a strict inequality is crucially needed for the main theorem, \Cref{theorem:observed-value-maximization}. The result holds for arbitrary (discrete, absolutely continuous, singular continuous) random variables; the proof appears in~\Cref{appendix:stochastic-val-stochastic-err}.

\begin{restatable}{theorem}{StrictCorrelationInequality}\label{lemma:strict-fkg} Given any two real-valued, nondecreasing functions $f$, $g$ and a real-valued random variable $X$, we have $\E[f(X)\cdot g(X)] > \E[f(X)] \cdot \E[g(X)]$ if and only if $\var(f(X)) > 0$ and $\var(g(X)) > 0$.
\end{restatable}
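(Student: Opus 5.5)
The plan is to use the standard symmetrization trick. Let $X'$ be an independent copy of $X$. Assume for now that the relevant expectations are finite; if $\E[f(X)^2]$ or $\E[g(X)^2]$ is infinite, I would first reduce to this case by truncation, or restrict the statement to the integrable case. Then
\begin{equation*}
2\bigl(\E[f(X)g(X)] - \E[f(X)]\,\E[g(X)]\bigr) = \E\bigl[(f(X)-f(X'))(g(X)-g(X'))\bigr].
\end{equation*}
Since $f$ and $g$ are both nondecreasing, the product $(f(x)-f(x'))(g(x)-g(x'))$ is nonnegative for every pair $(x,x')$. This already recovers \Cref{lemma:fkg}. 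It also reduces strictness to the following question: does the event $E=\{(f(X)-f(X'))(g(X)-g(X'))>0\}$ have positive probability?

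\emph{Only if direction.} If $\var(f(X))=0$, then $f(X)$ is a.s.\ constant. Hence $f(X)-f(X')=0$ a.s., the right-hand side vanishes, and we get equality. The same argument applies to $g$.

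\emph{If direction.} Assume both variances are positive. By symmetry, the event $E$ has positive probability if and only if $\P(f(X)<f(X'),\ g(X)<g(X'))>0$. This is the main obstacle, because we must rule out that $f$ and $g$ increase on disjoint parts of the support of $X$.

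I would argue via thresholds. Since $\var(f(X))>0$, there is a real $a$ with $\P(X\le a)>0$ and $\P(X>a)>0$ such that $f$ separates the two sides, i.e., $\sup_{x\le a, x\in S}f < \inf_{x>a, x\in S}f$ on a set of positive mass. Monotonicity is what makes such separation possible. More concretely, let $m_f$ be a median-type value where $\P(f(X)<c)>0$ and $\P(f(X)\ge c)>0$. Then $\{f(X)<c\}=\{X\in I_f\}$ for a down-set (interval) $I_f$, and $\{f(X)\ge c\}$ is the complementary up-set. The same construction for $g$ gives an interval $I_g$ and a level $d$. The two down-sets $I_f, I_g$ are nested, say $I_f\subseteq I_g$.

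Now consider the events $A=\{X\in I_f\}$ and $B=\{X'\notin I_g\}$. On $A$ we have $f(X)<c$. On $B$ we have $f(X')\ge c$, because $X'\notin I_g\supseteq I_f$, and also $g(X')\ge d$. This is not yet enough, because we need $g(X)<d$, and that holds since $X\in I_f\subseteq I_g$. So on $A\cap B$ we get $f(X)<f(X')$ and $g(X)<g(X')$ simultaneously. Finally, $\P(A\cap B)=\P(X\in I_f)\,\P(X'\notin I_g)>0$ by independence and the choice of levels. The other nesting case is symmetric, with the roles of $f$ and $g$ exchanged.

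The strict inequality then follows because the integrand is nonnegative everywhere and strictly positive on a set of positive probability.
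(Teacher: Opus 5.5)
Your proof is correct, but it takes a different route from the paper. You symmetrize with an independent copy $X'$, so twice the covariance equals $\E[(f(X)-f(X'))(g(X)-g(X'))]$, the expectation of a pointwise nonnegative quantity. Strictness then reduces to showing $\P(f(X)<f(X'),\,g(X)<g(X'))>0$, which you get because the sublevel sets $\{f<c\}$ and $\{g<d\}$ are down-sets of $\mathbb{R}$ and therefore nested.

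The paper never introduces a second copy. It writes the covariance as $\E[f(X)(g(X)-\E[g(X)])]$, splits by the sign of $g(X)-\E[g(X)]$, and normalizes both parts by their common mass $C>0$. This gives $C(u-\ell)$, where $u$ and $\ell$ are weighted averages of $f(X)$ over $\{g(X)>\E[g(X)]\}$ and $\{g(X)<\E[g(X)]\}$. Monotonicity yields $u\ge f(x_u)\ge f(x_\ell)\ge\ell$, and strictness is argued by showing one of these three inequalities is strict.

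The paper's approach gives a nice weighted-mean intuition, but its strictness step is more delicate than yours. When $f(x_u)=f(x_\ell)$, it appeals to the law of total variance to claim a positive conditional variance on one of the two events, and that is not the right criterion. For $X\sim\mathrm{Unif}[0,2]$ and $f=g=\I[x>1]$, both conditional variances vanish, and strictness comes instead from $u=1>f(x_u)=0$. Your symmetrization makes the equality case transparent (the covariance vanishes iff the integrand vanishes a.s.) and treats atoms and continuous parts uniformly.

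Three small cleanups:
\begin{itemize}
\item Drop the sentence claiming $\sup_{x\le a}f<\inf_{x>a}f$ for some $a$. It fails, e.g., for continuous strictly increasing $f$, and you never use it.
\item Justify the choice of level $c$: any value strictly between the essential infimum and essential supremum of $f(X)$ works, and these differ because $f(X)$ is not a.s.\ constant.
\item Finite second moments are implicit in the statement (as in the paper), so the truncation remark is unnecessary.
\end{itemize}
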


\Cref{lemma:strict-fkg} is used to prove \Cref{lemma:noisy-order-statistic}, which plays a central role in \Cref{theorem:observed-value-maximization}. At a high-level, \Cref{lemma:noisy-order-statistic} shows that performing welfare maximization on inaccurate values is, in fact, effective in maximizing welfare with respect to the true valuations; this eventually leads to envy-freeness.

\begin{restatable}{lemma}{noisyValuesCorrelation}\label{lemma:noisy-order-statistic}
%%% AM: I edited this statement to something we might want to have in the final version
    Let $\mathcal{D}$ and $\mathcal{D}'$ be two probability distributions. % whose pdfs are square-integrable. 
    Furthermore, suppose $X_1, \ldots, X_n$, $Y_1,\ldots, Y_n$ are independent random variables such that each $X_i \sim \mathcal{D}$ and each $Y_i \sim \mathcal{D}'$. Then, we have:
    \begin{enumerate}[leftmargin=*]
        \item $\mathbb{E}[X_1 \mid X_1 + Y_1 > \max_{i=2}^n X_i + Y_i] \geq \mathbb{E}[X_1 \mid X_1 + Y_1 \leq \max_{i=2}^n X_i + Y_i]$ if $\mathcal{D}'$ has a positive variance.
        \item $\mathbb{E}[X_1 \mid X_1 + Y_1 > \max_{i=2}^n X_i + Y_i] > \mathbb{E}[X_1 \mid X_1 + Y_1 \leq \max_{i=2}^n X_i + Y_i]$ if $\mathcal{D}$ has a positive variance.%\paritosh{It seems that there should be no condition on $\mathcal{D}'$, i.e., $(2)$ should hold for singular continuous distributions $\mathcal{D}'$ as well.}
        %doesn't have point masses, 
        %then $\mathbb{E}[X_1 \mid X_1 + Y_1 > \max_{i=2}^n X_i + Y_i] > \mathbb{E}[X_1 \mid X_1 + Y_1 \leq \max_{i=2}^n X_i + Y_i]$.
    \end{enumerate}
\end{restatable}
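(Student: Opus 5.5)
The plan is to reduce both parts to a correlation statement about $X_1$ and the indicator of the winning event, and then apply \Cref{lemma:fkg} or \Cref{lemma:strict-fkg}.

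Write $M := \max_{i=2}^n (X_i + Y_i)$. It is independent of $(X_1, Y_1)$. Let $E$ be the event $\{X_1 + Y_1 > M\}$ and let $p := \P(E)$. Assume $0 < p < 1$, so that both conditional expectations are defined. A direct calculation gives
\[
\E[X_1 \mid E] - \E[X_1 \mid E^c] = \frac{\E[X_1 \I_E] - p\,\E[X_1]}{p(1-p)} .
\]
So both claims reduce to showing $\mathrm{Cov}(X_1, \I_E) \ge 0$, respectively $> 0$.

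To compute this covariance, condition on $X_1$. Define $h(x) := \P(x + Y_1 > M)$, where the probability is over the independent pair $(Y_1, M)$. By independence, $\E[X_1 \I_E] = \E[X_1 h(X_1)]$ and $p = \E[h(X_1)]$. Hence $\mathrm{Cov}(X_1, \I_E) = \E[X_1 h(X_1)] - \E[X_1]\E[h(X_1)]$.

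Both $f(x) = x$ and $h$ are nondecreasing. \Cref{lemma:fkg} therefore gives the weak inequality. This yields part 1 whenever the conditioning is well defined; the variance assumption there only serves to make it so.

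For part 2, I would use \Cref{lemma:strict-fkg}. Since $\var(X_1) > 0$, it suffices to show $\var(h(X_1)) > 0$, i.e., that $h$ is not almost surely constant under $\mathcal{D}$. This is the main obstacle. The function $h(x) = \P(Y_1 - M > -x)$ is the tail function of $W := Y_1 - M$, evaluated at $-x$. It is constant on $\mathrm{supp}(\mathcal{D})$ only if $W$ places no mass in $[-x', -x)$ for any two support points $x < x'$ of $\mathcal{D}$.

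I would argue by contradiction. Suppose $h(X_1)$ is a.s. constant, equal to $c$. Then $p = c$, and $0 < c < 1$ by assumption. Take two support points $a < a'$ of $\mathcal{D}$, which exist because $\var(X_1) > 0$, and chosen so that $h(a) = h(a') = c$. Then $\P(W \in [-a', -a)) = 0$.

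Next I would exploit the structure $M = \max_i (X_i + Y_i)$, where the $X_i$ for $i \ge 2$ also have law $\mathcal{D}$. Write $Y_1 - M = \min_{i \ge 2}(Y_1 - Y_i - X_i)$. Replacing some $X_i$ by $a$ versus $a'$ shifts $W$ by the gap $a' - a$, which is exactly the length of the forbidden interval. Iterating this shift over the support of $\mathcal{D}$ forces $W$ to avoid an unbounded set of intervals on one side. That in turn pushes $c$ to $0$ or $1$, contradicting $0 < c < 1$.

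If this shifting argument becomes messy, my fallback is to handle the problem per realization. Fix $(Y_1, Y_2, \dots, Y_n, X_3, \dots, X_n)$. The indicator is then a nondecreasing function of $X_1$ and a nonincreasing function of $X_2$. Applying the exchangeability of $X_1$ and $X_2$, together with \Cref{lemma:strict-fkg}, on a set of positive probability where the indicator genuinely varies should also yield strictness. In either approach, I expect establishing the non-degeneracy of $h$ to be the crux.
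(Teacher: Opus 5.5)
Your reduction is correct, and it is the paper's argument up to the order of conditioning. Part 1 follows from \Cref{lemma:fkg} applied to $x\mapsto x$ and $h$. Well-definedness is easy: $0<p\le 1/n$, because the events ``$i$ strictly wins'' are disjoint and exchangeable, and $X_i+Y_i$ is non-degenerate. The paper conditions on $Z:=M-Y_1$ and invokes \Cref{lemma:strict-fkg} for each fixed $Z$, so it needs $\P_Z[\P_{X_1}(X_1>Z)\in(0,1)]>0$. You integrate $Z$ out and need $\var(h(X_1))>0$ with $h(x)=\P(Z<x)$; this is the same non-degeneracy in dual form.

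The gap is that you never prove this non-degeneracy, and your sketch would not. ``Replacing $X_i$ by $a$ versus $a'$'' conditions on a null event unless $a,a'$ are atoms, and it shifts $W$ by $a'-a$ only if index $i$ remains the minimizer. More fundamentally, avoiding many windows forces $c\in\{0,1\}$ only if the windows cover a half-line, and the shifting does not deliver that. Take $\mathcal{D}$ uniform on $\{0,1\}$ and $\mathcal{D}'$ supported on $L\mathbb{Z}$ for large $L$: your $W=Y_1-M$ avoids every interval $(kL-1,kL)$, yet $0<p\le 1/n$. What makes $h$ non-constant there is that $W$ charges the window $(-1,0]$ itself, at its endpoint $W=0$, through ties among the $Y$'s. (Note $h(a')-h(a)=\P(-a'<W\le -a)$, not $\P(W\in[-a',-a))$.) Your fallback rests on the same unproved fact.

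The missing idea is to trap $Z$ inside the range of $X_1$ on an explicit event of positive probability.
\begin{itemize}
\item Any law charges some half-open interval $I$ of length $\eta$. Consider the event $\{Y_1,\dots,Y_n\in I,\ Y_1\le\max_{i\ge2}Y_i,\ X_2,\dots,X_n\in[b,b+\eta)\}$. By exchangeability of $Y_1,Y_2$, its probability is at least $\tfrac12\P(Y\in I)^n\,\P(X\in[b,b+\eta))^{n-1}$.
\item On this event, $b\le Z<b+2\eta$. For the lower bound, use $Z\ge X_k+Y_k-Y_1\ge X_k$ with $k\in\argmax_{i\ge2}Y_i$.
\item Let $s<t$ be the essential infimum and supremum of $\mathcal{D}$. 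If $\P(s<X<t)>0$, partition a compact subinterval of $(s,t)$ of positive mass. This gives $b,\eta$ with $[b,b+2\eta)\subset(s,t)$ and $\P(X\in[b,b+\eta))>0$; put $x_1=b$, $x_2=b+2\eta$.
\item Otherwise $\mathcal{D}$ is a two-point law on $\{s,t\}$; put $b=x_1=s$, $x_2=t$, with $2\eta<t-s$.
\item In both cases $\P(X_1\le x_1)>0$, $\P(X_1\ge x_2)>0$, and $h(x_2)-h(x_1)=\P(x_1\le Z<x_2)>0$. So $h(X_1)$ is non-degenerate, and \Cref{lemma:strict-fkg} yields part 2.
\end{itemize}
Incidentally, the paper's own event $\{Y_1\ge\max_{i\ge2}Y_i\}$ only gives $Z\le\max_{i\ge2}X_i$. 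That yields $\P(X_1>Z)>0$ but not $\P(X_1>Z)<1$, so that step also needs two-sided trapping of this kind.
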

\begin{proof}%%% AM: Try this proof for part (1), possibly part (2)
Define $\mathcal{E}_>$ to be the event $[X_1 + Y_1 > \max_{i=2}^n X_i + Y_i]$, and $\mathcal{E}_\leq$ to be the event $[X_1 + Y_1 \leq \max_{i=2}^n X_i + Y_i]$. 

\paragraph{(1)} If $\mathcal{D}'$ has a positive variance, then the random variables $X_i + Y_i$ have a positive variance. Positive variance implies that $\P(\mathcal{E}_>), \P(\mathcal{E}_\leq) > 0$, i.e., the conditional expectations $\E[X_1 \mid \mathcal{E}_>],\E[X_1 \mid \mathcal{E}_\leq]$ are well defined. 
%First, argue that $\mathbb{P}(\mathcal{E}_>) \in (0,1)$. 
To prove $(1)$, it suffices to show that
$$ \mathbb{E}[X_1 \mid \mathcal{E}_>] \geq \mathbb{E}[X_1] = \mathbb{E}[X_1 \mid \mathcal{E}_>] \mathbb{P}(\mathcal{E}_>) + \mathbb{E}[X_1 \mid \mathcal{E}_{\leq}] \P(\mathcal{E}_{\leq}) , $$
or equivalently, 
$$ \mathbb{E}[X_1 \mid \mathcal{E}_>] = \frac{\mathbb{E}[X_1 \I_{\mathcal{E}_>}]}{\P(\mathcal{E}_>)} \geq \mathbb{E}[X_1] . $$
This can be written as
$$ \mathbb{E}[X_1 \I_{\mathcal{E}_>}] \geq \mathbb{E}[X_1] \mathbb{E}[\I_{\mathcal{E}_>}] . $$
Define $h : \mathbb{R} \rightarrow \mathbb{R}$, $h(x) = x$ and $g : \mathbb{R}^2 \rightarrow \{0,1\}$, $g(x,z) = \I\{x > z\}$. Additionally, let random variable $Z \coloneqq \max_{i=2}^n \left(X_i + (Y_i - Y_1)\right)$; note that $\I_{\mathcal{E}_>} = \I[X_1 > Z]$. Then, we want to show that 
$$ \mathbb{E}[h(X_1) g(X_1,Z)] \geq \mathbb{E}[h(X_1)] \mathbb{E}[g(X_1,Z)] . $$
To prove this, we condition on $Z$. Then, using the association inequality (\Cref{lemma:fkg}) we have that
$$ \mathbb{E}[h(X_1) g(X_1,Z)\mid Z] \geq \mathbb{E}[h(X_1)\mid Z] \mathbb{E}[g(X_1,Z)\mid Z] = \mathbb{E}[h(X_1)] \mathbb{E}[g(X_1,Z)\mid Z] , $$
where we remove the conditioning in $\E[h(X_1) \mid Z]$ due to independence. Then, taking expectations with respect to $Z$ gives us,
$$ \mathbb{E}[h(X_1) g(X_1,Z)] \geq \mathbb{E}[h(X_1)] \mathbb{E}[g(X_1,Z)] . $$

\paragraph{(2)} Based on the previous part, in order to prove $(2)$, we need to establish $\E[h(X_1) g(X_1,Z)] > \E[X_1]\E[g(X_1,Z)]$. Similar to the above, we condition on a fixed value on $Z$. \Cref{lemma:strict-fkg} implies that conditioned on $Z$, the strict inequality
\begin{align}
\mathbb{E}[h(X_1) g(X_1,Z)\mid Z] > \mathbb{E}[h(X_1)] \mathbb{E}[g(X_1,Z)\mid Z],
\end{align}
holds if and only if $\var(h(X_1)) = \var(X_1) > 0$ and $\var(\I\{X_1 > Z\}) = \P[X_1 > Z] (1 - \P[X_1 > Z]) > 0$; here, the former condition is true as per our assumption on $\mathcal{D}$, and the latter condition is equivalent to $\P[X_1 > Z] \in (0,1)$. Hence, for a fixed value of $Z$, if $\P[X_1 > Z] \in (0,1)$, then, and only then, we have $\mathbb{E}[h(X_1) g(X_1,Z)\mid Z] > \mathbb{E}[h(X_1)] \mathbb{E}[g(X_1,Z)\mid Z]$. The final step of the proof will be to show that the strict inequality persists after taking expectations with respect to $Z$. To argue this, we will show that the probability 
$$\P_Z[\P_{X_1}[X_1 > Z] \in (0,1)] > 0,$$
i.e., with positive probability over selecting $Z$, we get a strict inequality conditioned on $Z$, thereby ensuring that the final inequality $\E[h(X_1) g(X_1,Z)] > \E[X_1]\E[g(X_1,Z)]$ obtained after taking expectations with respect to $Z$ is strict as well. % Hence, to complete the proof, we will now show that $\P_Z[\P_{X_1}[X_1 > Z] \in (0,1)] > 0$. Recall that, $Z = \max_{i=2}^n \left(X_i + (Y_i - Y_1)\right)$.

Let $\mathcal{E}$ be the event that $Y_1 \ge \max_{i=2}^n Y_2$. Since $Y_i$s are iid drawn from $\mathcal{D}'$, we have $\P[\mathcal{E}] = \P[Y_1 \ge \max_{i=2}^n Y_i] \ge \frac{1}{n} > 0$. Additionally, conditioned on $\mathcal{E}$, we have $Z \le \max_{i=2}^n X_i$. Hence,
\begin{align*}
    \P_Z[\P_{X_1}[X_1 > Z] \in (0,1)] &\ge \P[\mathcal{E}] \cdot \P_Z[\P_{X_1}[X_1 > Z] \in (0,1) \mid \mathcal{E}] \\
    &\ge \P[\mathcal{E}] \cdot \P_Z[\P_{X_1}[X_1 > \max_{i=2}^n X_i] \in (0,1) \mid \mathcal{E}] \\
    &= \P[\mathcal{E}] \cdot \P_{X_2, \dots, X_n}[\P_{X_1}[X_1 > \max_{i=2}^n X_i] \in (0,1)],
\end{align*}

where we drop the conditioning in the second term because $X_i$s and $Y_i$s are independent and $\mathcal{E}$ only depends on $Y_i$s. Finally, note $\P[\mathcal{E}] > 0$, and the second term in the product above is also positive since the variance of $\mathcal{D}$ is positive. This implies that $\P_Z[\P_{X_1}[X_1 > Z] \in (0,1)]$ is positive and completes the proof.
\end{proof}

We now state the main result of this section, ~\Cref{theorem:observed-value-maximization}; its proof is deferred to~\Cref{appendix:stochastic-val-stochastic-err}.

\begin{restatable}{theorem}{observedValueMaximization}\label{theorem:observed-value-maximization}
Let $\mathcal{D}$ and $\mathcal{D}'$ be two fixed distributions such that $\mathcal{D}$ is continuous and has a positive variance ($\mathcal{D}'$ could be arbitrary). Assume that $\vtrue_{i,j} \sim \mathcal{D}$ for all $i \in [n]$, $j \in [m]$, and we observe the noisy estimates $\vestimate_{i,j} = \vtrue_{i,j} + \eta_{i,j}$ where $\eta_{i,j}$ for all $i \in [n]$, $j \in [m]$. Then, running the welfare maximization algorithm on the observed noisy values $\{\vestimate_{i,j}\}_{i,j}$, computes an envy-free allocation with probability at least $1 - O(1/m)$ if $m/\log{m} \in \Omega(n)$.
\end{restatable}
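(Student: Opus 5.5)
The plan follows the proof of \cite{dickerson2014computational}, with \Cref{lemma:noisy-order-statistic} supplying the gap in expectations. Welfare maximization on the observed values gives each item $j$ to $\argmax_k \vestimate_{k,j}$; ties have probability zero because $\mathcal{D}$ is continuous, so $\vestimate_{k,j}$ is continuous. Items are i.i.d., so each item goes to a uniformly random agent independently across items. Fix an ordered pair of agents $(i,i')$ and write the envy as a sum of i.i.d.\ terms,
\begin{align*}
\vtrue_i(A_{i'}) - \vtrue_i(A_i) = \sum_{j=1}^m W_j, \qquad W_j := \vtrue_{i,j}\left(\I[j \in A_{i'}] - \I[j\in A_i]\right).
\end{align*}
It then suffices to show $\E[W_j] \le -c/n$ for a constant $c>0$ depending only on $\mathcal{D},\mathcal{D}'$. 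A concentration bound followed by a union bound over the $n^2$ ordered pairs finishes the argument.

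For the expectation, let $\mu_> = \E[\vtrue_{i,j} \mid i \text{ wins } j]$ and $\mu_\le = \E[\vtrue_{i,j}\mid i \text{ does not win } j]$. By \Cref{lemma:noisy-order-statistic}(2), $\Delta:=\mu_>-\mu_\le>0$; this uses only that $\mathcal{D}$ has positive variance. By symmetry, conditioning on a specific other agent $i'$ winning gives the same conditional mean $\mu_\le$ as conditioning on ``$i$ does not win''. Indeed, the event ``$i$ does not win'' is the disjoint union over $i'\neq i$ of the events ``$i'$ wins'', and these are exchangeable with respect to $\vtrue_{i,j}$. Hence $\E[W_j] = \frac1n(\mu_\le - \mu_>) = -\Delta/n$.

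The main obstacle is concentration, because $\mathcal{D}$ need not be bounded and $\Delta$ depends on $n$. I will handle these two issues as follows.
\begin{enumerate}[leftmargin=*]
\item \emph{Unbounded support.} If $\mathcal{D}$ is bounded, Hoeffding on $W_j\in[-b,b]$ applies directly. Otherwise I would first condition on the event that all $\vtrue_{i,j}$ lie in a fixed interval, or truncate and absorb the error. Either way, Bernstein's inequality is preferable, using $\var(W_j)\le \frac{2}{n}\E[\vtrue^2]$, since $W_j\neq 0$ only with probability $2/n$.
\item \emph{Dependence of $\Delta$ on $n$.} \Cref{lemma:noisy-order-statistic} gives $\Delta>0$ for each $n$, but the argument needs $\Delta$ bounded below uniformly in $n$. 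As $n$ grows, winning item $j$ concentrates on large $\vestimate_{i,j}$, so $\mu_>$ should be nondecreasing in $n$ while $\mu_\le\to\E[\vtrue]$. I would try to show $\Delta_n \ge \Delta_2 >0$ by a monotone-coupling argument: the threshold $Z$ from the proof of \Cref{lemma:noisy-order-statistic} stochastically increases with $n$, and $\E[X_1\mid X_1>Z=z]-\E[X_1]$ is nondecreasing in $z$. If that fails, I would treat $n$ as fixed and let the constants depend on it.
\end{enumerate}

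With $\E[\sum_j W_j]=-\Delta m/n$ and variance $O(m/n)$, Bernstein gives
\begin{align*}
\P\left[\textstyle\sum_j W_j \ge 0\right] \le \exp\left(-\Omega(m/n)\right).
\end{align*}
Since $m/\log m\in\Omega(n)$ with a large enough constant, this is at most $m^{-3}$. A union bound over the $n^2\le m^2$ ordered pairs then yields envy-freeness with probability $1-O(1/m)$.
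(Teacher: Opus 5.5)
Your core argument is the paper's: \Cref{lemma:noisy-order-statistic}(2) plus symmetry gives a per-item gap of $\Delta/n$ between agent $i$'s expected value for its own bundle and for another agent's bundle. A concentration bound with exponent of order $m/n$ then handles each ordered pair, and a union bound finishes. The paper instead splits the envy event into two one-sided multiplicative Chernoff bounds around the midpoint $\frac{m}{n}\cdot\frac{\mu+\mu'}{2}$ rather than applying Bernstein to $\sum_j W_j$; the effect is the same, and yours does not need $\vtrue\ge 0$ or $\mu'>0$. Your exchangeability argument for $\E[\vtrue_{i,j}\mid i' \text{ wins}]=\mu_\le$ makes explicit a step the paper uses without comment.

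The monotone-coupling idea in your item~2 does not work, however. Since $Z_n=\max_{i\ge 2}(X_i+Y_i)-Y_1$ is independent of $X_1$, write $h(z)=\E[X_1\mid X_1>z]$ and $\bar F(z)=\P(X_1>z)$. Then $\mu_{>,n}=\E[\bar F(Z_n)h(Z_n)]/\E[\bar F(Z_n)]$. This is not $\E[h(Z_n)]$ but an average of $h$ reweighted by the decreasing factor $\bar F$, so stochastic monotonicity of $Z_n$ does not transfer. Equivalently, $\mu_{>,n}=\E[\phi(S_{(n)})]$, where $\phi(s)=\E[X_1\mid X_1+Y_1=s]$ and $S_{(n)}$ is the maximum of $n$ i.i.d.\ copies of $X+Y$; without log-concavity-type assumptions, $\phi$ need not be monotone.

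In fact no uniform-in-$n$ lower bound exists for arbitrary $\mathcal{D}'$. Take $X\sim U[0,1]$ and Pareto noise of index $\alpha$. Then $\phi(s)-\E[X]\approx(\alpha+1)/(12s)$ for large $s$, so $\Delta_n=\frac{n}{n-1}(\mu_{>,n}-\E[X])=\Theta(n^{-1/\alpha})\to 0$. With $m=\Theta(n\log m)$, the mean $-m\Delta_n/n$ of $\sum_j W_j$ is swamped by its standard deviation $\Theta(\sqrt{m/n})$. So only your fallback (constants depending on $n$ through $\Delta_n$) is viable. That is exactly what the paper's proof does when it declares $\mu,\mu'$, and hence $\delta$, ``fixed''.

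Two smaller points on concentration. Plain Hoeffding on $W_j\in[-b,b]$ gives only $\exp(-\Omega(m\Delta^2/(n^2b^2)))$, which needs $m=\Omega(n^2\log m)$; the variance-sensitive bound is essential, not merely preferable. Also, $\var(W_j)\le\frac{2}{n}\E[\vtrue^2]$ does not follow from $\P(W_j\neq 0)=2/n$, because that event is positively correlated with $\vtrue_{i,j}$. For example, with exponential values and no noise, $\E[\vtrue_{i,j}^2\mid i\text{ wins}]$ grows like $\log^2 n$. For $|\vtrue|\le b$ the bound $\E[W_j^2]\le 2b^2/n$ is fine, and that bounded ($[0,1]$-valued) case is all the paper's Chernoff step actually covers.
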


\subsection{Stochastic Values and Worst-case  Noise}\label{section:stochastic-val-worst-case-errors}
We now consider a general setting where the value vector for each item $j$, $\vtrue_j = (\vtrue_{1, j}, \ldots, \vtrue_{n, j})$ is drawn i.i.d.\@ from a bounded vector-valued distribution $\D$ over $[0, b]^n$ for some bound $b$. This framework allows the agents' valuations for a specific item to be arbitrarily correlated (as the components of a single draw from $\D$). However, values across items are independent.

We show that if $\D$ satisfies a mild condition --- essentially that no two agents are ``identical'' up to scaling --- and if the number of items $m$ is sufficiently large, an efficient algorithm can use noisy estimates to find an envy-free allocation with high probability. The condition is the following: Let $X = (X_1, \ldots, X_N) \sim \D$ be a random vector. We require that $\Pr[X_i = c \cdot X_j] = 0$ for all $i \ne j$ and constants $c \in \mathbb{R}$. 

\begin{restatable}{theorem}{stochasticValuesWorstNoise}\label{theorem:stochastic-value-3-worst-case-noise}
Let $\D$ be a bounded nonnegative distribution satisfying the non-identical condition. 
Assume that $\vtrue_{j} \sim \D$, independently, for all items $j$.
For each $i\in \agents$, $j \in \items$, let $\vestimate_{i,j}$ be a noisy estimate of $\vtrue_{i,j}$ such that for some $\varepsilon > 0$, we have $\max_{i,j}|\vtrue_{i,j} - \vestimate_{i,j}| \leq \varepsilon$. There is a constant $c_\D$ depending only on $\D$ and an algorithm such that if $\varepsilon \leq c_\D$, given the noisy estimates $\vestimate_{i,j}$ for each $i,j$, the algorithm outputs an allocation $\alloc = (A_1, \ldots, A_n)$ such that $\alloc$ is envy-free with probability at least $1 - \exp(-\Omega_\D(m))$.
\end{restatable}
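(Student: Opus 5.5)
The plan is to go through strongly envy-free allocations in an associated fractional (cake-cutting-like) problem, and then round them robustly.

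\emph{Step 1 (a strongly envy-free fractional rule).} View $\D$ as a ``cake'': a fractional allocation is a measurable rule $\phi:[0,b]^n\to\Delta_n$ assigning each value vector $x$ a distribution over agents. Agent $i$'s expected value for agent $k$'s share is $V_{i,k}(\phi)=\E_{X\sim\D}[X_i\,\phi_k(X)]$. The non-identical condition says no two agents' measures $x\mapsto x_i\,d\D$ are proportional. This is exactly the hypothesis of the existence result of \cite{barbanel2005geometry}, adapted to this setting. It gives a rule $\phi^\star$ and a gap $\gamma_\D>0$ with
\[
V_{i,i}(\phi^\star)\ \ge\ V_{i,k}(\phi^\star)+\gamma_\D \qquad\text{for all } i\neq k .
\]
I would further want $\phi^\star$ to be robust to perturbing its input. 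Concretely, I would pick $\phi^\star$ as a weighted argmax rule, $\phi^\star(x)=\arg\max_i w_i x_i$ with ties having measure zero, or a smoothed version of it. I then need that replacing $x$ by some $\hat x$ with $\|\hat x-x\|_\infty\le\varepsilon$ changes each $V_{i,k}$ by at most $\gamma_\D/4$ once $\varepsilon\le c_\D$. Obtaining this robustness is the main obstacle. I would first try to show that the strongly EF point can be taken to be a weighted-argmax (competitive-equilibrium-like) rule, since Barbanel's geometry is built around such supporting hyperplanes of the IPS. Continuity then follows because the mass of $\D$ near the tie hyperplanes $\{w_ix_i=w_kx_k\}$ vanishes as the margin shrinks. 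That vanishing uses the non-identical condition, which rules out positive mass on those hyperplanes.

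\emph{Step 2 (algorithm).} Given $\hat v$, the algorithm allocates each item $j$ independently: with a fixed rule $\psi$ (deterministic weighted argmax, with random tie-breaking), it gives item $j$ to $\psi(\hat v_j)$. The rule $\psi$ and the weights depend only on $\D$. The allocation depends on $\hat v$, and the adversary may choose $\hat v$ adaptively. So I would bound, deterministically, the contribution of items where the rule's decision could differ between $v^*_j$ and $\hat v_j$. These are the items with $v^*_j$ within distance $O(\varepsilon)$ of a tie hyperplane.

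\emph{Step 3 (concentration).} Let $T_j=\mathbb{1}[v^*_j \text{ is within } O(\varepsilon)\text{ of a tie}]$. By Step 1, $\E[T_j]\le\gamma_\D/(8b)$ when $\varepsilon\le c_\D$. For every $i,k$, the quantity $\sum_j v^*_{i,j}(\mathbb{1}[\psi(v^*_j)=k]-\mathbb{1}[\psi(v^*_j)=i])$ is a sum of i.i.d.\ terms bounded by $b$ with mean at most $-\gamma_\D$. Hoeffding's inequality puts it below $-m\gamma_\D/2$ with probability $1-\exp(-\Omega_\D(m))$, and likewise $\sum_j T_j\le m\gamma_\D/(4b)$. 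On these events, any adversarial $\hat v$ changes each agent's value for any bundle by at most $b\sum_jT_j\le m\gamma_\D/4$. Hence envy in the true allocation is at most $-m\gamma_\D/2+2\cdot m\gamma_\D/4\le 0$. A union bound over the $n^2$ pairs then finishes the proof, with $n$ treated as a constant absorbed into $\Omega_\D$.
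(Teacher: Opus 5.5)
\textbf{Gap in Step 1.} Steps 2 and 3 are sound \emph{provided} Step 1 delivers a strongly envy-free rule of the form $\phi^\star(x)=\argmax_i w_i x_i$ with null tie sets. Given that, misallocation can only occur on items whose true vector lies within $O(\varepsilon)$ of a tie hyperplane, and both Hoeffding bounds involve only $\vtrue$, so adaptively chosen noise is handled correctly. The problem is that Step 1 never establishes this form: you only say you would try, and the justification offered does not supply it. Barbanel's Theorem 5.5(b), which is what the paper uses, is a pure existence statement. It yields some partition $B_1,\dots,B_n$ of $[0,1]$, and pulled back to $[0,b]^n$ this is an arbitrary measurable, possibly fractional, rule with no continuity properties. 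If, say, one agent's cell is dense, an adversary with $\|\vestimate_j-\vtrue_j\|_\infty\le\varepsilon$ can steer every item. Then ``near a tie'' has no meaning and Step 3 collapses.

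\textbf{How to close it, and comparison with the paper.} Produce the weighted-argmax rule directly, bypassing Barbanel, via the maximum-Nash-welfare fractional rule (Eisenberg--Gale, equivalently Weller's CEEI):
\begin{itemize}
\item Maximize $\sum_i\log u_i$, where $u_i=\E[X_i\phi_i(X)]$, over measurable $\phi:[0,b]^n\to\Delta_n$. A maximizer exists by weak$^*$ compactness. All $u_i>0$, since $\Pr[X_i=0\cdot X_k]=0$ gives $\E[X_i]>0$.
\item Shifting mass $t\phi_k$ from $k$ to $i$ on an arbitrary set gives the first-order condition $X_i/u_i\le X_k/u_k$ a.s.\ on $\{\phi_k(X)>0\}$.
\item The non-identical condition with $c=u_i/u_k$ makes these ties null. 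Hence $\phi^\star$ is a.s.\ the deterministic weighted argmax with $w_i=1/u_i$.
\item Then $\E[X_i\phi^\star_k(X)]<(u_i/u_k)\,\E[X_k\phi^\star_k(X)]=u_i=\E[X_i\phi^\star_i(X)]$. The inequality is strict because $\Pr[\phi^\star_k(X)>0]>0$ and the ties are null.
\item Null ties also give $\Pr[|w_iX_i-w_kX_k|\le\eta]\to0$ as $\eta\to0$, which yields your $c_\D$.
\end{itemize}
With this, your proof is a valid alternative.

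The paper never needs a robust rule, because robustness is automatic at the level of allocations. For any fixed allocation, observed and true envy differ by at most $\varepsilon m$. So the $\Omega(m)$ true gap of some integral allocation (the Barbanel lemma) survives minimizing \emph{observed} envy, which the paper does via an LP plus randomized rounding. The paper's algorithm is oblivious to $\D$. Yours must know the weights $w$; the statement's quantifiers arguably permit this, but it is a weaker result. In exchange, yours is a trivial per-item rule that even works online.
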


The algorithm uses linear programming to find a strongly envy-free fractional allocation with respect to the noisy estimates $\vestimate$. It then applies randomized rounding to this fractional allocation to obtain the final integral allocation. The size of the linear program is polynomial in $n$ and $m$, with coefficients based on $\vestimate$. Thus, under the mild assumption that the estimates are polynomial in bit length (which must be the case if they were produced by a polynomial time algorithm), the theorem's algorithm runs in polynomial time.

The proof of \Cref{theorem:stochastic-value-3-worst-case-noise} (which appears in \Cref{appendix:stochastic-worst}) relies on establishing the existence of a strongly envy-free allocation in the underlying noiseless setting, which we formalize in the following lemma. This ``strong'' guarantee provides a buffer that is robust to the two sources of error: the bounded noise $\varepsilon$ in the estimates and the error introduced by randomized rounding. This ensures that the final allocation is envy-free with high probability. This existence result may be of independent interest.

\begin{restatable}{lemma}{BarbanelMeasureTheoryLemma}\label{lem: barbanel shit}
Let $\D$ be a distribution satisfying the above conditions.
Assume that $\vtrue_{j} \sim \D$, independently, for all items $j$. Then, there is a constant $c'_\D$ depending only on $\D$ such that with probability at least $1 - \exp(-\Omega_\D(m))$, there exists an allocation $\alloc^*$ such that, for all agents $i,j \in \agents$, $\vtrue_{i}(A^*_i) \geq \vtrue_{i}(A^*_j) + c'_\D \, m$.
\end{restatable}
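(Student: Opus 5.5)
The plan is to prove the statement first for a \emph{fractional}, distribution-level problem, and then transfer it to the random finite instance by concentration and rounding. View $\D$ as a ``cake'': a measure space (the support of $\D$) on which agent $i$ has the measure $\mu_i(S) = \E_{X\sim\D}[X_i \I\{X \in S\}]$. A fractional allocation of this cake is a measurable partition, or more generally a vector of measurable functions $\phi_1,\dots,\phi_n \ge 0$ with $\sum_k \phi_k = 1$. The non-identical condition $\Pr[X_i = cX_j]=0$ for all $c$ is exactly the hypothesis that makes the measures $\mu_i$ pairwise ``non-proportional'' in a strong sense. Concretely, for $i\neq j$ the ratio $X_i/X_j$ is non-degenerate on every set of positive $\mu$-mass, so there is no set on which $\mu_i$ and $\mu_j$ agree up to a constant.

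\textbf{Step 1 (strong envy-freeness for the cake).} I would invoke the Barbanel-style result from cake cutting: if the measures $\mu_1,\dots,\mu_n$ are pairwise non-equal after normalization, then there is a partition $(S_1,\dots,S_n)$ with $\mu_i(S_i) > \mu_i(S_k)$ for all $k\ne i$. One also needs $\mu_i \ne 0$; agents with $\mu_i = 0$ can be handled separately, or the paper's condition excludes them. The strict gaps $\mu_i(S_i)-\mu_i(S_k)$ are finitely many positive numbers, so their minimum $2\gamma_\D>0$ depends only on $\D$. To apply Barbanel's theorem I need non-atomicity; if $\D$ has atoms, I would use fractional splits $\phi_k$ instead of sets, since the existence argument (via the convexity of the Lyapunov range / the geometry of the IPS) works with fractional pieces. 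The direct version of Barbanel's argument would be tried first. I expect this step to be the main obstacle: one has to verify that the non-identical condition translates into Barbanel's hypothesis for all pairs simultaneously, including the normalization where $c$ ranges over ratios of total masses.

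\textbf{Step 2 (sampling).} Given the fractional rule $\phi$, assign each item $j$ independently to agent $k$ with probability $\phi_k(\vtrue_j)$. Then $\vtrue_i(A_k) = \sum_j \vtrue_{i,j}\I\{j\in A_k\}$ is a sum of $m$ i.i.d.\ terms bounded in $[0,b]$ with mean $\mu_i(\phi_k)$. By Hoeffding's inequality, $\vtrue_i(A_k) \in m\mu_i(\phi_k) \pm \gamma_\D m/2$ with probability at least $1-2\exp(-\gamma_\D^2 m/(2b^2))$. Taking a union bound over the $n^2$ pairs, with $n$ fixed by $\D$, this holds for all pairs simultaneously with probability $1-\exp(-\Omega_\D(m))$. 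On that event, $\vtrue_i(A_i)-\vtrue_i(A_k) \ge 2\gamma_\D m - \gamma_\D m = \gamma_\D m$, so the claim holds with $c'_\D=\gamma_\D$.

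\textbf{Step 3.} The randomness in the rounding is harmless for an existence statement. With probability $1-\exp(-\Omega_\D(m))$ over $\vtrue$ alone, the conditional probability over the rounding that the allocation is good is positive. This follows by Markov's inequality applied to the failure probability, or more simply by taking the joint probability and averaging. Hence, with that probability over $\vtrue$, some allocation $\alloc^*$ with the required gaps exists, which proves the lemma.
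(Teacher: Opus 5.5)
This is essentially the paper's proof: build a cake from $\D$, use Barbanel's Theorem~5.5(b) to get a partition with strictly positive gaps, then apply Hoeffding and a union bound over pairs. Your Step~3 also spells out the existence-from-joint-probability step that the paper leaves implicit. The only real difference is in how the cake is built. The paper writes $\D$ as the law of $f(U)$ with $U\sim\mathrm{Unif}[0,1]$ and takes $\mu_i(A)=\int_A f_i\,d\lambda$, which are automatically non-atomic, so Barbanel applies verbatim. Your fractional workaround for atoms is therefore unnecessary; it amounts to applying Barbanel to $\mu_i\otimes\lambda$ on your space times $[0,1]$. What you call the main obstacle is just Radon--Nikodym uniqueness: $\mu_i=c\,\mu_j$ would force $X_i=cX_j$ almost surely, and $c=0$ likewise excludes $\mu_i=0$, so such agents are ruled out by the hypothesis rather than handled separately.
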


The proof of \Cref{lem: barbanel shit} relies on the following theorem about the existence of strongly envy-free allocations in cake cutting. Recall, given a $\sigma$-algebra $W$ over a set $C$, a function $\mu: W \to \mathbb{R}_\ge 0$ is a \emph{non-atomic probability measure} if (a) $\mu_i(A) \ge 0$ for every $A \in W$, (b) $\mu_i(\emptyset) = 0$, (c) given a countable collection of pairwise disjoint elements of $W$, $A_1, A_2, \ldots$, $\mu_i(\bigcup_j A_j) = \sum_{j = 1}^\infty \mu_i(A_j)$, (d) if $\mu_i(A) > 0$, then for some $B \subseteq A$ with $B \in W$, $0 < \mu_i(B) < \mu_i(A)$, and (e) $\mu_i(C) = 1$.
\begin{theorem}[Theorem 5.5 (b) of \citet{barbanel2005geometry}]\label{thm:barbanel-original}
    Fix $n$ non-atomic probability measures $\mu_1, \ldots, \mu_n$ over a set $C$ with $\sigma$-algebra $W$. Suppose that no two measures are identical. Then, there exists a partition $(P_1, \ldots, P_n)$ of $C$ such that $\mu_i(P_i) > \mu_i(P_j)$ for all $i \ne j$.
\end{theorem}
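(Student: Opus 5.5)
We prove \Cref{thm:barbanel-original} by combining Lyapunov's convexity theorem with a separating-hyperplane argument. For a partition $P=(P_1,\ldots,P_n)$ of $C$, let $M(P)$ be the $n\times n$ matrix with entries $M(P)_{ij}=\mu_i(P_j)$. Let $K\subseteq\R^{n\times n}$ be the set of all such matrices. Let $U=\{M : M_{ii}>M_{ij} \text{ for all } i\neq j\}$. Then $U$ is an open convex cone, and the theorem says exactly that $K\cap U\neq\emptyset$.

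\textbf{Step 1 (convexity).} Since the $\mu_i$ are non-atomic, Lyapunov's theorem, in its partition-range form (Dvoretzky--Wald--Wolfowitz), shows that $K$ is convex and compact. Concretely, apply Lyapunov to the vector measure $(\mu_1,\ldots,\mu_n)$ and use the standard fact that ranges of partitions are convex.

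\textbf{Step 2 (separation).} Suppose for contradiction that $K\cap U=\emptyset$. Separating the convex set $K$ from the open convex set $U$ gives a nonzero linear functional $\lambda$ with $\sup_K\lambda\le\inf_U\lambda$. Because $U$ is a cone, $\inf_U\lambda$ is either $0$ or $-\infty$; the latter is impossible since $K\neq\emptyset$. Hence $\lambda\ge0$ on $U$ and $\lambda\le 0$ on $K$.

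We need the dual cone of $U$. Each row of $M\in U$ is constrained only through the differences $M_{ii}-M_{ij}$, and these differences can be made arbitrarily positive. Therefore, by Farkas,
\[
\lambda(M)=\sum_i\sum_{j\neq i}a_{ij}\,(M_{ii}-M_{ij})
\]
with all $a_{ij}\ge 0$ and not all zero. Regrouping by bundles, $\lambda(M(P))=\sum_j\nu_j(P_j)$, where
\[
\nu_j=r_j\mu_j-\sum_{i\neq j}a_{ij}\mu_i, \qquad r_j=\sum_{k\neq j}a_{jk}.
\]
These signed measures satisfy $\sum_j\nu_j=0$.

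\textbf{Step 3 (reduce to the densities).} Let $\mu=\sum_i\mu_i$ and let $f_j=d\nu_j/d\mu$. Then
\[
\sup_P\sum_j\nu_j(P_j)=\int\max_j f_j\,d\mu,
\]
which is attained by assigning each point to a maximizing index. This supremum is at most $0$. But $\sum_j f_j=0$ pointwise, so $\max_j f_j\ge 0$ pointwise. Hence $\max_j f_j=0$ $\mu$-a.e., and then every $f_j=0$ a.e., since the $f_j$ are all $\le 0$ and sum to $0$. Thus $\nu_j=0$ for all $j$, i.e.
\[
r_j\mu_j=\sum_{i\neq j}a_{ij}\mu_i \qquad \text{for every } j.
\]

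\textbf{Step 4 (contradiction with distinctness).} If $r_j=0$, evaluating the identity on $C$ gives $\sum_i a_{ij}=0$, so column $j$ of $a$ vanishes.

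Let $T=\{j: r_j>0\}$. It is nonempty because $a\neq0$. It is closed under in-edges: if $a_{ij}>0$ then $r_i>0$, and a nonzero column forces $r_j>0$. So for $j\in T$, $\mu_j$ is a convex combination of measures $\mu_i$ with $i\in T$ and $i\neq j$.

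Now take $j\in T$ such that $\mu_j$ is an extreme point of the convex hull of the finite set $\{\mu_i:i\in T\}$, viewed in the vector space of signed measures. Since the measures are pairwise distinct, $\mu_j$ cannot be a convex combination of the others, which is a contradiction.

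I expect the main care to be needed in Step 2: identifying the dual cone of $U$ correctly, and making sure the separation is legitimate for a finite-dimensional convex set against an open cone. The non-atomicity hypothesis enters only through Step 1 and the density argument of Step 3.
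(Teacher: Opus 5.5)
The paper contains no proof of this statement. It quotes the result from Barbanel's book and uses it as a black box, so there is no in-paper argument to compare against. Your proof is a correct, self-contained replacement.

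The steps check out:
\begin{itemize}
\item Separation gives $\lambda\ge 0$ on $U$ and $\lambda\le 0$ on $K$.
\item Since $\overline{U}=\{M: M_{ii}\ge M_{ij}\ \forall i\ne j\}$ is polyhedral, Farkas gives nonnegative weights $a\ne 0$.
\item The facts $\sum_j\nu_j=0$ and $\int\max_j f_j\,d\mu\le 0$ together force every $\nu_j=0$.
\item Evaluating $r_j\mu_j=\sum_{i\ne j}a_{ij}\mu_i$ on $C$ gives $r_j=\sum_{i\ne j}a_{ij}$. So for $j\in T$ the weights $a_{ij}/r_j$ form a genuine convex combination of measures $\mu_i\ne\mu_j$ with $i\in T$, which an extreme point of $\mathrm{conv}\{\mu_i:i\in T\}$ cannot be.
\end{itemize}

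What your route buys over the citation is a transparent account of the hypotheses. The only obstruction is a certificate $a\ge 0$, $a\ne 0$ with $r_j\mu_j=\sum_{i\ne j}a_{ij}\mu_i$. Pairwise distinctness is used solely to rule such certificates out, and non-atomicity is used solely in Step 1.

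Your closing remark that non-atomicity also enters Step 3 is slightly off. Radon--Nikodym derivatives with respect to $\mu=\sum_i\mu_i$ exist for any finite measures. If you run the same argument over fractional partitions (measurable $\phi_j\ge 0$ with $\sum_j\phi_j=1$, a trivially convex set over which the supremum is again $\int\max_j f_j\,d\mu$), you get strongly envy-free fractional divisions for any pairwise distinct probability measures. Non-atomicity is then needed only to round these to true partitions.

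Two further small points about Step 1:
\begin{itemize}
\item Compactness of $K$ is never used.
\item The ``standard fact'' deserves its one line. For partitions $P,Q$ and $t\in[0,1]$, Lyapunov applied to $(\mu_1,\ldots,\mu_n)$ on each cell $P_j\cap Q_k$ yields $S_{jk}\subseteq P_j\cap Q_k$ with $\mu_i(S_{jk})=t\,\mu_i(P_j\cap Q_k)$ for all $i$. Giving $S_{jk}$ to $j$ and the rest of the cell to $k$ produces a partition with matrix $tM(P)+(1-t)M(Q)$.
\end{itemize}
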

We are now ready to prove the lemma.
\begin{proof}[Proof of \Cref{lem: barbanel shit}]

Fix such a distribution $\D$ satisfying the conditions. Let $X = (X_1, \ldots, X_n) \sim \D$ be a random vector. 

The key to this lemma is to represent $\mathcal{D}$ as $n$ different valuations over the unit interval $[0,1]$ and to find a \emph{strongly envy-free allocation} of $[0,1]$. This allocation will correspond to a strongly envy-free randomized allocation. By taking a sufficiently large number of samples, we can convert this into a strongly envy-free integral allocation using concentration inequalities.

Recall that $X_i$ is bounded in $[0, b]$ for some bound $b$.

By Lemma 4.22 of \cite{kallenberg1997foundations}, there exists a measurable function $f: [0,1] \to [0, b]$ such that $f(U)$ has distribution $\D$ when $U \sim \text{Unif}[0,1]$. 
For each $i$, let $f_i$ denote the $i$-th coordinate function of $f$, i.e., $f_i: [0,1] \to[0, b]$. Define the following $n$ set functions $\mu_1, \dots, \mu_n$ on the Borel sigma-algebra $\mathcal{B}([0,1])$:
\[
\mu_i(A) = \int_A f_i \, d\lambda, \quad \text{for all } A \in \mathcal{B}([0,1]),
\]
where $\lambda$ is the Lebesgue measure. Since $\E[|X_i|]$ is finite, each $f_i$ is integrable, and $\mu_i$ is a measure on $[0, 1]$. It is nonnegative as $f_i$ is nonnegative. Since $f_i$ is the Radon-Nikodym derivative of $\mu_i$ with respect to Lebesgue measure, each $\mu_i$ is absolutely continuous and hence nonatomic. We have that $\mu_i(\emptyset) = 0$, and it satisfies countable additivity by properties of Lebesgue integration.

For each $i$, let $\mu'_i(A) = \mu_i(A)/\mu_i([0, 1])$ be its normalized measure. Note that this is well-defined, as if $\mu_i([0, 1]) = 0$, then $\mu_i(A) = 0$ for all Borel $A$, which implies that $f_i(U) = 0$ with probability $1$. This means that $\Pr[X_i = 0 \cdot X_j] = 1$ for $j \ne i$, contradicting the assumption on $\D$.

The Radon-Nikodym derivative of $\mu'_i$ is $f'_i(x) = f_i(x) / \mu_i([0, 1])$. If $\mu'_i = \mu'_j$, their derivatives must be equal almost everywhere (see Theorem 3.8 of \cite{folland1999real}), i.e., $f'_i = f'_j$ almost everywhere. This implies that $f_i(x) = c \cdot f_j(x)$ for $c = \mu_i([0, 1])/\mu_j([0, 1])$ almost everywhere on $[0, 1]$, which again implies that $\Pr[X_i = c \cdot X_j] = 1$, breaking our condition on $\D$.

Using \Cref{thm:barbanel-original}, there is a partition of $[0, 1]$ among the agents $B_1, \ldots, B_n$ such that $\mu'_i(B_i) > \mu'_i(B_j)$ for $i \ne j$. Note that this immediately implies that $\mu_i(B_i) \ge \mu_i(B_j)$ for all $i \ne j$ as well.

Define:
\[
\varepsilon = \min_{i \neq j} (\mu_i(B_i) - \mu_i(B_j)),
\]
which is positive since it is the minimum over a finite number of strictly positive values.

We now interpret $\mu_1, \dots, \mu_n$ as valuation functions over $[0,1]$. 
We observe:
\[
\mathbb{E}[f_i(U) \cdot \I[U \in B_i]] = \int_{B_i} f_i \, d\lambda = \mu_i(B_i),
\]
and thus for any $i \ne j$:
\[
\mathbb{E}[f_i(U) \cdot (\I[U \in B_i] - \I[U \in B_j])] = \mu_i(B_i) - \mu_i(B_j) \geq \varepsilon.
\]

We now sample $m$ i.i.d.\@ draws $X^1, \dots, X^m \sim \mathcal{D}$ and show that with high probability, we obtain a strongly envy-free allocation.
We can couple these draws with uniform draws by sampling $U^1, \dots, U^m \sim \text{Unif}[0,1]$ i.i.d., and setting
\[
X^j = f(U^j).
\]
We define an allocation $\alloc$ by:
\[
A_i = \{ k \mid U^k \in B_i \}.
\]
Set $c'_\D = \varepsilon / 2$, which so far depends only on the distribution $\D$ and not $m$. We will show that with probability $1 - \exp(-\Omega_\D(m))$, we have:
\[
v_i(A_i) - v_i(A_j) > c'_\D \cdot m, \quad \forall i \neq j.
\]
Fix $i \neq j$. Define:
\[
Y^k_{i, j} = X^k_i \cdot (\I[k \in A_i] - \I[k \in A_j]).
\]
Observe that $v_i(A_i) - v_i(A_j) = \sum_k Y^k_{i, j}$. We would like to show that for all $i \ne j$, $\sum_k Y^k_{i, j} \ge c'_\D \cdot m$. To do so, we bound the complement:
$\Pr[\sum_k Y^k_{i, j} < c'_\D \cdot m]$. Note that $\E[\sum_{k} Y^k_{i, j}] \ge m \cdot \varepsilon = 2 c'_\D \cdot m$. Thus, this is a deviation from the mean of $> c'_\D \cdot m$. Furthermore, each $Y^k_{i, j} \in [-b, b]$, $\mathbb{E}[Y_k] \geq \varepsilon$,  the $Y_k$s are independent. We can therefore apply
\emph{Hoeffding’s inequality} to get:
\[
\Pr\left[ \sum_k Y_k < c \cdot m \right] \leq \exp\left( -\frac{c^2 m}{2b^2} \right).
\]
Applying a union bound over all $n(n-1)$ pairs:
\[
\Pr\left[ \exists i \neq j, v_i(A_i) - v_i(A_j) < c \cdot m \right] \leq n(n-1) \exp\left( -\frac{c^2 m}{2b^2} \right) \in \exp\left( -\Omega_\D(m) \right),
\]
completing the proof.
\end{proof}

The purpose of the non-identical condition is now clearer. If two agents had the same values (up to scaling), then it is impossible that both strongly prefer their own bundles.

%The proof of the Lemma relies on adapting results on the existence of strongly envy free allocations from the cake-cutting literature~\citep{barbanel2005geometry}. The full proof of \Cref{theorem:stochastic-value-3-worst-case-noise} can be found in~\Cref{appendix:stochastic-worst}.

\section{Worst-Case True Valuations}\label{subsec: worst-case valuations}

The focus of this section will be on settings where the true preferences of the agents $\vtrue_{i,j}$ are fixed/worst-case. In \Cref{subsection:worst-case-vals-worst-case-noise}, we detail our analysis of the Round-Robin algorithm. \Cref{subsection:applications-of-Round-Robin} presents our two applications of the Round-Robin bound. Finally, in \Cref{subsection:discrepancy-under-noise}, we describe our results on discrepancy and envy minimization.

\subsection{Worst-Case Values and Worst-case  Noise}\label{subsection:worst-case-vals-worst-case-noise}

We start by showing that, given worst-case true valuations and (bounded) worst-case noise, we can find an allocation with maximum envy at most $m/n$ times the noise (\Cref{thm: worst-case valuations and worst-case error upper bound}); we prove that this is the best bound possible by giving a matching lower bound (\Cref{thm: worst-case valuations and worst-case error lower bound}). %The proof of both of these appear in~\Cref{appendix:fair_division}.

The guarantee of~\Cref{thm: worst-case valuations and worst-case error upper bound} is achieved by the simple Round-Robin process. Here, for each agent $i \in [n]$, we assume the existence of a possibly unknown value $\delta_i \geq 0$, which when added to the estimates $\vestimate_{i,j}$, results in a value $\varepsilon$ close to $\vtrue_{i,j}$, i.e., $|\vtrue_{i,j} - (\vestimate_{i,j} + \delta_i)| \leq \varepsilon$. The theorem holds for any values of $\delta_i \geq 0$, and, in particular, the natural choice of $\delta_i = 0$ for all $i \in [n]$. The reason the envy bound in \Cref{thm: worst-case valuations and worst-case error upper bound} is independent of $\delta_i$ is a consequence of the fact that round robin produces \emph{balanced} allocations, ensuring that every agent gets roughly the same number of items. While it is natural to assume that $\delta_i = 0$ for all $i \in [n]$, \Cref{thm: worst-case valuations and worst-case error upper bound} is stated in terms of the additional parameters $\delta_i$s as the balancedness property will be used in an application we will present in \Cref{subsection:applications-of-Round-Robin}.

\begin{restatable}[Upper bound for worst-case noise]{theorem}{roundRobinWorstValWorstNoise}\label{thm: worst-case valuations and worst-case error upper bound}
For every $i\in \agents$ and $j \in \items$, let  $\vestimate_{i,j}$ be the estimate of true value $\vtrue_{i,j} \in [0,b]$, such that $\max_{i,j}|\vtrue_{i,j} - \vestimate_{i,j} - \delta_i| \leq \varepsilon$, for some possibly unknown constants $\varepsilon > 0$, $\delta_i \geq 0$ for each agent $i \in [n]$. Given $\vestimate_{i,j}$ for each $i,j$, we can efficiently compute an allocation $\alloc = (A_1, \ldots, A_n)$ such that, for every pair of agents $i,i' \in \agents$, we have $\vtrue_i(A_{i'}) - \vtrue_i(A_i) \leq 2\varepsilon \left\lceil \frac{m}{n} \right\rceil + b$.
\end{restatable}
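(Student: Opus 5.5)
We run Round-Robin on the estimates $\vestimate_{i,j}$, with agents picking in the order $1,2,\ldots,n$. In each round, the agent whose turn it is takes the available item with the highest estimated value. Each agent receives either $\lfloor m/n\rfloor$ or $\lceil m/n\rceil$ items, and the algorithm runs in $O(mn\log m)$ time. Define the shifted estimates $w_{i,j} \coloneqq \vestimate_{i,j} + \delta_i$. Since $\delta_i$ is a constant per agent, agent $i$'s ranking of items under $w_i$ coincides with its ranking under $\vestimate_i$. Hence the run of Round-Robin is identical whether we think of agent $i$ as picking by $\vestimate_i$ or by $w_i$, and $|w_{i,j} - \vtrue_{i,j}| \le \varepsilon$ for all $i,j$.

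Fix agents $i$ and $i'$. We use the standard Round-Robin pairing argument. List $i$'s picks as $g_1, g_2, \ldots, g_k$ and $i'$'s picks as $h_1, \ldots, h_{k'}$, in chronological order.

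Consider first the case $i < i'$. Agent $i$ picks $g_t$ before $i'$ picks $h_t$ in round $t$, so $h_t$ was available when $g_t$ was chosen, and therefore $w_{i,g_t} \ge w_{i,h_t}$. Also $k \ge k'$. Converting to true values costs $\varepsilon$ on each side: $\vtrue_{i,g_t} \ge w_{i,g_t} - \varepsilon \ge w_{i,h_t} - \varepsilon \ge \vtrue_{i,h_t} - 2\varepsilon$. Summing over $t \le k' \le \lceil m/n\rceil$ and using nonnegativity of $\vtrue$ for any extra items of $i$ gives envy at most $2\varepsilon\lceil m/n\rceil$.

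Now consider $i > i'$. Remove $h_1$ and pair $g_t$ with $h_{t+1}$, which is still available when $g_t$ is picked. The same argument bounds $\vtrue_i(A_{i'}\setminus\{h_1\}) - \vtrue_i(A_i)$ by $2\varepsilon\cdot\min(k, k'-1) \le 2\varepsilon\lceil m/n\rceil$. Here we need $k \ge k'-1$; every $g_t$ with $t \le k'-1$ exists because $i$ picks in every round in which $i'$ picks again. The removed item contributes at most $\vtrue_{i,h_1} \le b$. This yields $\vtrue_i(A_{i'}) - \vtrue_i(A_i) \le 2\varepsilon\lceil m/n\rceil + b$.

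The main point to handle carefully is that the bound must not depend on $\delta_i$. The first paragraph's observation addresses this: Round-Robin is invariant to per-agent shifts, and the comparison uses true values directly rather than comparing bundle sums of $\vestimate$, where $\delta_i$ would be multiplied by differing bundle sizes. The remaining care is the bookkeeping of the pairing when bundle sizes differ by one, which is routine.
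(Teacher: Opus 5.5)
Your proof is correct, but it is organized differently from the paper's. The paper treats two properties of Round-Robin as black boxes, EF1 with respect to $\vestimate$ and balancedness, and converts whole bundle sums between $\vtrue$ and $\vestimate$. The shift then appears as $\delta_i(|A_{i'}\setminus\{g\}|-|A_i|)$, and the paper discards it using $\delta_i\ge 0$ together with $|A_{i'}|\le |A_i|+1$. You instead observe that a per-agent shift changes no pick. You then run the round-by-round pairing directly, converting item by item to true values and absorbing $i$'s unpaired items via $\vtrue\ge 0$.

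The paper's route is modular and makes visible that balancedness is what removes the dependence on $\delta_i$, which is the point reused in the BTL application. Your route is more elementary and somewhat stronger. It never uses $\delta_i\ge 0$, and it gives $2\varepsilon|A_{i'}|$ with no $+b$ when $i$ picks before $i'$.

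More importantly, your argument does not rely on Round-Robin being EF1 with respect to $\vestimate$. That property can fail in this setting, because estimates may be negative and an agent can be forced to take a negatively estimated last item. Take $n=2$, $m=3$, agent $1$ with estimates $(1,0,-10)$, $\delta_1=10$ and true values $(11,10,0)$, and agent $2$ preferring item $2$ to item $3$. Then $A_1=\{1,3\}$, $A_2=\{2\}$, and $\vestimate_1(A_1)=-9<0=\vestimate_1(A_2\setminus\{2\})$. So the last step of the paper's chain fails there, even though the true envy is $-1$. Your pairing argument covers such instances as written.
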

\begin{proof}
    Consider the allocation produced by the Round-Robin algorithm, when executed on the estimated values $\vestimate_{i,j}$. 
    Recall that the Round-Robin algorithm allocates items to agents over a sequence of rounds, according to a fixed, arbitrary order. In each round, the next agent in the order picks an item that maximizes her value ($\vestimate_{i,j}$, in our case), among the set of remaining items (ties can be broken in an arbitrary manner). The allocation $\alloc$ produced by Round-Robin satisfies envy-freeness up to one item (EF1) \emph{with respect to the noisy values}. That is, for any pair of agents $i,i' \in \agents$ with $\vestimate_i(A_{i'}) \neq \emptyset$, we have $\vestimate_i(A_i) \geq \vestimate_i(A_{i'} \setminus \{g\})$  for some $g \in A_{i'}$. Furthermore, the allocation $\alloc$ produced by Round-Robin is \emph{balanced}, i.e., all agents get roughly equal number of items, for each agent $i \in [n]$, $|A_i|$ is either $\lfloor \frac{m}{n} \rfloor$ or $\lceil \frac{m}{n} \rceil$. 

    Given these facts, we will bound the envy of an agent $i \in \agents$ for agent $i' \in \agents$ with respect to the true (unknown) values as follows. If $A_{i'} = \emptyset$, then there is no envy since $\vtrue_i(A_{i'}) - \vtrue_i(A_i) = - \vtrue_i(A_i) \leq 0$. Otherwise, if $A_{i'} \neq \emptyset$, let $g \in A_{i'}$ be an item such that  $\vestimate_i(A_i) \geq \vestimate_i(A_{i'} \setminus \{g\})$. Additionally, note that the inequality $|\vtrue_{i,j} - \vestimate_{i,j} - \delta_i| \leq \varepsilon$ implies that $(i)$ $\vtrue_{i}(A_i) \geq \vestimate_{i}(A_i) + \delta_i |A_i| - \varepsilon |A_i|$ and $(ii)$ $\vtrue_{i}(A_{i'} \setminus \{g\}) \leq \vestimate_{i}(A_{i'} \setminus \{g\}) + \delta_i |A_{i'} \setminus \{g\}| + \varepsilon |A_{i'} \setminus \{g\}|$. Using this we get,
    \begin{align*}
        \vtrue_i(A_{i'}) &- \vtrue_i(A_i) \\
        & \leq \vtrue_i(A_{i'} \setminus \{g\}) - \vtrue_i(A_i) + b \tag{$\vtrue_{i,j} \leq b$ for all $j \in \items$}\\
        & \leq \vestimate_i(A_{i'} \setminus \{g\}) + \delta_i |A_{i'}\setminus \{g\}|  + \varepsilon |A_{i'} \setminus \{g\}| - \vtrue_i(A_i) + b \tag{Via $(i)$}\\
        & \leq \left(\vestimate_i(A_{i'} \setminus \{g\}) - \vestimate_i(A_i) \right)  + \delta_i \left(|A_{i'}\setminus \{g\}| - |A_i| \right) + \varepsilon \left(|A_i| + |A_{i'} \setminus \{g\}|\right) + b\tag{Via $(ii)$}\\
        & \leq \left(\vestimate_i(A_{i'} \setminus \{g\}) - \vestimate_i(A_i) \right) + \varepsilon \left(|A_i| + |A_{i'} \setminus \{g\}|\right) + b \tag{$\delta_i \geq 0$, $|A_i|, |A_{i'}| \in \{\left\lfloor \frac{m}{n} \right\rfloor,\left\lceil \frac{m}{n} \right\rceil\}$}\\
        & \leq  \vestimate_i(A_{i'} \setminus \{g\}) - \vestimate_i(A_i) + 2\varepsilon \left\lceil \frac{m}{n} \right\rceil + b \tag{$|A_i|, |A_{i'}| \leq \left\lceil \frac{m}{n} \right\rceil$}\\
        & \leq  2\varepsilon \left\lceil \frac{m}{n} \right\rceil + b. \tag{$\vestimate_i(A_{i'} \setminus \{g\}) \leq \vestimate_i(A_i)$}
    \end{align*}
    This concludes the proof.
    \end{proof}

% \lowerBoundWorstValWorstNoise*

The next theorem shows that the envy bound achieved by Round-Robin is tight (up to an additive term) for deterministic algorithms.

\begin{restatable}[Lower bound for worst-case noise]{theorem}{lowerBoundWorstValWorstNoise}\label{thm: worst-case valuations and worst-case error lower bound}
For every $i\in \agents$ and $j \in \items$, let $\vestimate_{i,j}$ be an estimate of $\vtrue_{i,j}$, such that $\max_{i,j}|\vtrue_{i,j} - \vestimate_{i,j}| \leq \varepsilon$, for some $\varepsilon > 0$. No deterministic algorithm that takes $\vestimate_{i,j}$s as input, and outputs a complete allocation, can guarantee that the envy between any two agents is strictly less than $2\varepsilon \, \frac{m}{n}$.
\end{restatable}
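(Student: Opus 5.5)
We will use an adversary argument in which all estimates are identical, so the algorithm's output carries no information about the true values. Fix any deterministic algorithm and feed it the estimate profile $\vestimate_{i,j} = c$ for all $i \in \agents$, $j \in \items$, where $c \geq \varepsilon$ is a constant (say $c = \varepsilon$, keeping true values nonnegative). Because the algorithm is deterministic, this input produces one fixed allocation $\alloc = (A_1,\ldots,A_n)$. Only afterwards do we choose true values $\vtrue_{i,j} \in \{c-\varepsilon, c+\varepsilon\}$. Any such choice is consistent with the estimates, since $|\vtrue_{i,j} - \vestimate_{i,j}| = \varepsilon$.

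The choice of true values goes as follows. By pigeonhole some agent $i'$ receives $|A_{i'}| \geq \lceil m/n \rceil \geq m/n$ items. Let $i$ be an agent with the fewest items, so $|A_i| \leq \lfloor m/n \rfloor \leq m/n$. We focus on the envy of $i$ toward $i'$ and set
\[
\vtrue_{i,j} = c + \varepsilon \text{ for } j \in A_{i'}, \qquad \vtrue_{i,j} = c - \varepsilon \text{ otherwise}.
\]
All other agents' true values can be set to $c$. Then
\[
\vtrue_i(A_{i'}) - \vtrue_i(A_i) = (c+\varepsilon)|A_{i'}| - (c-\varepsilon)|A_i| = c\,(|A_{i'}| - |A_i|) + \varepsilon\,(|A_{i'}| + |A_i|).
\]

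It remains to show this is at least $2\varepsilon\, m/n$. The first term is nonnegative by the choice of $i$ and $i'$, so the bound holds whenever $|A_{i'}| + |A_i| \geq 2m/n$. The main obstacle is the case where $i$ has very few items, so the $\varepsilon$-term is small. In that case we lean on the $c$-term instead. With $c = \varepsilon$ the expression equals $2\varepsilon |A_{i'}| \geq 2\varepsilon \lceil m/n \rceil \geq 2\varepsilon\, m/n$ regardless of $|A_i|$, which settles every case. True values then lie in $\{0, 2\varepsilon\}$, so they are nonnegative and consistent with the model.

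Since the algorithm cannot tell this true profile apart from any other consistent one, it cannot guarantee envy strictly less than $2\varepsilon\, m/n$. This matches the Round-Robin upper bound in \Cref{thm: worst-case valuations and worst-case error upper bound} up to the additive $b$ term and rounding.
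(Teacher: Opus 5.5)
Your proof is correct and follows essentially the paper's adversary argument: feed identical estimates, take the resulting fixed allocation, then give one agent true values $c+\varepsilon$ on a largest bundle and $c-\varepsilon$ on its own bundle. The one difference is the choice of envious agent and base value. The paper uses base value $1/2$ and picks the agent with the second-largest bundle, so that $|A_p|+|A_q|\ge 2m/n$. Your choice $c=\varepsilon$ makes the envy exactly $2\varepsilon|A_{i'}|\ge 2\varepsilon\lceil m/n\rceil$ for any agent $i\neq i'$. State $i\neq i'$ explicitly, since if all bundles have equal size, ``an agent with the fewest items'' could be $i'$ itself.
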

\begin{proof}
Consider any deterministic algorithm, and let $\alloc = (A_1, \ldots, A_n)$ be the allocation returned by this algorithm when $\vestimate_{i,j} = 1/2$ for all $i \in \agents$ and $j \in \items$. Let $p \in \argmax_{i \in \agents} |A_i|$ and $q \in \argmax_{i \in \agents \setminus \{p\}} |A_i|$ be the agent with the largest and second largest number of items in $\alloc$, respectively. We have that $|A_p| + |A_q| \geq \frac{2m}{n}$. We will construct the true values $\vtrue_{i,j}$s to maximize the envy of agent $q$ for agent $p$. In particular, we set $\vtrue_{q,j} = 1/2 + \varepsilon$ for all $j \in A_p$ and $\vtrue_{q,j} = 1/2 - \varepsilon$ for all $j \in A_q$. This implies that the envy of agent $q$ for agent $p$, $\vtrue_q(A_p) - \vtrue_q(A_q) = |A_p| (1/2 + \varepsilon) - |A_q| (1/2 - \varepsilon) =  \varepsilon (|A_p| + |A_q|) + 1/2 (|A_p| - |A_q|)$. Using the fact that $|A_p| \geq |A_q|$ and $|A_p| + |A_q| \geq \frac{2m}{n}$, we get $\vtrue_q(A_p) - \vtrue_q(A_q) \geq 2\varepsilon \, \frac{m}{n}$. %Hence, no deterministic algorithm can ensure that the maximum envy is always below $2\varepsilon \cdot \frac{m}{n}$.
\end{proof}

\subsection{Applications of \Cref{thm: worst-case valuations and worst-case error upper bound}}\label{subsection:applications-of-Round-Robin}

\subsubsection*{{\bf Estimating Preferences Via a BTL Query Model}}

We now describe a method for obtaining a reasonably accurate estimate of all agents' preferences using the Bradley-Terry-Luce (BTL) estimation model. In this model, each agent is asked pairwise comparison queries of the form  "which item do you like more, $k$ or $\ell$?" and it responds with a binary choice from $\{k, \ell\}$. Following the BTL model, the set of queried item pairs for each agent is determined independently by an Erdős–Rényi random graph with edge probability $p$, where each unordered pair $\{k,\ell\}$ is included independently with probability $p$. For every queried pair, we collect $K$ independent comparison responses (i.e., $K$ repeated queries per observed pair). A detailed formulation of the BTL model is provided in \Cref{appendix:BTL-preliminaries}. Using the collected responses, we run a standard maximum-likelihood estimator for each agent $i \in [n]$, to recover $\tilde{\theta}_i \in \R^m$, an approximation of $\theta^*_i$, which is a centered version of the true preference vector of agent $i$. Let $\Theta^* =[{\theta}^*_1; \dots ; {\theta}^*_n]^\T$ and $\hat{\Theta} =[\tilde{\theta}_1; \dots ; \tilde{\theta}_n]^\T$ denote the concatenated true and estimated preference matrices, then the following lemma provides a bound on the estimation error of each $\theta^*_i$. 

\begin{restatable}[{$\ell^2$ and $\ell^\infty$-Error Bounds \cite[Theorem 3.1]{chenetalpaper2}}]{lemma}{chenEtAlSingleAgentBounds}\label{thm:Single Agent Bounds}
 
Assume $p \geq c_0 \frac{\log m}{m}$ for a sufficiently large constant $c_0 > 0$, and that $K$ comparisons are collected per observed pair under an Erdős--Rényi observation graph with parameter $p$. Let $\theta^* \in \R^m$ satisfies $\|\theta^*\|_\infty \leq b$ and $\theta^{*^\T} \1_m = 0$. Then, there exists a polynomial-time estimator that computes $\tilde{\theta} \in \mathbb{R}^{m}$ by asking ${m \choose 2} p K$ average number of pairwise comparisons and satisfies the following bounds
\begin{align*}
\|\tilde{\theta} - \theta^*\|_2 \leq c_2 \cdot \frac{1}{\sqrt{p K}}, &&
\|\tilde{\theta} - \theta^*\|_\infty \leq c_\infty \cdot \sqrt{\frac{\log m}{m p K}},
\end{align*}
 with probability at least $1 - O(1/m^{7})$, for some constants $c_2, c_\infty>0$ only depending on $b$.
\end{restatable}

The estimated preferences obtained from \Cref{thm:Single Agent Bounds} are given to the Round-Robin algorithm to compute an allocation $\alloc = (A_1, A_2, \ldots, A_n)$ with a bounded envy. The guarantees of the end-to-end application combining the BTL model with Round-Robin are stated in \Cref{application:combined_round_robin}.

Before stating~\Cref{application:combined_round_robin}, we note that the combination of BTL estimation with Round-Robin is made possible only because of the balancedness property that Round-Robin exhibits. By design, the BTL model is invariant to linear shifts.\footnote{Formally, the pairwise comparison probability in BTL, stated in~\Cref{eq:btl}, depends only on the difference in preference values and not the mean.} Consequently, we cannot recover the average value of all the items, for any agent using BTL. On the other hand, maximum envy between agents does depend on this average value, and cannot be bounded without it. Using Round-Robin addresses this problem by making envy between any pair of agents independent of the average value of all the items --- this is a direct consequence of the balancedness property, which ensures that agents get a roughly equal number of items.\footnote{This independence is evident from the fact that the envy bound in \Cref{thm: worst-case valuations and worst-case error upper bound} does not depend on the constants $\delta_i$.}

\begin{restatable}[Inference and Fair Division]{application}{mainTheoremEstimationAndFairDivision}\label{application:combined_round_robin}
Assume there are $m$ indivisible items and $n$ additive agents with $n < m$, where $\vtrue_{i,j} \in [0,1]$ is the value of agent $i$ for item $j$. Let $V^* \in [0,1]^{n \times m}$ be the valuation matrix where $V^*_{i,j} = \vtrue_{i,j}$. If the users respond to BTL-based pairwise queries, then there is an algorithm that asks ${m \choose 2} pK$ average number of pairwise-comparison queries to each agent $i \in [n]$, and with probability at least $1 -O(1/m^{6})$, produces an allocation $\alloc = (A_1, \ldots, A_n)$ in polynomial time that has bounded maximum envy. Formally, the obtained allocation $\alloc$ satisfies $$\max_{i,j} v_i(A_i) - v_i(A_j) = O\left(\sqrt{\frac{m \log m}{n^2 p K}}\right),$$
where $K$ is the number of queries per-pair and $p$ is the observation graph probability satisfying $p = \Omega\left(\frac{\log m}{m}\right)$. 
\end{restatable}

\begin{proof}[Proof of \Cref{application:combined_round_robin}]
We begin by transforming the true valuation matrix $V^* \in [0,1]^{n \times m}$ into a preference score matrix $\Theta^*$ via a centering operation. Specifically, for each agent $i \in [n]$ and item $j \in [m]$, we define
    \begin{equation} \label{eq:theta-for-fair-div}
        \theta^*_{i,j} = v^*_{i,j} - \frac{1}{m}\sum_{l=1}^{m} v^*_{i,l}
    \end{equation}
    Centering ensures that $\Theta^*\1_m =\0_n$. Furthermore, as $v^*_{i, j}$s lie in $[0, 1]$, we have $\theta^*_{i,j} \in (-1, 1)$.
    
    Next, for every row of $\Theta^*$, that is each vector $\theta^{*\T}_i$, we invoke \cref{thm:Single Agent Bounds} on $\theta^*_i$ with $b=1$. Under the conditions specified in that theorem, each invocation returns an estimate $\tilde{\theta}_i$ satisfying
    \[
    \|\tilde{\theta}_i - \theta^*_i\|_\infty = O\left(\sqrt{\frac{\log m}{m p K}}\right),
    \]
    with probability at least $1 -O(1/m^{7})$. Therefore, by taking $\hat{\Theta} =[\tilde{\theta}_1; \dots ; \tilde{\theta}_n]^\T$, we have
    \begin{equation}\label{eq:estimation-error-bound}\max_{i, j} |\hat{\Theta}_{i,j} - \Theta^*_{i, j}| = O\left(\sqrt{\frac{\log m}{m p K}}\right)
    \end{equation}
    Through a union bound, we get that \Cref{eq:estimation-error-bound} holds with probability at least $1 -O(n/m^{7})$, which is at least $1-O(1/m^6)$ since $n < m$. Finally, we define the estimated valuation matrix as $\hat{V} \coloneqq \hat{\Theta}$, and for each agent $i$, we define $\delta_i \coloneqq \frac{1}{m}\sum_{j=1}^{m} v^*_{i,j} \in [0,1]$, which represents the mean valuation of agent $i$. Using this along with \Cref{eq:theta-for-fair-div} and (\ref{eq:estimation-error-bound}), for any $i \in [n]$ and $j \in [m]$, we get that
    \begin{align*}
    \left| \vestimate_{i,j} - \big(\vtrue_{i,j} - \frac{1}{m}\sum_{l=1}^m \vtrue_{i,j} \big) \right|= \left|\vestimate_{i,j} - \Theta_{i,j}^*\right|= \left|\hat{\Theta}_{i,j} - \Theta_{i,j}^*\right| = 
         O\left(\sqrt{\frac{\log m}{m p K}}\right) .%     .
    \end{align*}
        
This implies that, for each $i \in [n]$ and $j \in [m]$, an estimate $\vestimate_{i,j}$ that satisfies $|\vtrue_{i,j} - \vestimate_{i,j} - \delta_{i}| = |\vestimate_{i,j} - \vtrue_{i,j} + \delta_{i}| = O\left(\sqrt{\frac{\log m}{m p K}}\right) $ for the value of $\delta_i = \frac{1}{m}\sum_{l=1}^m \vtrue_{i,j} \in [0,1]$. Note that while $\vestimate_{i,j}$ is known, $\delta_{i}$ is unknown. Regardless, we can apply \Cref{thm: worst-case valuations and worst-case error upper bound} to compute an allocation $\alloc = (A_1, A_2, \ldots, A_n)$ in polynomial time that has bounded maximum envy with respect to the true values, i.e., $\max_{i,j} \vtrue_i(A_j) - \vtrue_i(A_i) = O\left(\varepsilon \cdot \frac{m}{n}\right) = O\left(\sqrt{\frac{m \log m}{n^2 p K}}\right)$. This concludes the proof.
\end{proof}

\subsubsection*{{\bf Worst-Case Values and Stochastic Noise}}
We consider the following simple model for stochastic noise: $\vestimate_{i,j} = \vtrue_{i,j} + \varepsilon_{i,j}$, for $\varepsilon_{i,j} = \sigma_{i,j} \cdot z_{i,j}$. We assume that $z_{i,j}$s are drawn i.i.d. from a non-negative distribution $\D$, and $\sigma_{i,j}$s are drawn uniformly at random from the set $\{ -1 , 1 \}$. That is, $z_{i,j}$ is the magnitude of the noise, and $\sigma_{i,j}$ denotes whether the noise underestimates or overestimates the value. The next theorem shows that, when $\D$ is not heavy tailed, then we can efficiently compute an allocation with small envy (with respect to the true valuations). 
Formally, our condition on $\D$ is that $\D$ has monotone hazard rate (MHR): $\frac{1-F(x)}{f(x)}$ is a non-increasing function, where $F(x)$ and $f(x)$ are its cumulative distribution function and probability density function, respectively.

\begin{application}\label{application:worst-case-true-val-stochastic-noise}
For every $i\in \agents$ and $j \in \items$, let $z_{i,j}$s be i.i.d. draws from a non-negative MHR distribution $\D$ such that $\E[\D] \leq \frac{n}{m \log(nm)}$, and $\sigma_{i,j}$s be uniformly random draws from the set $\{ -1 , 1 \}$. Let $\vestimate_{i,j} = \vtrue_{i,j} + \varepsilon_{i,j}$, for $\varepsilon_{i,j} = \sigma_{i,j} \cdot z_{i,j}$. Given $\vestimate_{i,j}$ for each $i,j$, we can efficiently compute an allocation $\alloc = (A_1, \ldots, A_n)$ such that, for every pair of agents $i,i' \in \agents$, we have $\vtrue_i(A_{i'}) - \vtrue_i(A_i) \leq 10$, with probability at least $1 - \frac{1}{(nm)^{3/5}}$.
\end{application}

\begin{proof}[Proof of \Cref{application:worst-case-true-val-stochastic-noise}]
We use the following two lemmas about MHR distributions, where $\D_{k:n}$ denotes the distribution of the $k$-th lowest of $n$ i.i.d. samples from $\D$.

\begin{lemma}[\cite{babaioff2017posting}; Lemma 5.3]\label{lem: babaioff}
For any non-negative MHR distribution $\D$ and $m \leq n$, it holds that $\frac{\E[\D_{m:m}]}{\E[\D_{n:n}]} \geq \frac{H_m}{H_n}$, where $H_k$ is the $k$-th harmonic number.
\end{lemma}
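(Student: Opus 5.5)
The plan is to reduce to exponential order statistics through the cumulative hazard, and then to a one-parameter family of test functions. Let $F$ be the CDF of $\D$ and $\Lambda(x) = -\log(1-F(x))$ its cumulative hazard. The hazard rate $f/(1-F)$ is non-decreasing, so $\Lambda$ is non-decreasing and convex on the support. If $E \sim \mathrm{Exp}(1)$, then $X = \varphi(E)$ has distribution $\D$, where $\varphi = \Lambda^{-1}$ is the generalized inverse. This $\varphi$ is non-decreasing and concave on $[0,\infty)$, and $\varphi(0) \geq 0$ because $\D$ is non-negative. Since $\varphi$ is monotone, the maximum of $k$ i.i.d. draws from $\D$ has the law of $\varphi(E_{k:k})$, where $E_{k:k}$ is the maximum of $k$ i.i.d. $\mathrm{Exp}(1)$ variables. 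The claim therefore becomes
\[
\frac{\E[\varphi(E_{m:m})]}{\E[\varphi(E_{n:n})]} \;\geq\; \frac{H_m}{H_n}.
\]

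Next I would decompose $\varphi$. Any concave non-decreasing $\varphi$ on $[0,\infty)$ can be written as
\[
\varphi(x) = \varphi(0) + a x + \int \min(x,c)\, d\mu(c),
\]
with $a \geq 0$ and $\mu$ a non-negative measure; this is the representation of a concave function through its non-increasing right derivative. Both expectations are linear in $\varphi$, and a ratio of sums of non-negative terms is at least the smallest termwise ratio. So it suffices to check the inequality for three basis functions.
\begin{itemize}
\item For the constant $1$, the ratio is $1 \geq H_m/H_n$.
\item For $x$, the ratio is exactly $H_m/H_n$, since $\E[E_{k:k}] = H_k$ by the Rényi representation $E_{k:k} \overset{d}{=} \sum_{j=1}^k Z_j/j$.
\item For $\min(x,c)$, a short computation is needed, described next.
\end{itemize}

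For the truncated function $\min(x,c)$, I compute
\[
\E[\min(E_{k:k},c)] = \int_0^c \left(1-(1-e^{-t})^k\right) dt.
\]
Substituting $u = 1-e^{-t}$ gives $dt = du/(1-u)$, so this equals
\[
\int_0^{u_c} \frac{1-u^k}{1-u}\, du = \sum_{j=1}^k \frac{u_c^j}{j} =: S_k(u_c), \qquad u_c = 1-e^{-c} \in [0,1).
\]
The required inequality $S_m(u)/S_n(u) \geq H_m/H_n$ is equivalent to $S_m(u)/H_m \geq S_n(u)/H_n$. Now $S_k(u)/H_k$ is a weighted average of $u, u^2, \ldots, u^k$ with weights proportional to $1/j$. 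Passing from $m$ to $n$ appends the terms $u^{m+1},\ldots,u^n$, each no larger than every existing term because $u \le 1$. Appending smaller values can only lower the weighted average, so the inequality holds.

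The step I expect to need the most care is the first one: justifying the quantile representation $X = \varphi(E)$ and the concavity of $\varphi$ when $\D$ may have support starting above $0$ or bounded support, where $\Lambda$ jumps to $\infty$. I would handle this by defining $\varphi(y) = \inf\{x : \Lambda(x) \geq y\}$, checking that concavity survives on $[0,\infty)$ with $\varphi(0) \ge 0$, and noting that the decomposition step only needs concavity and monotonicity. The assumption that a density exists, implicit in the paper's MHR definition, keeps $F$ continuous, so no atoms interfere.
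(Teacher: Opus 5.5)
The paper contains no proof of this lemma. It is imported from \cite{babaioff2017posting} (Lemma 5.3) and used only as a black box in the proof of \Cref{application:worst-case-true-val-stochastic-noise}, so there is no in-paper argument to compare with. Your proposal supplies a self-contained proof, and it is correct.

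The coupling $X=\varphi(E)$ with $\varphi=\Lambda^{-1}$ is sound. With a density, $\Lambda$ is continuous, convex and strictly increasing on the support $[L,U)$, and tends to $\infty$ at $U$. So $\varphi$ is a continuous concave bijection $[0,\infty)\to[L,U)$. The identity $\E[\min(E_{k:k},c)]=\sum_{j\le k}u_c^j/j$ is right, and the weighted-average comparison is exactly the step needed. Non-negativity of $\D$ enters precisely through the constant term $L\ge 0$.

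Two small repairs are needed.
\begin{enumerate}
\item With $\varphi(y)=\inf\{x:\Lambda(x)\ge y\}$ you get $\varphi(0)=0$ rather than $L$ when $L>0$. The constant in your decomposition must therefore be $\varphi(0^+)=L$; alternatively, define the inverse with a strict inequality.
\item $\varphi'(0^+)$ may be infinite (e.g.\ Rayleigh, $\varphi(y)=\sqrt{y}$), so $\mu$ can have infinite mass near $0$. It is only $\sigma$-finite with $\int\min(1,c)\,d\mu(c)<\infty$. Phrase the ``ratio of sums'' step as the termwise inequality $H_n\,\E[g(E_{m:m})]\ge H_m\,\E[g(E_{n:n})]$ for each basis function $g$, integrated against $\mu$ via Tonelli.
\end{enumerate}

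You can also skip the decomposition entirely. Write $g(k)=\E[\D_{k:k}]$ and $d_k=k\,(g(k)-g(k-1))$. The change of variables $x=\varphi(y)$ in $g(k)-g(k-1)=\int_0^\infty F^{k-1}(1-F)\,dx$ gives $d_1=L+\E[\varphi'(E_{1:1})]$ and $d_k=\E[\varphi'(E_{k:k})]$ for $k\ge 2$. Since $\varphi'$ is non-increasing and $E_{k:k}$ is stochastically increasing in $k$, the sequence $(d_k)$ is non-increasing. Then $g(n)/H_n$ is the $1/k$-weighted average of $d_1,\dots,d_n$, and your appending argument finishes immediately.

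Your basis functions $\min(x,c)$ are exactly the instances $d_j=\Pr(E_{j:j}<c)=u_c^j$ of this identity. So the two arguments are the same proof in different coordinates; the increment version just avoids the representation theorem for concave functions and the associated measure-theoretic bookkeeping.
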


\begin{lemma}[\cite{azizzadenesheli2023reward}; Lemma 3]\label{lem: kamyar}
For any non-negative MHR distribution $\D$, and $n \geq 4$, $\Pr[ \D_{n:n} < 2 \, \E[\D_{n:n} ]] \geq 1 - \frac{1}{n^{3/5}}$.
\end{lemma}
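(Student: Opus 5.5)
The plan is to work with the cumulative hazard function $\Lambda(x) \coloneqq -\log(1 - F(x))$ of $\D$. For an MHR distribution the hazard rate $f/(1-F)$ is nondecreasing, so $\Lambda$ is convex on the support. Since $\D$ is non-negative, $\Lambda(0) = 0$. (If there is an atom at $0$ I would handle it separately, or shift the argument to the left endpoint of the support; the convexity argument is unaffected.) Convexity together with $\Lambda(0)=0$ gives the two facts I will use repeatedly:
\[
\Lambda(\lambda x) \geq \lambda \Lambda(x) \quad (\lambda \geq 1), \qquad \Lambda(s x) \leq s \Lambda(x) \quad (s \in [0,1]).
\]
Let $M$ denote the maximum of $n$ i.i.d.\ draws, so $M \sim \D_{n:n}$. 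Let $q$ be the point with $\Lambda(q) = \log n$, i.e., $1 - F(q) = 1/n$; for continuous $\D$ this exists by monotonicity.

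\textbf{Step 1 (lower bound on $\E[M]$ in terms of $q$).} I write $\E[M] = \int_0^\infty \P[M > t]\,dt$ and keep only $t \in [0,q]$. Substitute $t = sq$. The second convexity fact gives $1-F(sq) = e^{-\Lambda(sq)} \geq n^{-s}$, hence
\[
\P[M > sq] = 1 - F(sq)^n \geq 1 - (1 - n^{-s})^n \geq 1 - \exp(-n^{1-s}).
\]
Changing variables to $u = n^{1-s}$, this yields
\[
\E[M] \geq q \int_0^1 \bigl(1 - e^{-n^{1-s}}\bigr) ds = \frac{q}{\ln n}\int_1^n \frac{1 - e^{-u}}{u}\,du \geq q\Bigl(1 - \frac{E_1(1)}{\ln n}\Bigr),
\]
where $E_1(1) = \int_1^\infty e^{-u}/u\,du \approx 0.219$. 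For $n \geq 4$ we have $\ln n \geq 1.386$, so the right-hand side is at least roughly $0.84\,q$. This numeric check at the threshold $n = 4$ is the step I expect to be the delicate one. It is where the hypothesis $n\ge 4$ enters, and the constants must be tracked carefully enough to clear the value $4/5$ needed below. If the crude bound $E_1(1)$ were too lossy, I would instead evaluate the integral directly at $n=4$ and use that the ratio increases in $n$.

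\textbf{Step 2 (tail bound at $2\E[M]$).} By Step 1, $2\E[M] \geq \lambda q$ with $\lambda \geq 1.68 > 8/5$. The first convexity fact then gives $\Lambda(2\E[M]) \geq \lambda \log n$, and a union bound over the $n$ samples gives
\[
\P[M \geq 2\E[M]] \leq n\bigl(1 - F(2\E[M])\bigr) = n e^{-\Lambda(2\E[M])} \leq n^{1-\lambda} \leq n^{-3/5}.
\]
This is exactly the claimed bound $\P[\D_{n:n} < 2\E[\D_{n:n}]] \geq 1 - n^{-3/5}$.

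\textbf{Edge cases.} The remaining cases are where $q$ is not attained, because $\Lambda$ jumps to $\infty$ at a finite right endpoint of the support, or where $\D$ has atoms. In both cases I would replace $q$ by a generalized quantile. If $2\E[M]$ lies beyond the support endpoint, the probability in question is $0$ and the bound holds trivially. Otherwise the same two convexity inequalities go through on the support, so the argument of Steps 1--2 applies unchanged.
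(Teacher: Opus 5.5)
The paper does not prove this lemma; it quotes it from the cited work and uses it as a black box in the worst-case-values/stochastic-noise application. Your proposal supplies a correct, self-contained proof.

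Convexity of $\Lambda=-\log(1-F)$ together with $\Lambda(0)=0$ gives $\Lambda(sx)\le s\Lambda(x)$ for $s\in[0,1]$ and $\Lambda(\lambda x)\ge \lambda\Lambda(x)$ for $\lambda\ge 1$. Anchoring at $q$ with $1-F(q)=1/n$, the first inequality and $(1-y)^n\le e^{-ny}$ give $\E[\D_{n:n}]\ge q\,(1-E_1(1)/\ln n)$. The second inequality plus a union bound then finish. The change of variables $u=n^{1-s}$ and the numerics ($E_1(1)\approx 0.2194$, $\ln 4\approx 1.386$, so $\lambda\ge 1.683>8/5$) check out.

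The citation buys brevity. Your route buys transparency and in fact a stronger statement. If you keep $\lambda_n=2(1-E_1(1)/\ln n)$ exactly instead of rounding it to $8/5$, Step~2 gives $\P[\D_{n:n}\ge 2\E[\D_{n:n}]]\le n^{1-\lambda_n}=e^{2E_1(1)}/n<1.56/n$. This is below $n^{-3/5}$ for every $n\ge 4$ and is of the right order: for the exponential distribution this probability is $\Theta(1/n)$. So the threshold $n=4$ is not actually delicate.

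One correction concerns your edge-case remarks. Under the paper's definition an MHR distribution has a density, so $F$ is continuous with $F(0)=0$, and $q$ always exists by the intermediate value theorem. If the support begins at some $a>0$, then $\Lambda\equiv 0$ on $[0,a]$ and $\Lambda$ is still convex on $[0,\infty)$. No shifting or generalized quantiles are needed, and that paragraph can be dropped.

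However, your claim that an atom at the bottom of the support could be handled with the convexity argument unaffected is false. With such an atom, $\Lambda(0)>0$, the inequality $\Lambda(sx)\le s\Lambda(x)$ fails, and so does the lemma. For example, put mass $1-n^{-6/5}$ at $0$ and let the remaining mass be $\mathrm{Exp}(1)$, so the hazard is constant on $(0,\infty)$. Then $\E[\D_{n:n}]\le n\,\E[\D]=n^{-1/5}$, while $\P[\D_{n:n}\ge 2\E[\D_{n:n}]]\ge \tfrac12 e^{-2}n^{-1/5}$, which exceeds $n^{-3/5}$ once $n$ is large. It is exactly the density hypothesis, i.e.\ $\Lambda(0)=0$, that excludes such examples.
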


Let $\varepsilon_{\max} = \max_{i,j} | \varepsilon_{i,j} |$ be the magnitude of the largest noise in an instance. $\varepsilon_{\max}$ follows the same distribution as $\D_{nm:nm}$. Therefore, letting $\mu_{\max} = \E[\D_{nm:nm}]$, by~\Cref{lem: kamyar}, $\Pr[ \varepsilon_{\max} < 2 \, \mu_{\max} ] \geq 1 - \frac{1}{(nm)^{3/5}}$. However, by~\Cref{lem: babaioff}, $\mu_{\max} = \E[\D_{nm:nm}] \leq H_{nm} \E[\D] \leq \log(nm) \E[\D]$.
Therefore,
\[
\Pr[ \varepsilon_{\max} < 2 \, \log(nm) \, \E[\D] ] \geq 1 - \frac{1}{(nm)^{3/5}}.
\]

Let $\beta = 2 \, \log(nm) \, \E[\D] \leq \frac{2n}{m}$, and condition on the event that $\varepsilon_{\max} < \beta$.
Then, applying~\Cref{thm: worst-case valuations and worst-case error upper bound} for $\delta_i = 0$ for all agents $i$, we can find an allocation $\alloc = (A_1, \dots, A_n)$ such that $\max_{i,i'} \vtrue(A_{i'}) - \vtrue(A_{i}) \leq 2 \beta \left\lceil \frac{m}{n} \right\rceil + 1 \leq 10$.
\end{proof}

\subsection{Discrepancy Minimization with Inaccurate Inputs
}\label{subsection:discrepancy-under-noise}

This section focuses on a variant of the multicolored discrepancy minimization problem. Upper bounds on multicolored discrepancy are derived, as it is known that these bounds directly translate to the problem of envy minimization~\cite{alweiss2021discrepancy,halpern2025online}. We begin with a brief refresher on discrepancy minimization.

In the vector balancing problem, given a set of vectors $u_1, u_2, \ldots, u_m$ satisfying $\|u_i\|_\infty \leq 1$, the goal is to compute signs $s_1, \ldots, s_m \in \{-1,1\}$ to minimize the \emph{discrepancy}, defined as $\|\sum_{i=1}^m s_i u_i\|_\infty$. The multicolored discrepancy minimization problem is a generalization of the vector balancing problem in which we have to assign one color, among a set of $[k] = \{1, 2,\ldots, k\}$ colors, to each vector, to minimize the \emph{multicolored discrepancy}, i.e., $\max_{i,j \in [k]} \|\sum_{v \in S_i} v - \sum_{v \in S_j} v\|_\infty$, where for each $\ell \in [k]$, $S_\ell$ is the set of vectors that are assigned the color $\ell$. 

To see how envy minimization reduces to multicolored discrepancy minimization, consider the following. If we let each agent $i \in [n]$ correspond to a color (i.e., we set $k=n$) and each item $j \in [m]$ correspond to a vector $\vtrue_j = (\vtrue_{1,j}, \vtrue_{2,j}, \ldots, \vtrue_{n,j})/\sqrt{n}$ (ensuring $\|\vtrue_j\|_\infty \leq 1$), then we get an instance of the multicolored discrepancy minimization problem. Moreover, for any allocation $\alloc = (A_1, \ldots, A_n)$, the maximum envy $\max_{i,j \in [n]} v_i(A_j) - v_i(A_i)$ is always at most the multicolored discrepancy $\max_{i,j \in [n]} \|\sum_{v \in S_i} v - \sum_{v \in S_j} v\|_\infty$ of the corresponding problem, where the set $S_i$ contains the vectors that correspond to the bundle of items $A_i$ assigned to agent $i$. Hence, $\max_{i,j \in [n]} v_i(A_j) - v_i(A_i) \leq \max_{i,j \in [n]} \|\sum_{v \in S_i} v - \sum_{v \in S_j} v\|_\infty$ and any upper bound on multicolored discrepancy translates to an upper bound on maximum envy.

We study envy minimization for a model in which we get to observe the inaccurate preferences $(\vtrue_j + \eta_j) \in \mathbb{R}^n$ for each item $j \in [m]$. We then have to allocate this item to an agent, and after the allocation, the true preference vector $\vtrue_j$ is revealed to us. We will assume that $\vtrue_{i,j} \in [-1,1]$ and $\eta_{i,j} \in [-\varepsilon,\varepsilon]$ are both worst-case.\footnote{This is without loss of generality, as the bounds we obtain can be scaled by a factor of $b$ if $\vtrue_{i,j} \in [-b,b]$ and $\eta_{i,j} \in [-\varepsilon b,\varepsilon b]$.} In this model, we give an efficient \emph{online} algorithm that computes an allocation whose maximum envy is upper bounded by $\mathcal{O}\left(\sqrt{n}(\log{\frac{mn}{\delta}} + m\varepsilon)\right)$ with probability at least $1-\delta$. %This result is formally presented below.

\begin{restatable}{theorem}{OnlineEnvyMinimizationBound}\label{theorem:envy-minimization-discrepancy}
    Let $v^*_1 + \eta_1, v^*_2 + \eta_2, \ldots, v^*_m + \eta_m \in \mathbb{R}^n$ be a sequence of items (valuation vectors) that arrive one at a time, where $v^*_{i,j} \in [-1,1], \eta_{i,j} \in [-\varepsilon, \varepsilon]$ for all $i \in [m], j \in [n]$. When each item $v^*_i + \eta_i$ arrives, we need to allocate it to one of the $n$ agents, after which the true valuation vector $v^*_i$ is revealed to us. In this model, there is an efficient online algorithm that computes an allocation $\alloc = (A_1, A_2, \ldots, A_n)$ that guarantees the following maximum envy bound $\max_{i,j \in [n]} v^*_i(A_j) - v^*_i(A_i) \leq \frac{27\sqrt{n}}{C} \log{\frac{5nm}{\delta}} + 6 m\sqrt{n} \varepsilon$, with probability at least $1 - \delta$, for $C > 0$ being a universal constant following~\Cref{definition:subgaussianity-properties}.
   \end{restatable}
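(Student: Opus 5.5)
The plan follows the roadmap given in the introduction. First, bound discrepancy in online vector balancing with noisy inputs. Second, lift this to multicolored discrepancy. Third, apply the reduction from envy to multicolored discrepancy described just before the statement.

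\textbf{Step 1: online vector balancing.} I would use a modified Alweiss--Liu--Sawhney self-balancing walk.

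Maintain the running signed sum $w_{t-1}=\sum_{s<t} s_s u_s$ of the \emph{true} vectors; this is possible because each true vector is revealed after its sign is fixed. When the noisy vector $\hat u_t = u_t+\eta_t$ arrives, choose $s_t=+1$ with probability $\tfrac12\bigl(1-\langle w_{t-1},\hat u_t\rangle/c\bigr)$, where $c = \Theta(\log(mn/\delta))$ is a threshold. If the inner product exceeds $c$ in absolute value, the algorithm fails.

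The analysis is by induction, proving a subgaussianity statement: conditioned on no failure, $w_t$ is $C$-subgaussian in every direction $\theta$, in the sense of \Cref{definition:subgaussianity-properties}. The inductive step expands $\E[\exp(\lambda\langle w_t,\theta\rangle)]$. The drift term $-\langle w_{t-1},\hat u_t\rangle\langle u_t,\theta\rangle/c$ splits into two parts:
\begin{itemize}
\item the ideal part $-\langle w_{t-1},u_t\rangle\langle u_t,\theta\rangle/c$, handled exactly as in the noiseless case;
\item an error part $-\langle w_{t-1},\eta_t\rangle\langle u_t,\theta\rangle/c$.
\end{itemize}

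Using $\|\eta_t\|_\infty\le\varepsilon$, I would bound the error part in magnitude by roughly $\varepsilon\|w_{t-1}\|_1/c$. Rather than absorbing it into subgaussianity, I would track it as a deterministic shift. The result is that $w_t$ is a $C$-subgaussian vector plus a mean offset whose $\ell_\infty$ norm grows by at most $O(\varepsilon)$ per step, hence $O(m\varepsilon)$ in total.

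A union bound over the $n$ coordinates and the $m$ time steps then gives \Cref{corollary:infinity-norm-bound}: with probability $1-\delta$, the algorithm never fails and
\[
\|w_m\|_\infty\le \tfrac{O(1)}{C}\log\tfrac{mn}{\delta}+O(m\varepsilon).
\]

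\textbf{Step 2: multicolored discrepancy.} I would use the standard binary-tree reduction, as in \cite{alweiss2021discrepancy} and \cite{halpern2025online}.

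Pad the number of colors to a power of two. At each internal node of the tree, run an independent copy of the Step 1 balancer on the vectors routed to that node; this balancer decides whether each vector goes left or right. The difference between the sums of two color classes then telescopes along the tree paths. Each internal node contributes at most its vector-balancing discrepancy, weighted by geometrically decreasing factors because the tree splits evenly.

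I would take a union bound over the $O(n)$ nodes, so the failure probability becomes $\delta/(5n)$-scaled; this is where the $\log(5nm/\delta)$ comes from. The result is \Cref{theorem:multicolored-discrepancy-bound}, with the constants $27$ and $6$ accumulated through the telescoping.

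\textbf{Step 3: envy.} Following the reduction text before the statement, feed in the vectors $v^*_j/\sqrt n$ with noise $\eta_j/\sqrt n$. Multiplying the discrepancy bound back by $\sqrt n$ yields
\[
\frac{27\sqrt n}{C}\log\frac{5nm}{\delta}+6m\sqrt n\,\varepsilon .
\]

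\textbf{Main obstacle.} The hardest part is the inductive subgaussianity step when the noise enters the drift. The noise breaks the exact cancellation that makes the noiseless walk self-balancing, and the error term involves $w_{t-1}$ itself. My first attempt would keep the noise as an additive bias controlled in $\ell_\infty$ through the threshold $c$. If that fails, I would strengthen the inductive hypothesis to ``subgaussian around a center of norm at most $2t\varepsilon$''.
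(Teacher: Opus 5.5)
Your three-step architecture matches the paper's, but Step 1 does not work as described.

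\textbf{Step 1: the noise term and the fixed threshold.} The error drift $-\frac{\langle w_{t-1},\eta_t\rangle}{c}u_t$ is not a deterministic shift, because it is proportional to the current state. In your normalization $u_t=v^*_t/\sqrt n$, its $\ell_\infty$ norm is at most $\varepsilon\|w_{t-1}\|_\infty/c$. That is $O(\varepsilon)$ only while $\|w_{t-1}\|_\infty=O(c)$. With a fixed threshold $c=\Theta(\log(mn/\delta))$ this is circular, since your own bound lets $\|w_t\|_\infty$ reach order $t\varepsilon\gg c$.

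The claim that the walk survives with probability $1-\delta$ is in fact false. Take the oblivious input $v^*_t=\frac{\varepsilon}{2}e_1$, $\eta_t=-\varepsilon e_1$ for every $t$. Your rule gives $\mathbb{E}[w_{t,1}-w_{t-1,1}\mid w_{t-1}]=\frac{\varepsilon^2}{4nc}\,w_{t-1,1}$. So the walk is repelled from the origin geometrically, and with high probability it hits $|\langle w_{t-1},\hat u_t\rangle|>c$ after roughly $\frac{nc}{\varepsilon^2}\log\frac{nc}{\varepsilon}$ steps, which can be far fewer than $m$.

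Your fallback, an offset ``of norm at most $2t\varepsilon$'', also runs into trouble. The map $\theta\mapsto\theta-\frac{\langle u_t,\theta\rangle}{c}u_t$ contracts $\|\theta\|_2$ but can increase $\|\theta\|_1$ (e.g.\ $\theta=e_1$, $u_t=\mathbf{1}/\sqrt n$). An $\ell_\infty$ offset enters the MGF as $a_t\|\theta\|_1$, so carrying it through the induction only gives $a_t\le(1+\frac1c)a_{t-1}+O(\varepsilon)$, which is exponential in $t/c$. An $\ell_2$ offset can be propagated, but it costs an extra $\sqrt n$, yielding $O(mn\varepsilon)$ instead of $6m\sqrt n\,\varepsilon$.

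The paper avoids both problems in two ways:
\begin{itemize}
\item The threshold grows with time, $c_t=\frac{9n}{4C}\log\frac{5nm}{\delta}+\varepsilon nt$, and $\|w_t\|_\infty>c_t/\sqrt n$ is part of the failure event.
\item The noise is never bounded through $w$. It is folded into the test direction $\beta=\theta-\frac{\langle v^*_t,\theta\rangle(v^*_t+\eta_t)}{4c_t}$, and the inductive bound $\exp(\sigma_t^2\|\beta\|_2^2)$ with $\sigma_t^2=c_t$ is applied. The only genuinely new term, $-\frac12\langle v^*_t,\theta\rangle\langle\eta_t,\theta\rangle\le\frac{\varepsilon n}{2}\|\theta\|_2^2$, is absorbed by Cauchy--Schwarz.
\end{itemize}
Thus the noise merely inflates the variance proxy by $\varepsilon n$ per step, and the $\varepsilon t\sqrt n$ term is just $c_t/\sqrt n$.

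\textbf{Step 2: padding to a power of two.} Step 2 also fails as written unless $n$ is a power of two. Let $k'$ be the padded number of leaves. Vectors routed to padded leaves have no color. Merging them into real colors hands some agent an extra share of about $\frac{1}{k'}\sum_j v^*_j$, i.e.\ envy $\Theta(m/n)$ in the worst case. The paper instead uses a complete full binary tree with exactly $k$ leaves and biased signs $s_t\in\{p_u,p_u-1\}$, where $p_u=n^u_\ell/(n^u_\ell+n^u_r)\in[\frac12,\frac23]$. This is why \textsc{Balance Under Noise} carries the parameter $p$ and the asymmetric failure thresholds $4(1-p)c_t$ and $-4pc_t$. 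The telescoping down the tree then contracts by $p_u\le\frac23$ per level, giving the factor $3$ per color and $6$ overall.

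\textbf{Minor point on constants.} The $5$ in $\log\frac{5nm}{\delta}$ does not come from the tree. It comes from the single-node union bound over the $n+1$ test directions, combined with the constant $3$ from the moment lemma. The tree union bound replaces $\delta$ by $\delta/k$. For $k=n$, the inequality $\log\frac{5n^2m}{\delta}\le2\log\frac{5nm}{\delta}$ is what turns $\frac{27}{2C}$ into $\frac{27}{C}$.
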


To prove \Cref{theorem:envy-minimization-discrepancy}, we first develop an online algorithm for vector balancing problem with noisy input vectors (\Cref{subsection:vector-balancing}). Using this result, we obtain a bound for the corresponding multicolored discrepancy problem (\Cref{subsection:multicolored-discrepancy}). Finally, via the reduction from envy minimization to multicolored discrepancy minimization explained earlier, we directly obtain \Cref{theorem:envy-minimization-discrepancy}.

\subsubsection{\textbf{Vector balancing with noisy input}}\label{subsection:vector-balancing}

We consider the following weighted vector-balancing problem with inaccurate vectors as input. Formally, given a sequence of inaccurate vectors $\vtrue_1 + \eta_1, \vtrue_2 + \eta_2, \ldots, \vtrue_m + \eta_m$ as input, we want to find signs $s_1, s_2, \ldots, s_m \in \{p, p-1\}$ where $p \in [1/2, 2/3]$ is a fixed parameter, to minimize $\|\sum_{i=1}^m s_i \vtrue_j\|_\infty$.
We show that $\|\sum_{i=1}^m s_i \vtrue_j\|_\infty \leq \mathcal{O}\left(\sqrt{n}(\log{\frac{mn}{\delta}} + m\varepsilon)\right)$ with probability at least $1-\delta$ is achieved by a modified version of the online algorithm given by~\cite{alweiss2021discrepancy}, which achieves a similar logarithmic bound on discrepancy, in the absence of noise. Our algorithm (Balance Under Noise) is stated as \Cref{algo:BalanceUnderNoise}.

\begin{algorithm}[ht]
   \caption{\textsc{Balance Under Noise}}\label{algo:BalanceUnderNoise}
   
   \SetKwInOut{Input}{Input}
   \SetKwInOut{Output}{Output}
   \SetKwFor{While}{while}{do}{end}
   \SetKwFor{ForAll}{for all}{do}{end}
   \SetKwIF{If}{ElseIf}{Else}{if}{then}{else if}{else}{end}
   
   \SetAlgoNlRelativeSize{-1}  % slightly smaller line numbers
   \DontPrintSemicolon          % remove automatic semicolons if you prefer

   \Input{ Parameter $p \in [1/2, 2/3]$ and a sequence of vector $v^*_1 + \eta_1, v^*_2 + \eta_2, \ldots, v^*_m + \eta_m$ where $\|v_t\|_\infty \leq 1, \|\eta_t\|_\infty \leq \varepsilon$ for all $t \in [m]$.}
   \Output{ Signs $s_1, s_2, \ldots, s_m \in \{p, p-1\}$.}
   \BlankLine
\hrule
\BlankLine

Initialize $w_1 = \mathbf{0} \in \mathbb{R}^n$ and $s_1, s_2, \ldots, s_m = 0$.\;
Define function $r(x) \coloneqq \min\{1, \max\{0, x\} \}$ for all $x \in \mathbb{R}$.\label{line:definition-ramp}
   \BlankLine
   \For{$t$ $\gets$ $1, 2, \ldots, m$}{
       Set $c_t \gets \frac{9n}{4C} \log{\frac{5nm}{\delta}} + \varepsilon nt$ \label{line:setting-fail-threshold}\; 
       \BlankLine
       % \If{$ \langle w_t , v^*_t + \eta_t \rangle  > 4(1-p)c_t$ or $ \langle w_t , v^*_t + \eta_t \rangle  < -4pc_t$ or $\|w_t\|_\infty > c_t/\sqrt{n}$\label{line:algorithm-fail-if-condition}}{
       %      Fail \label{line:algorithm-fail}
       % }
       Randomly sample $s_t \sim \begin{cases}
        p, & \text{with probability } \ \ r\left((1-p) - \frac{\langle w_t , v^*_t + \eta_t \rangle}{4c_t}\right), \\
        % p-1,  & \text{with probability } \ \ p + \frac{\langle w_t , v^*_t + \eta_t \rangle}{4c_t}.
        p-1,  & \text{with probability } \ \ 1- r\left((1-p) - \frac{\langle w_t , v^*_t + \eta_t \rangle}{4c_t}\right).
       \end{cases}$\label{line:sampling-sign}

       Set $w_{t+1} \gets w_t + s_t v^*_t$ \label{line:update-of-w-t}
   }
   \Return{signs $s_1, s_2, \ldots, s_m$.} \; 
   \end{algorithm}

\Cref{algo:BalanceUnderNoise} maintains a running discrepancy vector, $w_t = \sum_{j=1}^{t-1} s_j \vtrue_j$. When a new inaccurate vector $\vtrue_t + \eta_t$ arrives, it simply samples the sign $s_t$ from the following distribution: $p$ with probability $r((1-p) - \frac{\langle w_t , v^*_t + \eta_t \rangle}{4c_t})$ and $p-1$ otherwise (Line \ref{line:sampling-sign}). The function $r(\cdot)$ defined in Line \ref{line:definition-ramp}, ensures that the probability value is valid, i.e., lies between zero and one. Note that the probability is computed based on the inaccurate vector $\vtrue_t + \eta_t$. Once the sign $s_t$ has been sampled, the true vector $\vtrue_t$ is revealed, and $w_{t+1}$ is updated (Line \ref{line:update-of-w-t}). 

To establish the discrepancy bound, we will show that the high-dimensional random vector $w_t$ behaves like a subgaussian.\footnote{In \Cref{lemma:subgaussianity-of-w-t}, we will show that $w_t$ satisfies the definition of subgaussianity, albeit with an additional multiplicative factor.} We begin by introducing relevant definitions and a useful lemma.

\begin{definition}[\cite{vershynin2018high}]\label{definition:subgaussianity-properties}
    Every $\sigma$-subgaussian random vector $X \in \mathbb{R}^n$ satisfies the following bounds
    \begin{enumerate}
        \item $\E[\exp(\langle X, u \rangle^2/\sigma^2)] \leq 2$ for all $u \in \mathcal{S}^{n-1}$ and\footnote{The set $\mathcal{S}^{n-1} \coloneqq \{x \in \mathbb{R}^n \mid \|x\|_2 = 1\}$ denotes the unit  sphere in $n$ dimension.}
        \item If $\E[X] = 0$, then $\E[\exp(\langle X, \theta \rangle)] \leq \exp(C \sigma^2 \|\theta\|_2^2)$ for all $\theta \in \mathbb{R}^n$, where $C > 0$ is a universal constant, independent of the random vector.
    \end{enumerate}
\end{definition}

\begin{lemma}[Hoeffding's Lemma]\label{lemma:Hoeffding}
    Let $X$ be any random variable that lies in the range $[a,b]$. Then, for any constant $\lambda \in \mathbb{R}$, we have $\E[\exp(\lambda X)] \leq \exp(\lambda \E[X] + \frac{\lambda^2 (b-a)^2}{8})$.    
\end{lemma}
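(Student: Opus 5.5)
The plan is the standard proof via convexity of the exponential. Fix $\lambda \in \mathbb{R}$ and write $X$ as a convex combination of the endpoints. For $x \in [a,b]$ we have
\[
x = \frac{b-x}{b-a}\,a + \frac{x-a}{b-a}\,b,
\]
so convexity gives $e^{\lambda x} \le \frac{b-x}{b-a}e^{\lambda a} + \frac{x-a}{b-a}e^{\lambda b}$. Taking expectations, with $\mu = \E[X]$, yields
\[
\E[e^{\lambda X}] \le \frac{b-\mu}{b-a}e^{\lambda a} + \frac{\mu-a}{b-a}e^{\lambda b}.
\]
The degenerate case $a=b$ is trivial, since then $X=\mu$ almost surely.

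Next I will set $\theta = \frac{\mu-a}{b-a}\in[0,1]$ and $h = \lambda(b-a)$. The right-hand side then equals $e^{\lambda\mu}\cdot e^{L(h)}$, where
\[
L(h) = -\theta h + \log\bigl(1-\theta+\theta e^{h}\bigr).
\]
To check this, factor out $e^{\lambda a}$: the bound becomes $e^{\lambda a}(1-\theta+\theta e^{h})$, and $\lambda a = \lambda\mu - \theta h$ because $\theta h = \lambda(\mu-a)$. It therefore suffices to show that $L(h)\le h^2/8$.

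I will obtain this from Taylor's theorem with Lagrange remainder around $h=0$. At the origin, $L(0)=0$. The first derivative is $L'(h) = -\theta + \frac{\theta e^h}{1-\theta+\theta e^h}$, so $L'(0)=0$. The second derivative is $L''(h) = q(1-q)$ with $q = \frac{\theta e^h}{1-\theta+\theta e^h}\in[0,1]$, hence $L''\le 1/4$ everywhere. Taylor's theorem then gives $L(h)\le h^2/8 = \lambda^2(b-a)^2/8$, which is the claim.

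The only mildly delicate step is the algebraic reparametrization into $L(h)$ together with the bound on $L''$. Everything else is routine, and the endpoint cases $\theta\in\{0,1\}$ are also fine, since there $L\equiv 0$.
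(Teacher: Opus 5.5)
Your proof is correct and complete. The convexity step reduces to the two-point case, the reparametrization $L(h) = -\theta h + \log(1-\theta+\theta e^{h})$ is right, and $L(0)=L'(0)=0$ together with $L''=q(1-q)\le 1/4$ gives $L(h)\le h^2/8$ for every real $h$. The paper gives no proof of this lemma: it quotes it as a standard tool (applied with $X=s_t\in\{p-1,p\}$ in the proof of \Cref{lemma:subgaussianity-of-w-t}), and your argument is the classical textbook proof.
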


We will now define an event that will play a crucial role in the analysis of \Cref{algo:BalanceUnderNoise}. Let $\mathrm{Fail_t}$ be the event that during the start of iteration $t$, $w_t$ satisfies the following condition,

\begin{align}\label{equation:definition-of-Fail}
\mathrm{Fail_t} \coloneqq \I[\langle w_t , v^*_t + \eta_t \rangle  > 4(1-p)c_t] \lor \I[\langle w_t , v^*_t + \eta_t \rangle  < -4pc_t] \lor \I[\|w_t\|_\infty > c_t/\sqrt{n}]
\end{align}

That is, $\mathrm{Fail_t}$ is true if either $\|w_t\|_\infty$ is large or $(1-p) - \frac{\langle w_t, v^*_t + \eta_t \rangle}{4c_t} \notin [0,1]$. Note that if the latter condition doesn't hold, i.e., $(1-p) - \frac{\langle w_t, v^*_t + \eta_t \rangle}{4c_t} \in [0,1]$, then we have $r\left((1-p) - \frac{\langle w_t, v^*_t + \eta_t \rangle}{4c_t}\right) = (1-p) - \frac{\langle w_t, v^*_t + \eta_t \rangle}{4c_t}$. We additionally define $\mathrm{Fail_t^c}$ be the complement event of $\mathrm{Fail_t}$. The following lemma proves a concentration result for the random vector $w_t$, showing that it satisfies the definition of subgaussianity (\Cref{definition:subgaussianity-properties}), with an extra multiplicative factor.

   \begin{restatable}{lemma}{concentrationOfRandomVector}\label{lemma:subgaussianity-of-w-t}
    For all $t \in [m]$, the random vector $w_t \in \mathbb{R}^n$ satisfies the following properties: 
    \begin{enumerate}
        \item for all $\theta \in \mathbb{R}^n$, we have $\mathop{\E}\limits_{s_1,\ldots,s_{t-1}}[\exp(\langle w_t, \theta \rangle) \mid \cap_{\ell=1}^{t-1} \mathrm{Fail_\ell^c}] \leq \left(1-\frac{\delta}{m}\right)^{1-t}\exp(\sigma_t^2 \|\theta\|_2^2)$ where $\sigma_t^2 = c_t = \frac{9n}{4C} \log{\frac{5nm}{\delta}} + \varepsilon n t$.
        \item $\P[\mathrm{Fail_t^c} \mid \cap_{\ell=1}^{t-1} \mathrm{Fail_\ell^c}] \geq 1 - \frac{\delta}{m}$.
    \end{enumerate}

   \end{restatable}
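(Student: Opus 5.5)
We argue by induction on $t$, proving both parts together. The base case $t=1$ is trivial: $w_1=\mathbf{0}$, so part 1 holds with equality since $\exp(0)=1$. For part 2 at $t=1$, $\langle w_1, \cdot\rangle=0$ and $\|w_1\|_\infty=0$, so $\mathrm{Fail}_1$ cannot occur.

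The inductive step for part 1 assumes the bound at $t$ and proves it at $t+1$. First condition on $s_1,\dots,s_{t-1}$ with $\cap_{\ell\le t-1}\mathrm{Fail}_\ell^c$, and additionally on $\mathrm{Fail}_t^c$. On $\mathrm{Fail}_t^c$ the ramp $r$ is inactive, so $\P[s_t=p]=(1-p)-\frac{\langle w_t, v^*_t+\eta_t\rangle}{4c_t}$. Hence
\[
\E[s_t\mid w_t]=p-1+\P[s_t=p]=-\frac{\langle w_t, v^*_t+\eta_t\rangle}{4c_t}.
\]
Apply Hoeffding's Lemma (\Cref{lemma:Hoeffding}) to $X=s_t$, which lies in an interval of length $1$, with $\lambda=\langle v^*_t,\theta\rangle$. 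This gives
\[
\E[\exp(\langle w_{t+1},\theta\rangle)\mid w_t]\le \exp\Bigl(\langle w_t,\theta\rangle-\tfrac{\langle v^*_t,\theta\rangle\langle w_t, v^*_t+\eta_t\rangle}{4c_t}+\tfrac{\langle v^*_t,\theta\rangle^2}{8}\Bigr).
\]
Split $\langle w_t,v^*_t+\eta_t\rangle=\langle w_t,v^*_t\rangle+\langle w_t,\eta_t\rangle$. The noise part is controlled using $\|w_t\|_\infty\le c_t/\sqrt n$ (from $\mathrm{Fail}_t^c$) and $\|\eta_t\|_1\le n\varepsilon$, giving $|\langle w_t,\eta_t\rangle|\le \sqrt n\,\varepsilon c_t$. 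The resulting term is then at most $\frac{\sqrt n\varepsilon}{4}|\langle v^*_t,\theta\rangle|$, with $|\langle v^*_t,\theta\rangle|\le\sqrt n\|\theta\|_2$. Use AM-GM to absorb this into a quadratic $\varepsilon n\|\theta\|_2^2$, which is the intended increment $\sigma_{t+1}^2-\sigma_t^2=\varepsilon n$. This may need a case split on whether $\|\theta\|_2$ is small, with constants tuned.

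For the main term, follow Alweiss et al.: write the exponent as $\langle w_t,\theta-\frac{\langle v^*_t,\theta\rangle}{4c_t}v^*_t\rangle+\frac{\langle v^*_t,\theta\rangle^2}{8}$ and apply the inductive bound with $\theta'=\theta-\frac{\langle v^*_t,\theta\rangle}{4c_t}v^*_t$. Then
\[
\sigma_t^2\|\theta'\|_2^2=\sigma_t^2\|\theta\|_2^2-\tfrac{\sigma_t^2}{2c_t}\langle v^*_t,\theta\rangle^2+\tfrac{\sigma_t^2\|v^*_t\|_2^2}{16c_t^2}\langle v^*_t,\theta\rangle^2.
\]
With $\sigma_t^2=c_t$ and $\|v^*_t\|_2^2\le n\le c_t$, this is at most $\sigma_t^2\|\theta\|_2^2-\frac{7}{16}\langle v^*_t,\theta\rangle^2$, which beats the $+\frac18$ Hoeffding term.

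Removing the extra conditioning on $\mathrm{Fail}_t^c$ costs a factor $1/\P[\mathrm{Fail}_t^c\mid\cdot]\le(1-\delta/m)^{-1}$ by part 2 at step $t$, because $\E[Y\mid A\cap B]\le \E[Y\mid A]/\P[B\mid A]$ for $Y\ge0$. This accumulates the factor $(1-\delta/m)^{-t}$. One subtlety is that the inductive hypothesis is about $w_t$ conditioned on $\cap_{\ell\le t-1}\mathrm{Fail}^c_\ell$, while $w_t$ must also be restricted to $\mathrm{Fail}_t^c$. I would handle this by bounding $\E[\exp(\langle w_t,\theta'\rangle)\I_{\mathrm{Fail}_t^c}]$ by the unrestricted expectation, since the integrand is nonnegative.

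For part 2, use the subgaussian bound from part 1 at step $t$ with Markov/Chernoff. For a fixed unit direction $u$, $\P[\langle w_t,u\rangle>a]\le(1-\delta/m)^{1-t}\exp(-a^2/(4c_t))$, optimizing over $\lambda u$. Apply this to the $n$ coordinate directions $\pm e_i$ with $a=c_t/\sqrt n$, and to $u=\pm(v^*_t+\eta_t)/\|v^*_t+\eta_t\|_2$. For the latter, $\|v^*_t+\eta_t\|_2\le\sqrt n(1+\varepsilon)$, and the thresholds are $4(1-p)c_t\ge\frac43c_t$ and $4pc_t\ge 2c_t$. Each tail is then $\exp(-\Omega(c_t/n))$, and $c_t\ge\frac{9n}{4C}\log\frac{5nm}{\delta}$ makes each at most roughly $\frac{\delta}{5nm}$, with $2n+2$ events in the union bound. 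The factor $(1-\delta/m)^{1-t}\le(1-\delta/m)^{-m}$ is a bounded constant, up to $e$ for small $\delta$, and must be absorbed. Reconciling the constant $C$ from \Cref{definition:subgaussianity-properties} with these Chernoff exponents is where I expect the bookkeeping to be the main obstacle, together with the noise/AM-GM absorption in part 1. The $\varepsilon$-dependence of the directional bound for $v^*_t+\eta_t$ is handled because $c_t$ grows with $\varepsilon n t$.
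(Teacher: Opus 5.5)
\textbf{The gap is in how you handle the noise in part 1.} You bound $\langle w_t,\eta_t\rangle$ pointwise via $\|w_t\|_\infty\le c_t/\sqrt n$. This leaves the term $\frac{\sqrt n\,\varepsilon}{4}|\langle v^*_t,\theta\rangle|$ in the exponent, and that term is \emph{linear} in $\theta$. No case split or AM--GM can turn it into $\varepsilon n\|\theta\|_2^2$.

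To see this, take $\|v^*_t\|_2=\sqrt n$ and $\theta=s\,v^*_t/\sqrt n$. After your slack $-\frac{7}{16}+\frac18=-\frac{5}{16}$, your exponent exceeds $\sigma_t^2\|\theta\|_2^2$ by $\frac{n\varepsilon}{4}s-\frac{5n}{16}s^2$. The target allows only $\varepsilon n s^2$, so your bound fails for every $0<s<\varepsilon/(4\varepsilon+5/4)$.

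AM--GM only converts the linear term into an additive constant in the exponent. For example, $\max_x\{\frac{\sqrt n\varepsilon}{4}x-\frac{5}{16}x^2\}=\frac{n\varepsilon^2}{20}$, which costs a factor $e^{n\varepsilon^2/20}$ per step. The induction has no room for this: the prefactor $(1-\delta/m)^{1-t}$ is used up exactly by removing the conditioning on $\mathrm{Fail}_t^c$.

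The loss is not cosmetic. It compounds to $e^{n\varepsilon^2 t/20}$, while the $\varepsilon n t$ part of $c_t$ buys only $e^{-\varepsilon t/4}$ in your coordinate Chernoff tails. So with the $c_t$ fixed in the statement, even part 2 breaks once $n\varepsilon>5$. Structurally, your pointwise bound treats the noise as an adversarial mean shift of $w_{t+1}$. In fact the noise-induced drift $-\frac{\langle w_t,\eta_t\rangle}{4c_t}v^*_t$ is linear in $w_t$, so it only distorts the direction in which the MGF of $w_t$ is evaluated.

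\textbf{The fix is to keep $\eta_t$ inside the direction on which you invoke the inductive hypothesis.} After Hoeffding, the exponent is exactly $\langle w_t,\beta\rangle+\frac{\langle v^*_t,\theta\rangle^2}{8}$ with $\beta=\theta-\frac{\langle v^*_t,\theta\rangle}{4c_t}(v^*_t+\eta_t)$. Using $\sigma_t^2=c_t$,
\[
\sigma_t^2\|\beta\|_2^2=c_t\|\theta\|_2^2-\frac{\langle v^*_t,\theta\rangle^2}{2}-\frac{\langle v^*_t,\theta\rangle\langle\eta_t,\theta\rangle}{2}+\frac{\langle v^*_t,\theta\rangle^2\|v^*_t+\eta_t\|_2^2}{16c_t}.
\]
Now the noise cross term is quadratic. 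By Cauchy--Schwarz it is at most $\frac12\sqrt n\cdot\varepsilon\sqrt n\,\|\theta\|_2^2=\frac{\varepsilon n}{2}\|\theta\|_2^2$, which fits inside the increment $\sigma_{t+1}^2-\sigma_t^2=\varepsilon n$. The last term is at most $\frac{n}{4c_t}\langle v^*_t,\theta\rangle^2\le\frac14\langle v^*_t,\theta\rangle^2$, which is absorbed by $-\frac12+\frac18$. This is the paper's argument.

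The rest of your plan is sound: the Hoeffding step with $\E[s_t]=-\langle w_t,v^*_t+\eta_t\rangle/(4c_t)$ on $\mathrm{Fail}_t^c$, and the $(1-\delta/m)^{-1}$ cost of re-conditioning via $\E[Y\mid A\cap B]\le \E[Y\mid A]/\P[B\mid A]$. For part 2 you use direct Chernoff bounds on the $2n+2$ one-sided events, whereas the paper goes through \Cref{lemma:almost-subGaussianity} and Markov's inequality on $\exp(\langle w,u\rangle^2/\lambda^2)$. Your route also works once part 1 is correct, since each tail is $O(1)\cdot\exp(-\Omega(c_t/n))$ and $c_t\ge\frac{9n}{4C}\log\frac{5nm}{\delta}$ with $C<1/8$ leaves ample room.
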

   \begin{proof}%[Proof of~\Cref{lemma:subgaussianity-of-w-t}]

    We will prove $(1)$ and $(2)$ property by induction on $t$. Since $w_1 = \mathbf{0}$, both $(1)$ and $(2)$ are trivially satisfied for $t=1$. Now assuming that $(1)$ and $(2)$ are true for $w_t$, we now prove that both of these hold for $w_{t+1}$ as well; this will complete the inductive argument. 

    \noindent
    \textbf{Proving property $(1)$.} For any $\theta \in \mathbb{R}^n$ consider,

    \begin{align*}
        & \mathop{\E}\limits_{s_1,\ldots,s_{t}}[\exp(\langle w_{t+1}, \theta \rangle) \mid \cap_{\ell=1}^{t} \mathrm{Fail_\ell^c}] \\
        & = \mathop{\E}\limits_{s_1,\ldots,s_{t}}[\exp(\langle w_t + s_t v^*_t, \theta \rangle) \mid \cap_{\ell=1}^{t} \mathrm{Fail_\ell^c}] \tag{$w_{t+1} = w_t + s_t v^*_t$} \\
        & = \mathop{\E}\limits_{s_1,\ldots,s_{t-1}}\left[\exp(\langle w_t, \theta \rangle)  \cdot \E_{s_t}[\exp(s_t \langle v^*_t, \theta \rangle) \mid \cap_{\ell=1}^{t} \mathrm{Fail_\ell^c}] \mid \cap_{\ell=1}^{t} \mathrm{Fail_\ell^c}\right] \label{equation:induction-step-subgaussianity-1}\numberthis
    \end{align*}
    To proceed further, we will now invoke Hoeffding's lemma (\Cref{lemma:Hoeffding}) by setting $X = s_t$, $\lambda = \langle v^*_t, \theta \rangle$, $b = p$, and $a = p-1$. This gives us 
    
    \begin{align*}
        \E_{s_t}[\exp(s_t \langle v^*_t, \theta \rangle) \mid \cap_{\ell=1}^{t} \mathrm{Fail_\ell^c}] & \leq \exp\left(\langle v^*_t, \theta \rangle \E[s_t] + \frac{\langle v^*_t, \theta \rangle^2 (p - (p-1))^2}{8}\right) \\
        & = \exp\left( \frac{-\langle v^*_t, \theta \rangle \langle w_t , v^*_t + \eta_t \rangle}{4c_t} + \frac{\langle v^*_t, \theta \rangle^2 }{8}\right), \label{equation:induction-step-subgaussianity-2} \numberthis 
    \end{align*}
    where the final equality uses the fact that given the event $\mathrm{Fail_t^c}$, we have $r\left((1-p) - \frac{\langle w_t, v^*_t + \eta_t \rangle}{4c_t}\right) = (1-p) - \frac{\langle w_t, v^*_t + \eta_t \rangle}{4c_t}$, and therefore, $E[s_t] = -\frac{\langle w_t , v^*_t + \eta_t \rangle}{4c_t}$ (Line \ref{line:sampling-sign}). Substituting \Cref{equation:induction-step-subgaussianity-2} in \Cref{equation:induction-step-subgaussianity-1} gives us,

    \begin{align*}
        & \mathop{\E}\limits_{s_1,\ldots,s_{t}}[\exp(\langle w_{t+1}, \theta \rangle) \mid \cap_{\ell=1}^{t} \mathrm{Fail_\ell^c}] \\
        &\leq  \mathop{\E}\limits_{s_1,\ldots,s_{t-1}}\left[\exp(\langle w_t, \theta \rangle)  \cdot \exp\left( \frac{-\langle v^*_t, \theta \rangle \langle w_t , v^*_t + \eta_t \rangle}{4c_t} + \frac{\langle v^*_t, \theta \rangle^2 }{8}\right) \Big| \cap_{\ell=1}^{t} \mathrm{Fail_\ell^c} \right] \\
         &= \mathop{\E}\limits_{s_1,\ldots,s_{t-1}}\left[\exp\left(\langle w_t, \theta \rangle - \frac{\langle v^*_t, \theta \rangle \langle w_t , v^*_t + \eta_t \rangle}{4c_t}\right) \Big| \cap_{\ell=1}^{t} \mathrm{Fail_\ell^c}  \right] \exp\left(\frac{\langle v^*_t, \theta \rangle^2 }{8}\right)\\
         &= \mathop{\E}\limits_{s_1,\ldots,s_{t-1}}\left[\exp\left(\left\langle w_t, \theta - \frac{\langle v^*_t, \theta \rangle (v^*_t + \eta_t )}{4c_t} \right\rangle \right) \Big| \cap_{\ell=1}^{t} \mathrm{Fail_\ell^c} \right] \exp\left(\frac{\langle v^*_t, \theta \rangle^2 }{8}\right) \label{equation:induction-step-subgaussianity-3} \numberthis 
    \end{align*}
%%%%%% AM Comment: Using the inductive hypothesis below requires us to change the conditioning on the Fail events. This needs to be verified carefully. The rest of the proof works.
    Note that our inductive hypothesis, in particular property $(1)$ for $w_t$ does not apply to \Cref{equation:induction-step-subgaussianity-3} since the event being conditioned on is $\cap_{\ell=1}^{t} \mathrm{Fail_\ell^c}$ and not $\cap_{\ell=1}^{t-1} \mathrm{Fail_\ell^c}$. Nevertheless, using the inductive hypothesis, we can prove the following useful claim whose proof appears in \Cref{app: missing from section 4}.

    \begin{restatable}{claim}{subGaussianityWithDifferentEvent}\label{claim:IH-and-total-probability}
        For all $\beta \in \mathbb{R}^n$, we have $\mathop{\E}\limits_{s_1,\ldots,s_{t-1}}[\exp(\langle w_t, \beta \rangle) \mid \cap_{\ell=1}^{t} \mathrm{Fail_\ell^c}] \leq \left(1-\frac{\delta}{m}\right)^{-t}\exp(\sigma_t^2 \|\beta\|_2^2)$.
    \end{restatable}

    By setting $\beta = \theta - \frac{\langle v^*_t, \theta \rangle (v^*_t + \eta_t )}{4c_t}$ in Claim \ref{claim:IH-and-total-probability} and substituting in \Cref{equation:induction-step-subgaussianity-3} we get
    
        \begin{align*}
        & \left(1-\frac{\delta}{m}\right)^{t} \mathop{\E}\limits_{s_1,\ldots,s_{t}}[\exp(\langle w_{t+1}, \theta \rangle)  \mid \cap_{\ell=1}^{t} \mathrm{Fail_\ell^c} ]\\
        \leq & \exp\left( \sigma_t^2 \cdot \left\|\theta - \frac{\langle v^*_t, \theta \rangle (v^*_t + \eta_t )}{4c_t}\right\|_2^2 \right) \exp\left(\frac{\langle v^*_t, \theta \rangle^2 }{8}\right) \\
        = & \exp\left( \sigma_t^2 \cdot \left( \left\|\theta\right\|_2^2 +  \frac{\langle v^*_t, \theta \rangle^2}{16 c_t^2} \left\|v^*_t + \eta_t\right\|_2^2- \frac{\langle v^*_t, \theta \rangle (\langle v^*_t, \theta \rangle + \langle \eta_t, \theta \rangle)}{2c_t} \right) \right) \exp\left(\frac{\langle v^*_t, \theta \rangle^2 }{8}\right) \\
        = & \exp(\sigma_t^2 \|\theta\|_2^2) \, \exp\left(\langle v^*_t, \theta \rangle^2 \left( \frac{\sigma_t^2}{16 c_t^2}(\|v^*_t\|_2^2 + \|\eta_t\|_2^2 + 2 \langle v^*_t, \eta_t\rangle) - \frac{\sigma_t^2}{2c_t} + \frac{1}{8}\right) \right) \, \exp\left(\frac{-\sigma_t^2 \langle v^*_t, \theta \rangle \langle \eta_t, \theta \rangle}{2c_t}\right).
    \end{align*}
    To simplify the above expression, we will use the fact that $\langle a,b \rangle \leq \|a\|_2 \|b\|_2$, $\|v^*_t\|_2 \leq \sqrt{n}$, $\|\eta_2\|_2 \leq \varepsilon \sqrt{n} \leq \sqrt{n}$, along with $\sigma_t^2 = c_t$. Substituting these gives us,
    \begin{align*}
        &\mathop{\E}\limits_{s_1,\ldots,s_{t}}[\exp(\langle w_{t+1}, \theta \rangle)  \mid \cap_{\ell=1}^{t} \mathrm{Fail_\ell^c}]\\
        & \leq \left(1-\frac{\delta}{m}\right)^{-t} \exp(\sigma_t^2 \|\theta\|_2^2) \cdot \exp\left(\langle v^*_t, \theta \rangle^2 \left( \frac{4n}{16 c_t} - \frac{1}{2} + \frac{1}{8}\right) \right) \cdot \exp\left(\frac{-\langle v^*_t, \theta \rangle \langle \eta_t, \theta \rangle}{2}\right).
    \end{align*}
    Note that for a small value of $\delta$, the term $\frac{4n}{16 c_t} = 1/(\frac{9}{C} \log{\frac{5nm}{\delta}} + 4\varepsilon t)$ is negligible in comparison to the other terms. Hence, $\frac{4n}{16 c_t} - \frac{1}{2} + \frac{1}{8} < 0$ and we get

        \begin{align*}
        \mathop{\E}\limits_{s_1,\ldots,s_{t}}[\exp(\langle w_{t+1}, \theta \rangle) \mid \cap_{\ell=1}^{t} \mathrm{Fail_\ell^c}] & \leq \left(1-\frac{\delta}{m}\right)^{-t}\exp(\sigma_t^2 \|\theta\|_2^2) \exp\left(\frac{-\langle v^*_t, \theta \rangle \langle \eta_t, \theta \rangle}{2}\right) \\
        & \leq \left(1-\frac{\delta}{m}\right)^{-t}\exp(\sigma_t^2 \|\theta\|_2^2) \exp\left(\frac{\varepsilon n \| \theta \|_2^2 }{2}\right) \\
        & \leq \left(1-\frac{\delta}{m}\right)^{-t}\exp\left((\sigma_t^2 +  \varepsilon n) \|\theta\|_2^2 \right) \\
        & = \left(1-\frac{\delta}{m}\right)^{-t} \exp\left(\sigma_{t+1}^2 \|\theta\|_2^2 \right),
    \end{align*}
    where the last equality follows from the fact that $\sigma_{t+1}^2 = \frac{9n}{4C} \log{\frac{5nm}{\delta}} + \varepsilon n(t+1)$. This establishes property $(1)$ for $w_{t+1}$.

    Under the assumption that $\delta \leq 1/2$, an immediate consequence of property $(1)$, which we will use in proving property $(2)$ is that,

    \begin{align*}
    \mathop{\E}\limits_{s_1,\ldots,s_{t}}[\exp(\langle w_{t+1}, \theta \rangle) \mid \cap_{\ell=1}^{t} \mathrm{Fail_\ell^c}] & \leq \left(1-\frac{\delta}{m}\right)^{-t} \exp\left(\sigma_{t+1}^2 \|\theta\|_2^2 \right)\\
    & \leq \left(1-\frac{\delta t}{m}\right)^{-1} \exp\left(\sigma_{t+1}^2 \|\theta\|_2^2 \right)\\
    & \leq \left(1-\delta \right)^{-1} \exp\left(\sigma_{t+1}^2 \|\theta\|_2^2 \right) \tag{$t \leq m$}\\
    & \leq 2  \exp\left(\sigma_{t+1}^2 \|\theta\|_2^2 \right) \numberthis \label{inequality:almost-subGaussian}
    \end{align*}
    where the second inequality follows from the fact that $(1+x)^a \geq 1+xa$ if $x\geq -1, a \geq 1$, and final inequality follows from $\delta \leq 1/2$.
    
    \noindent
    \textbf{Proving property $(2)$.} We will show that $\Pr[\mathrm{Fail_{t+1}} \mid \cap_{\ell = 1}^{t} \mathrm{Fail_\ell^c}] \leq \delta/m$, which will directly prove property $(2)$ for iteration $t+1$. 
    
    Note that $\mathrm{Fail_{t+1}}$ happens only if in iteration $t+1$, $\langle w_{t+1} , v^*_{t+1} + \eta_{t+1} \rangle > 4(1-p)c_{t+1}$ or $\langle w_{t+1} , v^*_{t+1} + \eta_{t+1} \rangle < -4pc_{t+1}$ or $\|w_{t+1}\|_\infty > c_{t+1}/\sqrt{n}$ (\Cref{equation:definition-of-Fail}). Since $p \in [1/2, 2/3]$, we have $4p > 4(1-p)$. We will utilize this fact to simplify as follows, 
    \begin{align*}
        & \Pr[\mathrm{Fail_{t+1}} \mid \cap_{\ell = 1}^{t} \mathrm{Fail_\ell^c}] \\
        = & \Pr[\langle w_{t+1} , v^*_{t+1} + \eta_{t+1} \rangle > 4(1-p)c_{t+1} \text{ or } \langle w_{t+1} , v^*_{t+1} + \eta_{t+1} \rangle < -4pc_{t+1} \text{ or } \|w_{t+1}\|_\infty > c_{t+1}/\sqrt{n} \mid \cap_{\ell = 1}^{t} \mathrm{Fail_\ell^c}] \\
        \leq & \Pr[ | \langle w_{t+1} , v^*_{t+1} + \eta_{t+1} \rangle | > 4(1-p)c_{t+1} \text{ or } \|w_{t+1}\|_\infty > c_{t+1}/\sqrt{n} \mid \cap_{\ell = 1}^{t} \mathrm{Fail_\ell^c}] \tag{$4p > 4(1-p)$}\\
        = & \Pr[ | \langle w_{t+1} , (v^*_{t+1} + \eta_{t+1})/(4(1-p)\sqrt{n}) \rangle | > c_{t+1}/\sqrt{n} \text{ or } \|w_{t+1}\|_\infty > c_{t+1}/\sqrt{n} \mid \cap_{\ell = 1}^{t} \mathrm{Fail_\ell^c}] \\
        \leq & \sum_{u \in S} \Pr[ | \langle w_{t+1} , u \rangle | > c_{t+1}/\sqrt{n} \mid \cap_{\ell = 1}^{t} \mathrm{Fail_\ell^c}],
    \end{align*}
    where set $S \coloneqq \{(v^*_{t+1} + \eta_{t+1})/(4(1-p)\sqrt{n}), e_1, e_2, \ldots, e_n\}$ and $e_1, \ldots, e_n$ are the basis unit vectors along the axes. Note that $\forall u \in S$, we have $\|u\|_2 \leq 3/2$: for all $i \in [n]$, $\|e_i\|_2 = 1$ and $\|(v^*_{t+1} + \eta_{t+1})/(4(1-p)\sqrt{n})\|_2 \leq 2\sqrt{n}/(4(1-p)\sqrt{n}) \leq 2/(4(1-2/3)) = 3/2$, where the last inequality uses the fact that $p \leq 2/3$. We proceed to further evaluate the above probability expression,

    \begin{align*}
        & \Pr[\mathrm{Fail_{t+1}} \mid \cap_{\ell = 1}^{t} \mathrm{Fail_\ell^c}] \\
        \leq & \sum_{u \in S} \Pr[ | \langle w_{t+1} , u \rangle | > c_{t+1}/\sqrt{n} \mid \cap_{\ell = 1}^{t} \mathrm{Fail_\ell^c}] \\
        \leq & \sum_{u \in S} \Pr\left[  \left\langle w_{t+1} , u/\|u\|_2 \right\rangle^2 > \frac{c_{t+1}^2}{(3/2)^2n} \ \Big| \cap_{\ell = 1}^{t} \mathrm{Fail_\ell^c}\right] \tag{$\|u\|_2 \leq 3/2$, $\forall u \in S$}\\
        = & \sum_{u \in S} \Pr\left[  \frac{\left\langle w_{t+1} , u/\|u\|_2 \right\rangle^2}{\sigma_{t+1}^2/C} > \frac{c_{t+1} C}{(3/2)^2n} \ \Big| \cap_{\ell = 1}^{t} \mathrm{Fail_\ell^c}\right]  \tag{$\sigma_{t+1}^2 = c_{t+1}$, for $C < 1/8$}\\
        = & \sum_{u \in S} \Pr\left[  \exp\left( \frac{\left\langle w_{t+1} , u/\|u\|_2 \right\rangle^2}{\sigma_{t+1}^2/C}\right) > \exp\left( \frac{c_{t+1} C}{(3/2)^2n}\right) \ \Big| \cap_{\ell = 1}^{t} \mathrm{Fail_\ell^c}\right] \\\
        \leq & \sum_{u \in S} \ \E\left[ \exp\left( \frac{\left\langle w_{t+1} , u/\|u\|_2 \right\rangle^2}{\sigma_{t+1}^2/C}\right) \ \Big| \cap_{\ell = 1}^{t} \mathrm{Fail_\ell^c}\right] \Big/ \exp\left( \frac{c_{t+1} C}{(3/2)^2n}\right)\tag{Markov's inequality}
    \end{align*}
    % AM Comment: Proof until here works when Fail events are defined using the adjustment that probailities of s_t are capped at 0 or 1 in the algorithm. This ensures Fail events always exist. The next step uses the result from previous lemma (which we assume is true), that conditioned on the event \cap_{\ell = 1}^{t-1} \mathrm{Fail_\ell^c}, w_t above is sub-Gaussian. So, it suffices to check the previous lemma.

    Now, we will make use of the following lemma, whose proof has been deferred to \Cref{app: missing from section 4}.\footnote{The proof of this lemma closely follows~\cite{vershynin2018high}{ Proposition 2.6.1}, differing only in multiplicative constants. We include the proof for completeness.}

    \begin{restatable}{lemma}{DefinitionsOfSubgaussianity}\label{lemma:almost-subGaussianity}
    For a random vector $w \in \mathbb{R}^n$, if there exists a $\lambda > 0$ such that $\forall \theta \in \mathbb{R}^n$, $\E[\exp\left(\langle w, \theta \rangle\right)] \leq 2 \exp\left(C \lambda^2 \|\theta\|_2^2\right)$, where $C < 1/8$ is a fixed constant, then $\forall u \in \mathcal{S}^{n-1}$, we have $\E[\exp\left( \langle w, u\rangle^2/\lambda^2 \right)] \leq 3$.
\end{restatable}
    
    From \Cref{inequality:almost-subGaussian}, we know that the random vector $w_{t+1}$ satisfies the condition in \Cref{lemma:almost-subGaussianity} with $\lambda^2 = \sigma_{t+1}^2/C$. Hence, we get,
    \[\E\left[  \exp\left( \frac{\left\langle w_{t+1} , u/\|u\|_2 \right\rangle^2}{\sigma_{t+1}^2/C}\right) \ \Big| \cap_{\ell = 1}^{t} \mathrm{Fail_\ell^c}\right] \leq 3.\]
    Using the above bound, and plugging in $c_{t+1} = \frac{9n}{4C} \log{\frac{5nm}{\delta}} + \varepsilon n(t+1)$ in the running chain of inequalities,
    \begin{align*}
         \Pr[\mathrm{Fail_{t+1}} \mid \cap_{\ell = 1}^{t} \mathrm{Fail_\ell^c}] & \leq \sum_{u \in S} 3 \Big/ \exp\left( \log{\frac{5nm}{\delta}} + \frac{4C \varepsilon ({t+1})}{9} \right)\\
         & = (n+1)\cdot \frac{3\delta}{5nm \exp(4C(t+1)\varepsilon/9)} \tag{$|S| = n+1$}\\
         & \leq \frac{\delta}{m} \tag{$4C(t+1)\varepsilon/9)> 0$, $n \geq 2$}
    \end{align*}
    This completes the proof.
   \end{proof}

   \Cref{lemma:subgaussianity-of-w-t} directly leads to our main theorem about vector balancing.

    \begin{theorem}\label{corollary:infinity-norm-bound}
        During the execution of \Cref{algo:BalanceUnderNoise}, the random vector $w_t = \sum_{i=1}^{t-1} s_i v^*_i$ satisfies $\|w_t\|_\infty \leq \frac{9\sqrt{n}}{4C} \log{\frac{5nm}{\delta}} + \varepsilon t \sqrt{n}$, for all $t \in [m]$ with probability at least $1-\delta$.
   \end{theorem}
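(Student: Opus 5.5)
The plan is to read the statement off property (2) of \Cref{lemma:subgaussianity-of-w-t} by chaining the conditional non-failure probabilities, and then to observe that the bound on $\|w_t\|_\infty$ is contained in the definition of $\mathrm{Fail}_t^c$. Recall that $\mathrm{Fail}_t^c$ requires in particular $\|w_t\|_\infty \le c_t/\sqrt{n}$. With $c_t = \frac{9n}{4C}\log\frac{5nm}{\delta} + \varepsilon n t$, this reads
\[
\|w_t\|_\infty \le \frac{9\sqrt{n}}{4C}\log\frac{5nm}{\delta} + \varepsilon t\sqrt{n},
\]
which is exactly the claimed bound. It therefore suffices to show $\P[\cap_{t=1}^m \mathrm{Fail}_t^c] \ge 1-\delta$.

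For this, write the intersection probability as a telescoping product of conditional probabilities:
\[
\P\Big[\bigcap_{t=1}^m \mathrm{Fail}_t^c\Big] = \prod_{t=1}^m \P\Big[\mathrm{Fail}_t^c \,\Big|\, \bigcap_{\ell=1}^{t-1}\mathrm{Fail}_\ell^c\Big].
\]
For $t=1$ the empty intersection is the whole space. Each factor is at least $1-\delta/m$ by property (2) of \Cref{lemma:subgaussianity-of-w-t}. The conditioning events have positive probability by induction, since each factor is positive for $\delta<m$, so the conditional probabilities are well defined. Bernoulli's inequality $(1-\delta/m)^m \ge 1-\delta$ then yields the bound. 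Equivalently, a union bound over $t$ of $\P[\mathrm{Fail}_t \cap \bigcap_{\ell<t}\mathrm{Fail}_\ell^c] \le \delta/m$ gives $\P[\cup_t \mathrm{Fail}_t]\le\delta$.

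The only point needing care is that the lemma was proved under $\delta \le 1/2$ and with $C<1/8$ small, so that the term $\frac{4n}{16c_t}$ is negligible. I would state the theorem under the same standing assumptions, noting that for $\delta>1/2$ one may run the algorithm with $\delta' = 1/2$ at a constant-factor cost. These assumptions are inherited from \Cref{lemma:subgaussianity-of-w-t} rather than being a real obstacle; the substantive work was done in that lemma.
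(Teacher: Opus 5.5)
Your proposal is correct and follows the paper's proof essentially verbatim: you note that $\mathrm{Fail}_t^c$ already contains $\|w_t\|_\infty \le c_t/\sqrt{n}$, which is exactly the claimed bound, then chain the conditional probabilities from property (2) of \Cref{lemma:subgaussianity-of-w-t} and use $(1-\delta/m)^m \ge 1-\delta$. Your side remarks, that the conditioning events have positive probability and that the assumptions $\delta \le 1/2$ and $C < 1/8$ are inherited from the lemma, are accurate and slightly more careful than the paper, which leaves them implicit.
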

   \begin{proof}
       We will prove that with probability at least $1 - \delta$, the event $\cup_{\ell=1}^m \mathrm{Fail_t}$ doesn't happen during the execution of \Cref{algo:BalanceUnderNoise}, i.e., \Cref{equation:definition-of-Fail} is never satisfied.
    The theorem then follows, since \Cref{equation:definition-of-Fail} not being satisfied means that the inequality $\|w_t\|_\infty \leq c_t/\sqrt{n} = \frac{9\sqrt{n}}{4C} \log{\frac{5nm}{\delta}} + \varepsilon t \sqrt{n}$ holds for all $t$.

    Using chain rule of probability we have $\Pr[\cap_{t=1}^m \mathrm{Fail_t^c}] = \Pi_{t=1}^m \Pr[\mathrm{Fail_t^c} \mid \cap_{\ell < t}\mathrm{Fail_\ell^c}]$. From \Cref{lemma:subgaussianity-of-w-t}, we know that $\Pr[\mathrm{Fail_t^c} \mid \cap_{\ell < t}\mathrm{Fail_\ell^c}] \geq 1- \delta/m$. Hence, we get $\Pr[\cap_{t=1}^m \mathrm{Fail_t^c}] = \Pi_{t=1}^m \Pr[\mathrm{Fail_t^c} \mid \cap_{\ell < t}\mathrm{Fail_\ell^c}] \geq (1-\delta/m)^m \geq 1-\delta$. This completes the proof.
   \end{proof}

   \subsubsection{\textbf{Multicolored discrepancy and Envy minimization.}}\label{subsection:multicolored-discrepancy}
    We now generalize the result of~\Cref{corollary:infinity-norm-bound} to the multicolored discrepancy problem (\Cref{theorem:multicolored-discrepancy-bound}). The idea is based on a known reduction that has been utilized in prior work~\cite{alweiss2021discrepancy,halpern2025online}.

    We begin by first describing the algorithm that achieves the multicolored discrepancy bound of \Cref{theorem:multicolored-discrepancy-bound}. Consider a complete and full binary tree $\mathcal{T}$ having exactly $k$ leaves and root node $r$.\footnote{A binary tree is \emph{full} if every internal node has either $0$ or $2$ children. A binary tree is \emph{complete} if all levels except possibly the last one are completely filled.} For each internal node $u$ of $\mathcal{T}$, we use $n^u_\ell$ and $n^u_r$ to denote the number of leaves in its left and right subtree, respectively. We run a copy of \Cref{algo:BalanceUnderNoise} at each internal node $u$ of $\mathcal{T}$ by setting the parameter $p$ to $p_u = n^u_\ell/(n^u_\ell + n^u_r)$.

    When each vector $v^*_j + \eta_j$ arrives, we pass it down along the tree from the root to one of the leaves; $v^*_i + \eta_j$ is then assigned a color equal to the index of the leaf it eventually arrives at. In particular, first $v^*_j + \eta_j$ is passed through the copy of \Cref{algo:BalanceUnderNoise} running at the root $r$ of $\mathcal{T}$. If $v^*_j + \eta_j$ is assigned a sign of $p_r$ by \Cref{algo:BalanceUnderNoise} running at $r$, then it is passed down to the next node in the right subtree, if $v^*_j + \eta_j$ is assigned a sign of $p_r - 1$, then it is passed down to the next node in the left subtree. Supposed the next node is $u$, then we again check whether $v^*_j + \eta_j$ is assigned a sign of $p_u,$ $p_u - 1$ at $u$, and pass it down to the right, left subtree respectively. This continues till the vector $v^*_j + \eta_j$ reaches a leaf node, i.e, it gets assigned a color equal to the index of the leaf node. When this happens, the true vector $v^*_j$ is revealed. Using $v^*_j$, the value of the vector $w^u_j$ (superscript $u$ identifies the node) used in \Cref{algo:BalanceUnderNoise} is updated at each internal node through which $v^*_j + \eta_j$ passes; this update happens in Line \ref{line:update-of-w-t}. This process of assigning vectors to leaves is repeated for all the vectors. This completes the description of the multicolored discrepancy algorithm. 
    
    The guarantee of this algorithm is stated in \Cref{theorem:multicolored-discrepancy-bound}, its proof appears in \Cref{app: missing from section 4}. Notably, while we run \Cref{algo:BalanceUnderNoise} in every internal node of $\mathcal{T}$ having a depth of tree $O(\log{k})$, the final multicolored discrepancy bound we obtain is only a constant factor larger than \Cref{corollary:infinity-norm-bound}.

   \begin{restatable}{theorem}{multicoloredDiscrepancy}\label{theorem:multicolored-discrepancy-bound}
    Let $k\geq 2$ be an integer and $v^*_1 + \eta_1, v^*_2 + \eta_2, \ldots, v^*_m + \eta_m \in \mathbb{R}^n$ be a sequence of vectors that arrive one at a time, where $\|v^*_i\|_\infty \leq 1, \|\eta_i\|_\infty \leq \varepsilon \leq 1$ for all $i \in [m]$. When each vector $v^*_i + \eta_i$ arrives, we need to assign it a color from $\{1, 2, \ldots, k\}$ after which the vector $v^*_i$ is revealed to us. In this model, there is an efficient algorithm that achieves a multicolored discrepancy bound of $\max_{i,j \in [k]} \| \sum_{v \in S_i} v - \sum_{v \in S_j} v \|_\infty \leq \frac{27\sqrt{n}}{2C} \log{\frac{5nmk}{\delta}} + 6\varepsilon t\sqrt{n}$, with probability at least $1 - \delta$, where $S_a$ is the set of vectors that are assigned the color $a \in [k]$ after all the vectors have arrived.
   \end{restatable}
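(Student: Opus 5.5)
We run the tree-based algorithm described above and apply \Cref{corollary:infinity-norm-bound} at every internal node $u$ of $\mathcal{T}$, with failure probability $\delta/k$ in place of $\delta$. Since $\mathcal{T}$ has exactly $k-1$ internal nodes, a union bound gives probability at least $1-\delta$ that all copies of \Cref{algo:BalanceUnderNoise} simultaneously satisfy $\|w^u\|_\infty \le D_u$. Here we write $D_u = \frac{9\sqrt{n}}{4C}\log\frac{5nmk}{\delta} + \varepsilon m_u\sqrt{n}$, where $m_u \le m$ is the number of vectors routed through $u$; the $k$ inside the logarithm comes exactly from this rescaling of $\delta$. One subtlety needs checking: the copy at $u$ sees only the subsequence of vectors reaching $u$, and that subsequence is chosen adaptively by ancestor nodes. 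This is fine, because the analysis of \Cref{lemma:subgaussianity-of-w-t} holds for worst-case (adaptively chosen) input sequences, so conditioning on the ancestors' randomness keeps each node's guarantee valid.

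The next step translates node discrepancies into color sums. For a node $u$, let $L_u$ and $R_u$ be the sums of true vectors that went left and right at $u$, and let $T_u = L_u + R_u$. Then $w^u = p_u R_u + (p_u-1)L_u = p_u T_u - L_u$. So $L_u = p_u T_u - w^u$ and $R_u = (1-p_u)T_u + w^u$; with $p_u = n^u_\ell/(n^u_\ell+n^u_r)$, each side gets its leaf-proportional share of $T_u$ up to an error $w^u$. (The statement of the routing calls sign $p_u$ "right"; whichever side convention is used, the parameter is matched to the leaf counts so that the proportional share comes out correctly.)

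Unrolling from the root to a leaf $a$ at depth $d_a$ then gives
\[
\sum_{v\in S_a} v = \frac{1}{k} T_r + \sum_{u \text{ on path}} \pm \frac{N_{u\to a}}{N_u}\, w^u,
\]
where $N_u$ is the number of leaves under $u$ and $N_{u\to a}$ is the number of leaves under the child of $u$ containing $a$. Each step multiplies the accumulated error by the fraction of leaves kept, so the error from the node at distance $j$ above the leaf carries a coefficient of at most roughly $1$ times $D_u$. Because the tree is complete, the leaf fractions along the path shrink geometrically, each at most about $2/3$ (consistent with $p_u \in [1/2,2/3]$). This makes the total error a geometric series bounded by a constant times $\max_u D_u$, e.g.\ $3\max_u D_u$.

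The difference of two colors then has error at most $2\cdot 3 \max_u D_u$. This gives the bound $\frac{27\sqrt{n}}{2C}\log\frac{5nmk}{\delta} + 6\varepsilon m\sqrt{n}$, using $m_u \le m$ (the ``$t$'' in the statement being at most $m$).

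The main obstacle is this geometric-decay step: the coefficients multiplying $w^u$ must be controlled so that a path of depth $O(\log k)$ costs only a constant factor. The plan there is to expand the leaf's share top-down. We write $\sum_{S_a} = \frac{1}{k}T_r + E_a$ and bound $\|E_a\|_\infty \le \sum_{j\ge 0} (2/3)^j \max_u D_u$, where $j$ counts levels above the leaf. We will carefully verify that the leaf-count ratios in a complete full binary tree are bounded by $2/3$ and produce this sum of at most $3$.
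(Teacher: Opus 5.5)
Your route is the paper's. You apply \Cref{corollary:infinity-norm-bound} with failure probability $\delta/k$ at each of the $k-1$ internal nodes and take a union bound. You use $w^u = p_u T_u - L_u$ to see that each child receives its leaf-proportional share of $T_u$ up to $\pm w^u$. You then bound each leaf's error by $3D$ and double it for a pair of colors. The paper phrases your geometric series as a top-down induction on the invariant $\|T_u - \pi_u T_r\|_\infty \le 3D$ with $\pi_u = N_u/k$. For a child $c$ of $u$ one has $T_c - \pi_c T_r = q\,(T_u - \pi_u T_r) \pm w^u$ with $q = N_c/N_u \le 2/3$, so the bound $\frac{2}{3}\cdot 3D + D = 3D$ propagates and no coefficient bookkeeping is needed.

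The one step to fix is your displayed expansion, whose coefficients are wrong. Unrolling $T_c = q\,T_u \pm w^u$ along the path multiplies $w^u$ by the leaf fractions of the nodes \emph{below} $u$. These telescope to $1/N_{u\to a}$, not $N_{u\to a}/N_u$. For example, for $k=4$ one gets $T_1 = \frac14 T_r - \frac12 w^r - w^{A}$ with $A$ the parent of leaf $1$, whereas your formula assigns $w^A$ the coefficient $\frac12$. Taken literally, your coefficients all lie in $[1/3,2/3]$, so summing them along a path of length about $\log_2 k$ gives a bound that grows with $k$. With the correct coefficient, a node having $s$ internal nodes below it on the path contributes at most $(2/3)^s$. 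That is exactly the series $\sum_{s\ge 0}(2/3)^s \le 3$ you state at the end.

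Separately, your adaptivity remark overclaims. The proof of \Cref{lemma:subgaussianity-of-w-t} treats $v^*_t,\eta_t$ as fixed when it plugs a $\beta$ built from them into the inductive hypothesis, so it does not cover adaptively chosen inputs. You do not need that, though. The subsequence reaching $u$ is determined by the ancestors' coins alone, and those are independent of $u$'s coins, so conditioning on them leaves $u$ facing a fixed sequence.
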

   
   Finally, we obtain \Cref{theorem:envy-minimization-discrepancy} by setting $k=n$ in \Cref{theorem:multicolored-discrepancy-bound} and recalling that any upper bound on multicolored discrepancy is also an upper bound for the corresponding envy-minimization problem, as highlighted at the start of \Cref{subsection:discrepancy-under-noise}.

\OnlineEnvyMinimizationBound*

\section*{Acknowledgments}

Anuran Makur was supported in part by the NSF CAREER Award under Grant CCF-2337808. Alexandros Psomas and Paritosh Verma were supported in part by an NSF CAREER Award under Grant CCF-2144208, and a research award from the Herbert Simon Family Foundation. Paritosh Verma also wants to acknowledge the support of the Bilsland Dissertation Fellowship. Trung Dang was funded in part by NSF award CCF-2217069.

% In the interest of anonymization, please do not include acknowledgements in your submission.
%
%\begin{acks}
%
%	The authors would like to thank Dr. Maura Turolla of Telecom
%	Italia for providing specifications about the application scenario.
%
%	The work is supported by the \grantsponsor{GS501100001809}{National
%		Natural Science Foundation of
%		China}{http://dx.doi.org/10.13039/501100001809} under Grant
%	No.:~\grantnum{GS501100001809}{61273304\_a}
%	and~\grantnum[http://www.nnsf.cn/youngscientsts]{GS501100001809}{Young
%		Scientsts' Support Program}.
%
%
%\end{acks}

% Bibliography

\bibliographystyle{ACM-Reference-Format}
\bibliography{references}

% Appendix
\appendix
\section{Preliminaries for the BTL model}\label{appendix:BTL-preliminaries}
This appendix provides the formal setup for agents' preference elicitation via the BTL model. The material here expounds on the estimation procedure described in \Cref{subsection:applications-of-Round-Robin}.

%\paragraph{Multi-agent BTL Model.}  
We consider a natural extension of the well-studied BTL model \cite{BradleyTerry1952, Plackett1975, Luce1959}, where each agent has a separate set of BTL preferences. Specifically, for $n$ agents and $m$ items, the model assumes that each agent $i \in [n]$ has an associated (unknown) preference score $\theta^*_{i,j} \in [-b, b]$ for every item $j \in [m]$. Under the BTL model, the probability of agent $i$ preferring item $j$ over item $k$ depends only on the difference in their latent scores and is given by:
     \begin{equation} \label{eq:btl}
     \mathbb{P}(\text{agent } i \text{ prefers item } j \text{ over } k ) = \frac{\exp(\theta^*_{i,j})}{\exp(\theta^*_{i,j}) + \exp(\theta^*_{i,k})} = \psi(\theta_{i,j}^* - \theta_{i,k}^*),
     \end{equation}   
    where $\psi(t) = 1/(1+\exp(-t))$ denotes the sigmoid function.

    We collect the preference scores as a matrix $\Theta^* \in \mathbb{R}^{n \times m}$ such that $\Theta^*_{i,j} = \theta^*_{i,j}$. Our goal is to estimate the unknown preference matrix $\Theta^*$ from noisy pairwise comparison queries collected from the agents. Since the preference probabilities in \cref{eq:btl} are invariant to translation, we impose the following identifiability constraint 
     \begin{equation}\label{eq:theta} 
        \Theta^* \mathbf{1}_m = \mathbf{0}_n.
    \end{equation}     % \paragraph{Observation Graphs.}  
    
     For each agent, comparisons are observed only on a subset of item pairs. This comparison structure is modeled by an agent-specific undirected graph $\mathcal{E}_i$. Specifically, for each agent $i \in \agents $, an edge $(j,k) \in \mathcal{E}_i$ if and only if agent $i$ provides a comparison between items $j$ and $k$. We assume that the observation graph of each agent is drawn independently from an Erdős-Rényi random graph model with parameter $p$. % \paragraph{Pairwise Comparisons.}  
     
     We assume that for every agent $i \in [n]$ and for each pair $ (j,k) \in \mathcal{E}_i$ with $k > j$,  we observe $K$ independent comparisons. Let $y^{(i,l)}_{j,k} \in \{0,1\}$ denote the outcome of the $l$-th comparison between items $j$ and $k$ by agent $i$, where $y^{(i,l)}_{j,k} = 1$ indicates a preference for item $j$ over item $k$. Under the BTL model, these outcomes are generated according to
     \begin{equation}\label{eq:yjkil definition}
         y_{j,k}^{(i,l)} =
         \begin{cases}
         1, & \text{ with probability } \sF(\theta_{i,j}^*- \theta_{i,k}^*) \\
         0, & \text{ otherwise}
         \end{cases} ,
     \end{equation}
     For convenience, we set $y_{k,j}^{(i,l)} = 1 - y_{j,k}^{(i,l)}$ for all $(k>j)$ with $(j,k) \in \mathcal{E}_{i}$. We define the empirical preference probabilities $y_{j,k}^{(i)} = \frac{1}{K} \sum_{l=1}^K y_{j,k}^{(i,l)}$. 
     Given the observed empirical preference probabilities $\{y_{j,k}^{(i)}\}_{i,j,k} $, our goal is to estimate the latent preference matrix $ \Theta^*$. 
     
For each agent $i \in [n]$, we estimate the agent's parameters $\tilde{\theta}^{(i)}$ by solving the following marginal log-likelihood problem using only agent $i$'s preference data (cf. \cite{chenetalpaper2}). This problem takes the form
\begin{equation}\label{initial regularized likelihood}
    \tilde\theta^{(i)}= \argmin_{\theta: \theta^\T\1_m =0} \left\{-\frac{1}{mp}\sum_{ (j,k) \in \mathcal{E}_i } y_{j,k}^{(i)}\log\psi(\theta_j-\theta_k)  +(1-y_{j,k}^{(i)})\log(1-\psi(\theta_j-\theta_k))  \right\}.
\end{equation}
This convex optimization yields $n$ vectors $\tilde{\theta}^{(i)} \in \R^m$, one for each agent. Stacking these vectors as rows produces the initial matrix estimate $\hat{\Theta} = \left[\tilde{\theta}^{(1) \T}; \tilde{\theta}^{(2) \T}; \ldots; \tilde{\theta}^{(n) \T} \right] \in \mathbb{R}^{n \times m}.$
The estimation error bounds for this estimator are quantified in \Cref{thm:Single Agent Bounds}.

\newcommand{\Alg}{\textsc{Alg}}
\newcommand{\envy}{\operatorname{envy}}
\newcommand{\Unif}{\operatorname{Unif}}
\newcommand{\Bern}{\operatorname{Bern}}
\newcommand{\Binom}{\operatorname{Binom}}

\section{Missing proofs from~\Cref{section:stochastic-val-stochastic-err}}\label{appendix:stochastic-val-stochastic-err}

\StrictCorrelationInequality*
\begin{proof}
    Note that if $\var(g(X)) = 0$ or $\var(f(X)) = 0$, then, almost surely, $g(X) = \E[g(X)]$ or $f(X) = \E[f(X)]$ respectively. Both of these, in turn, imply the equality $E[f(X)\cdot g(X)] = E[f(X)] \cdot E[g(X)]$. Hence, the condition $\var(f(X)) > 0$ and $\var(g(X)) > 0$ is necessary.

    To prove its sufficiency, we consider the difference $\E[f(X)\cdot g(X)] - E[f(X)] \cdot E[g(X)]$, which, can be equivalently stated as $E\left[f(X)\cdot g(X) - f(X) \cdot \E[g(X)]\right] = E\left[f(X)\left( g(X) - \E[g(X)]\right)\right]$. In particular, we will show that if $\var(f(X)) > 0$ and $\var(g(X)) > 0$, then $E\left[f(X)\left( g(X) - \E[g(X)]\right)\right] > 0$ as well. This will complete the proof.

    Define $\mathcal{E}_>$ and $\mathcal{E}_<$ to respectively denote the events $g(X) > \E[g(X)]$ and $g(X) < \E[g(X)]$. Since $\var(g(X)) > 0$, both these events can happen, i.e., have a positive probability. To proceed, we will use the observation that the random variable $g(X) - \E[g(X)]$ can be written as $\I_{\mathcal{E}_>} (g(X) - \E[g(X)]) - \I_{\mathcal{E}_<} (\E[g(X)] - g(X))$. Hence, we want to show that the difference stated below is positive,
    $$E\left[f(X)\left( g(X) - \E[g(X)]\right)\right] = \E[f(X) \cdot \I_{\mathcal{E}_>} (g(X) - \E[g(X)])] - \E[f(X)\cdot \I_{\mathcal{E}_<} (\E[g(X)] - g(X))].$$

    Note that $0 = \E[g(X) - \E[g(X)]] = \E[\I_{\mathcal{E}_>} (g(X) - \E[g(X)])] - \E[\I_{\mathcal{E}_<} (\E[g(X)] - g(X))]]$. Therefore, we have $\E[\I_{\mathcal{E}_>} (g(X) - \E[g(X)])] = \E[\I_{\mathcal{E}_<} (\E[g(X)] - g(X))]]$. The fact that $\var(g(X)) > 0$, further implies that, 
    $$C \coloneqq \E[\I_{\mathcal{E}_>} (g(X) - \E[g(X)])] = \E[\I_{\mathcal{E}_<} (\E[g(X)] - g(X))]] > 0.$$

    To prove that $\E[f(X) \cdot \I_{\mathcal{E}_>} (g(X) - \E[g(X)])] - \E[f(X)\cdot \I_{\mathcal{E}_<} (\E[g(X)] - g(X))] > 0$, we will instead prove that $\E[f(X) \cdot \I_{\mathcal{E}_>} (g(X) - \E[g(X)])]/C - \E[f(X)\cdot \I_{\mathcal{E}_<} (\E[g(X)] - g(X))]/C > 0$. Indeed, the latter implies the former. Now, using the definition of $C$, we can write,
    \begin{align*}
        & \E[f(X) \cdot \I_{\mathcal{E}_>} (g(X) - \E[g(X)])]/C - \E[f(X)\cdot \I_{\mathcal{E}_<} (\E[g(X)] - g(X))]/C \\
        & = \frac{\E[f(X) \cdot \I_{\mathcal{E}_>} (g(X) - \E[g(X)])]}{\E[\I_{\mathcal{E}_>} (g(X) - \E[g(X)])]} - \frac{\E[f(X)\cdot \I_{\mathcal{E}_<} (\E[g(X)] - g(X))]}{\E[\I_{\mathcal{E}_<} (\E[g(X)] - g(X))]]}.
    \end{align*}

    In the above expression, define $u \coloneqq \frac{\E[f(X) \cdot \I_{\mathcal{E}_>} (g(X) - \E[g(X)])]}{\E[\I_{\mathcal{E}_>} (g(X) - \E[g(X)])]}$ and $\ell \coloneqq \frac{\E[f(X)\cdot \I_{\mathcal{E}_<} (\E[g(X)] - g(X))]}{\E[\I_{\mathcal{E}_<} (\E[g(X)] - g(X))]]}$ to be the left and right terms, respectively. Additionally, let $x_u \coloneqq \inf\{x : g(X) > \E[g(X)]\}$ and $x_\ell \coloneqq \sup\{x : g(X) < \E[g(X)]\}$. The following two observations will be crucial for the subsequent proof, where we will prove $u > \ell$.

    \begin{enumerate}
        \item [(i)] $f(x_u) \geq f(x_\ell)$. Since $g$ is nondecreasing, we have $x_\ell \leq x_u$, which, along with the fact that $f$ is nondecreasing implies that $f(x_u) \geq f(x_\ell)$.
        
        \item [(ii)] $u \geq f(x_u)$ and $f(x_\ell) \geq \ell$. Note that $u$, by definition, is a weighted mean (or a convex combination) of $f(X)$, for all values of $X$ that satisfy $g(X) > \E[g(X)]$, i.e., all values of $X$ that are at least $x_u$. Hence, we have $u \geq f(x_u)$. Similarly, $\ell$ is a weighted mean of $f(X)$ for values of $X$ that satisfy $E[g(X)] > g(X)$, i.e., values that are at most $x_\ell$. This gives us the inequality $f(x_\ell) \geq \ell$.
    \end{enumerate}
    
    Note that together $(i)$ and $(ii)$ imply,

    \begin{equation}\label{equation:strict-correlation-key-eq}
        u \geq f(x_u) \geq f(x_\ell) \geq \ell,
    \end{equation}
    
    i.e., we get the weak inequality $u \geq l$. However, we want to show that $u - \ell > 0$. To establish the strict inequality, we will show that at least one of the following three strict inequalities must be true: $f(x_u) > f(x_\ell)$, $u > f(x_u)$, or $f(x_\ell) > \ell$. Either of these three strict inequalities along with \Cref{equation:strict-correlation-key-eq} will imply that $u - \ell > 0$ and will complete the proof.

    Indeed, if we have $f(x_u) > f(x_\ell)$, then we're done. Henceforth, we focus on the case when $f(x_u) = f(x_\ell)$. Given that $f(x_u) = f(x_\ell)$ and $\var(f(X)) > 0$, it must be the case that $\var(f(X) \mid \mathcal{E}_>) > 0$ or $\var(f(X) \mid \mathcal{E}_<) > 0$ --- this follows from the law of total variance.

    We will show that this implies that either $u > f(x_u)$ or $f(x_\ell) > \ell$. This will rely on the following crucial property. The weighted mean $u$, of $f(X)$, for values of $X$ satisfying $g(X) > \E[g(X)]$, has the property that each value of $X = \hat{x}$ (where $g(\hat{x}) > \E[g(X)]$) has a strictly positive weight of $\frac{g(\hat{x}) - \E[g(X)]}{\E[\I_{\mathcal{E}_>} (g(X) - \E[g(X)])]}$ in the weighted mean. Hence, if we have $\var(f(X) \mid \mathcal{E}_>) > 0$, then we must have $u > f(x_u)$. Similarly, the weighted mean $\ell$, of $f(X)$, for values of $X$ satisfying $g(X) < \E[g(X)]$, is such that each $X = \tilde{x}$ with $\E[g(\tilde{x})] > g(\tilde{x})$ has a strictly positive weight of $\frac{\E[g(X)] - g(\tilde{x})}{\E[\I_{\mathcal{E}_<} (\E[g(X)] - g(X))]}$ in the weighted mean $\ell$. Therefore $u > \ell$: either $f(x_u) > f(x_\ell)$ is true, else $\var(f(X) \mid \mathcal{E}_<) > 0$ or $\var(f(X) \mid \mathcal{E}_>) > 0$, which implies that either $u > f(x_u)$ or $f(x_\ell) > \ell$. The equivalence is proved.
\end{proof}

\observedValueMaximization*
\begin{proof}
To allocate the items to the agents, we run the social welfare maximization algorithm on the noisy estimates, i.e., we allocate each item $j \in [m]$ to the agent $i \in \argmax_{i \in [n]}\{\vestimate_{i,j}\}$ (tie-breaking randomly). We observe that this algorithm satisfies the following two properties: first, for each item $j \in [m]$, we have $\mathbb{P}(i \in \argmax_{i \in [n]}\{\vestimate_{i,j}\}) = 1/n$, i.e., every agent has an equal, $1/n$, chance of getting each item --- this follows from the fact that $\vestimate_{i,j}$ are continuous random variables, and are independent and identically distributed. 

Second, we have that $\mathbb{E}[\vtrue_{i,j} \mid i \in \argmax_{k \in [n]}\{\vestimate_{k,j}\}] > \mathbb{E}[\vtrue_{i,j} \mid i \notin \argmax_{k \in [n]}\{\vestimate_{k,j}\}]$, i.e., agents have a higher \emph{true} value for an item they receive based on the \emph{observed} noisy values. This is the most crucial property of our analysis, and it follows from \Cref{lemma:noisy-order-statistic} since $\mathcal{D}$ is continuous and has a positive variance. With these two properties, we will show that this algorithm results in envy-free allocations with high probability if $m/\log{m} = \Omega(n)$.\footnote{The subsequent argument will follow a similar analysis presented in \cite{dickerson2014computational}.} 

Define indicator random variables $X_{i,j} \coloneqq \mathbbm{1}[i \in \argmax_{k \in [n]}\{\vestimate_{k,j}\}]$, which are true if and only if agent $i$ received item $j$. Define $\mathcal{B}_{i,\ell} = \mathbbm{1}[\sum_{j = 1}^m v_{i,j} X_{i,j} < \sum_{j = 1}^m v_{i,j} X_{\ell,j}]$ to be the event that agent $i$ envies agent $\ell$ after the execution of the algorithm. We want to argue that the probability $\P(\cup_{i \neq \ell} \mathcal{B}_{i,\ell})$ is tiny. Towards this, we begin by analyzing the $\P[\mathcal{B}_{i,\ell}]$. 

For all $j \in [m]$, define $\mu_j \coloneqq \mathbb{E}[v_{i,j} \mid i \in \argmax_{k \in [n]}\{\vestimate_{k,j}\}]$ and $\mu_j' \coloneqq \mathbb{E}[v_{i,j} \mid i \notin \argmax_{k \in [n]}\{\vestimate_{k,j}\}]$. Note that as items behave identically, we have $\mu_1 = \mu_2 = \ldots = \mu_m$ and $\mu'_1 = \mu'_2 = \ldots = \mu'_m$, so we define $\mu = \mu_1$ and $\mu' = \mu'_1$. Furthermore, for fixed $\mathcal{D}$ and $\mathcal{D}'$, we know that $\mu > \mu'$ 
from part (2) of \Cref{lemma:noisy-order-statistic}, since $\mathcal{D}$ and $\mathcal{D}'$ satisfy the required conditions.

Note that for the event $\mathcal{B}_{i,\ell}$ to happen, either the event $\mathcal{B}^1_{i,\ell} \coloneqq \mathbbm{1}[\sum_{j = 1}^m v_{i,j} X_{i,j} < \frac{m}{n}\left(\frac{\mu + \mu'}{2}\right)]$ or $\mathcal{B}^2_{i,\ell} \coloneqq \mathbbm{1}[\sum_{j = 1}^m v_{i,j} X_{\ell,j} > \frac{m}{n}\left(\frac{\mu + \mu'}{2}\right)]$ must happen; indeed, if $\mathcal{B}^1_{i,\ell}$ and $\mathcal{B}^2_{i,\ell}$ do not happen, then $\mathcal{B}_{i,\ell}$ cannot happen. Hence, to show that $\mathcal{B}_{i,\ell}$ happens with small probability, we will show that $\mathcal{B}^1_{i,\ell}$ and $\mathcal{B}^2_{i,\ell}$ happen with small probability.

Note that $\E[\sum_{j = 1}^m v_{i,j} X_{i,j}] = m \cdot \P(X_{i,j} = 1) \cdot \mathbb{E}[v_{i,j} \mid i \in \argmax_{k \in [n]}\{\vestimate_{k,j}\}] = \frac{m\mu}{n}$ and $\E[\sum_{j = 1}^m v_{i,j} X_{\ell,j}] = m \cdot \P(X_{\ell,j} = 1) \cdot \mathbb{E}[v_{i,j} \mid i \in \argmax_{k \in [n]}\{\vestimate_{k,j}\}] = \frac{m\mu'}{n}$. Using this, we can rewrite
\[\P(\mathcal{B}^1_{i,\ell}) = \P\left(\sum_{j = 1}^m v_{i,j} X_{i,j} < \E[\sum_{j = 1}^m v_{i,j} X_{i,j}] \left(1 - \frac{\mu - \mu'}{2\mu}\right)\right)\] and \[\P(\mathcal{B}^2_{i,\ell}) = \P\left(\sum_{j = 1}^m v_{i,j} X_{\ell,j} > \E[\sum_{j = 1}^m v_{i,j} X_{\ell,j}]\left(1 + \frac{\mu - \mu'}{2\mu'}\right)\right).\]

Define $\delta = \min\{\frac{\mu - \mu'}{2\mu} ,1\}$, and note this is a fixed quantity, as $\mu$ and $\mu'$ are fixed. Since $\sum_{j=1}^m v_{i,j}X_{i,j}$ is a nonnegative random variable, we get that

\[\P(\mathcal{B}^1_{i,\ell}) = \P\left(\sum_{j = 1}^m v_{i,j} X_{i,j} < \E[\sum_{j = 1}^m v_{i,j} X_{i,j}] \left(1 - \frac{\mu - \mu'}{2\mu}\right)\right) = \P\left(\sum_{j = 1}^m v_{i,j} X_{i,j} < \E[\sum_{j = 1}^m v_{i,j} X_{i,j}] \left(1 - \delta \right)\right).\] Additionally, the fact that $\mu > \mu'$ implies that
\[\P(\mathcal{B}^2_{i,\ell}) = \P\left(\sum_{j = 1}^m v_{i,j} X_{\ell,j} > \E[\sum_{j = 1}^m v_{i,j} X_{\ell,j}] \left(1 + \frac{\mu - \mu'}{2\mu'}\right)\right) \le \P\left(\sum_{j = 1}^m v_{i,j} X_{\ell,j} > \E[\sum_{j = 1}^m v_{i,j} X_{\ell,j}] \left(1 + \delta \right)\right).\] Now, we can use the Chernoff bounds on the sum of independent-and-$[0,1]$-bounded variables $v_{i, j} X_{i, j}$ and $v_{i, j} X_{\ell, j}$ to obtain the following upper bound on $\P(\mathcal{B}_{i,\ell})$,

\begin{align*}
    \P(\mathcal{B}_{i,\ell}) & \leq \P(\mathcal{B}^1_{i,\ell} \cup \mathcal{B}^2_{i,\ell})\\
    & \leq \P(\mathcal{B}^1_{i,\ell}) +  \P(\mathcal{B}^2_{i,\ell}) \tag{union bound}\\
    & \leq \exp\left(-\E\left[\sum_{j = 1}^m v_{i,j} X_{i,j}\right]\frac{\delta^2}{2}\right) + \exp\left(-\E\left[\sum_{j = 1}^m v_{i,j} X_{\ell,j}\right]\frac{\delta^2}{3}\right) \tag{Chernoff bound}\\
    & = \exp\left(-\frac{m\mu}{n}\cdot \frac{\delta^2}{2}\right) + \exp\left(-\frac{m\mu'}{n}\cdot \frac{\delta^2}{3}\right) < 2 \exp\left(-\frac{m\mu'}{n}\cdot \frac{\delta^2}{3}\right).
\end{align*}
    For $m$ where $m \geq \frac{9n \log{m}}{\mu' \delta^2}$, we get $\P(\mathcal{B}_{i,\ell}) \leq 2\exp(-3 \log{m})= 2/m^3$. Hence, via a union bound, we get that the probability of the resulting allocation not being envy-free is $\P(\cup_{i\neq \ell} \mathcal{B}_{i,\ell}) \leq \frac{2n^2}{m^3} \in O(1/m)$. This completes the proof.
\end{proof}

\section{Missing proofs from~\Cref{section:stochastic-val-worst-case-errors}}\label{appendix:stochastic-worst}

% We first prove \Cref{lem: barbanel shit}, restated here for convenience.

% \BarbanelMeasureTheoryLemma*

We now prove the full theorem, again restated here for convenience.
\stochasticValuesWorstNoise*
\begin{proof}[Proof of~\Cref{theorem:stochastic-value-3-worst-case-noise}]

Let $\alloc^* = (\alloc^*_1, \ldots, \alloc^*_n)$ be the allocation that minimizes the maximum envy with respect to the true values, $\max_{i, j \in [n]} \vtrue_i(\alloc^*_j) - \vtrue_i(\alloc^*_i)$. By \Cref{lem: barbanel shit}, this is at most $- c'_\D \cdot m$ with probability $1 - \exp(-\Omega_\D(m))$, where $c'_\D$ is defined in the lemma. Condition on this event. 

Let $\hat{\alloc} = (\hat{A}_1, \dots, \hat{A}_n)$ be the allocation that minimizes the maximum \emph{observed} envy, $\max_{i,j \in \agents} \vestimate_i( A_j ) - \vestimate_i( A_i )$. Since $\alloc^*$ is a feasible allocation, $\beta$ can be no larger than the observed envy gap of $\alloc^*$, which can be at most $\varepsilon \cdot m$ larger than its true gap. Thus, $\beta \le -c'_\D \cdot m + \varepsilon \cdot m$.

We will choose $c_\D = c'_\D / 4$, so $\varepsilon \le c'_\D / 4$. Plugging this in, we get that $\beta \le -(3/4) \cdot c'_\D m$. 

Now consider the following linear program that minimizes the maximum envy (with respect to observed values) over all fractional allocations.
\begin{align*}
    \min \alpha\\
  \alpha \geq \sum_{j=1}^m \vestimate_{i,j} x_{i',j} - \vestimate_{i,j} x_{i,j} && \forall i \in [n], i' \in [n], i' \neq i\\
  \sum_{i=1}^n x_{i,j} = 1 && \forall j \in [n]\\
  x_{i,j} \geq 0 && \forall i \in [n], j \in [m]
\end{align*}
The integral allocation $\hat{\alloc}$ is a feasible solution for this LP, therefore, the value of optimal solution of this LP, $x^*$, has maximum envy at most $\beta$. Let $\mu_{i,i'} = \sum_{j=1}^m x^*_{i',j} \vestimate_{i,j}$. Let $\tilde{\alloc} = (\tilde{A}_1, \dots, \tilde{A}_n)$ be the integral allocation we get by allocating item $j$ to agent $i$ with probability $x^*_{i,j}$, and let $\mu_{i, i'} = \E[ \vestimate_i(\tilde{A}_{i'}) ]$. Note that $\mu_{i,i'} - \mu_{i,i} \leq \beta \le -(3/4)c'_\D m$.

The expected observed value from agent $i$ for bundle $\tilde{A}_{i'}$ is $\E[\vestimate_i(\tilde{A}_{i'})] = \mu_{i, i'}$. The value $\vestimate_i(\tilde{A}_{i'})$ is the sum of $m$ independent random variables (one for each item), each bounded in the range $[0, \vestimate_{i, j}] \subseteq [0, b + \varepsilon]$. Since $\varepsilon\le c_\D$, the range is $[0, b + c_\D]$. By Hoeffding's inequality, for any $\delta > 0$:
\[\Pr[ | \vestimate_i(\tilde{A}_{i'}) - \mu_{i,i'} | \geq \delta ] \leq 2 \exp \left( \frac{-2\delta^2}{m (b + c_\D)^2} \right) .\]

Applying a union bound over all $\le n^2$ pairs, the probability that any observed envy deviates from its mean by more than $\delta$ is \[\Pr[ \forall i,i': | \vestimate_i(\tilde{A}_{i'}) - \mu_{i,i'} | \leq \delta ] \geq 1 - 2 \exp \left( \frac{-4 \delta^2 \ln(n)}{m} \right).\]

We will choose $\delta = m \cdot c'_\D / 4$. This ensures that except with probability $\exp(-\Omega(m))$, $\vestimate_i(\tilde{A}_{i'}) - \vestimate_i(\tilde{A}_i) \le \beta + 2\delta \le -m \cdot c'_\D/4$.

Finally, note that $\vtrue_i(\tilde{A}_{i'}) - \vtrue_i(\tilde{A}_i)$ can deviate from the observed gap by at most $m \cdot \varepsilon \le m \cdot c'_\D /4$. This ensures that $\tilde{A}$ is envy-free. A union bound over the two events yields the theorem.
\end{proof}

\section{Missing proofs from~\Cref{subsection:discrepancy-under-noise}}\label{app: missing from section 4}

\subGaussianityWithDifferentEvent*
    \begin{proof}
        Using the law of total probability we get,

        \begin{align*}  \mathop{\E}\limits_{s_1,\ldots,s_{t-1}}[\exp(\langle w_t, \beta \rangle) \mid \cap_{\ell=1}^{t-1} \mathrm{Fail_\ell^c}] & = \mathop{\E}\limits_{s_1,\ldots,s_{t-1}}[\exp(\langle w_t, \beta \rangle) \mid \cap_{\ell=1}^{t} \mathrm{Fail_\ell^c}]\cdot \P[\mathrm{Fail_t^c} \mid \cap_{\ell=1}^{t-1} \mathrm{Fail_\ell^c}] +\\
        & \ \ \ \ \ \ \ \ \ \ \ \ \  \mathop{\E}\limits_{s_1,\ldots,s_{t-1}}[\exp(\langle w_t, \beta \rangle) \mid \mathrm{Fail_t}, \cap_{\ell=1}^{t-1} \mathrm{Fail_\ell^c}]\cdot \P[\mathrm{Fail_t} \mid \cap_{\ell=1}^{t-1} \mathrm{Fail_\ell^c}] \\
        & \geq \mathop{\E}\limits_{s_1,\ldots,s_{t-1}}[\exp(\langle w_t, \beta \rangle) \mid \cap_{\ell=1}^{t} \mathrm{Fail_\ell^c}]\cdot \P[\mathrm{Fail_t^c} \mid \cap_{\ell=1}^{t-1} \mathrm{Fail_\ell^c}]\\
        & \geq \mathop{\E}\limits_{s_1,\ldots,s_{t-1}}[\exp(\langle w_t, \beta \rangle) \mid \cap_{\ell=1}^{t} \mathrm{Fail_\ell^c}] \left(1-\frac{\delta}{m}\right),
        \end{align*}
        where the final inequality follows from our induction hypothesis, i.e., $\P[\mathrm{Fail_t^c} \mid \cap_{\ell=1}^{t-1} \mathrm{Fail_\ell^c}] \geq 1-\frac{\delta}{m}$. Hence, we get $\mathop{\E}\limits_{s_1,\ldots,s_{t-1}}[\exp(\langle w_t, \beta \rangle) \mid \cap_{\ell=1}^{t} \mathrm{Fail_\ell^c}] \leq \left(1-\frac{\delta}{m}\right)^{-1}\mathop{\E}\limits_{s_1,\ldots,s_{t-1}}[\exp(\langle w_t, \beta \rangle) \mid \cap_{\ell=1}^{t-1} \mathrm{Fail_\ell^c}]$. Substituting the bound $\mathop{\E}\limits_{s_1,\ldots,s_{t-1}}[\exp(\langle w_t, \beta \rangle) \mid \cap_{\ell=1}^{t-1} \mathrm{Fail_\ell^c}] \leq \left(1-\frac{\delta}{m}\right)^{1-t} \exp(\sigma_t^2 \|\beta\|^2_2)$, from the induction hypothesis, completes the proof.
    \end{proof}

\DefinitionsOfSubgaussianity*
\begin{proof}
    Using the Taylor series expansion we have,
    \begin{align}\label{equation:taylor-expansion} 
    \E[\exp\left( \langle w, u\rangle^2/\lambda^2 \right)] = 1+ \sum_{i=1}^\infty \frac{\E[\langle w,u\rangle^{2i}]}{i! \cdot  \lambda^{2i}} 
    \end{align}
    To bound the series above, we will bound the moments $\E[\langle w,u\rangle^{2i}]$ for all even $i \geq 1$. Towards this note that,

    \begin{align*}
        \E[\langle w,u\rangle^{2i}] & = \int_0^\infty  \P[\langle w,u\rangle^{2i} > x] \, dx \\
        & = \int_0^\infty  \P[\exp\left(\langle w, \alpha u\rangle \right) > \exp(\alpha x^{1/2i})] \, dx \tag{for any $\alpha > 0$}\\
        & \leq \int_0^\infty  \frac{\E[\exp\left(\langle w, \alpha u\rangle \right)]}{\exp(\alpha x^{1/2i})} \, dx \tag{Markov's inequality}\\
        & \leq 2 \int_0^\infty  \frac{\exp\left(C \lambda^2 \alpha^2 \right)}{\exp(\alpha x^{1/2i})} \, dx \tag{$\E[\exp\left(\langle w, \alpha u \rangle\right)] \leq 2 \exp\left(C \lambda^2 \alpha^2 \right)$}\\
        & = 2 \int_0^\infty  \exp\left( - \frac{x^{1/i}}{4C\lambda^2} \right) \, dx \tag{setting $\alpha = x^{1/2i}/(2C\lambda^2)$}\\
        & = 2 \int_0^\infty  (4C\lambda^2)^{i} \cdot i \cdot y^{i-1}\exp\left( - y \right) \, dy \tag{substituting $y =\frac{x^{1/i}}{4C\lambda^2}$}\\
        & = 2i (4C\lambda^2)^{i} \int_0^\infty   y^{i-1}\exp\left( - y \right) \, dy \\
        & = 2i (4C\lambda^2)^{i} \cdot \Gamma(i)  \numberthis \label{inequality:moment-bound}
    \end{align*}
    where the final equality follows from the definition of the well-known gamma function $\Gamma(z) \coloneqq \int_0^\infty x^{z-1} e^{-z} \, dz$. Substituting the bound in \Cref{inequality:moment-bound} into \Cref{equation:taylor-expansion}, we get
    \begin{align*}
        \E[\exp\left( \langle w, u\rangle^2/\lambda^2 \right)] & = 1 + \sum_{i=1}^\infty \frac{\E[\langle w,u\rangle^{2i}]}{i! \cdot  \lambda^{2i}} \\
        & \leq 1+ \sum_{i=1}^\infty \frac{2i (4C\lambda^2)^{i} \cdot \Gamma(i)}{i! \cdot  \lambda^{2i}}\\
        & \leq 1 +  2 \sum_{i=1}^\infty (4C)^{i} \tag{$\Gamma(i) = (i-1)!$}\\
        & \leq 1 + 2 \sum_{i=1}^\infty 2^{-i} = 3 \tag{$C < 1/8$}
    \end{align*}
    This completes the proof.
\end{proof}

\multicoloredDiscrepancy*
   \begin{proof}

    Now we will show that this algorithm described in \Cref{subsection:multicolored-discrepancy}, achieves a multicolored discrepancy bound of $\max_{i,j \in [k]} \allowbreak \| \sum_{v \in S_i} v - \sum_{v \in S_j} v \|_\infty \leq \frac{27\sqrt{n}}{2C} \log{\frac{4nmk}{\delta}} + 6\varepsilon t\sqrt{n}$ with probability at least $1-\delta$. Here, for any node $u$, $S_u$ is the set of vectors that pass through node $u$ during the execution of the multicolored discrepancy algorithm. To assist with the analysis, for each node $u$, we define $\pi_u$ to be the fraction of total leaves that are in the subtree rooted at $u$, i.e., the number of leaves in the subtree of $u$ divided by $k$. We now state a key claim.
    \begin{claim}\label{claim:multicolored-bound-induction}
    With probability at least $1-\delta$, for all nodes $u$ of $\mathcal{T}$, we have $\|\sum_{v \in S_u} v - \pi_u \sum_{v \in S_r} v\|_\infty \leq 3  (\frac{9\sqrt{n}}{4C} \log{\frac{4nmk}{\delta}} + \varepsilon m \sqrt{n})$
    \end{claim}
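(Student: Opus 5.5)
The plan is to reduce the claim to the single-node guarantee of \Cref{corollary:infinity-norm-bound}, applied at every internal node of $\mathcal{T}$, and then propagate the error down the tree by induction on depth. The bound of $3$ times the single-node bound will come from a geometric series.

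\textbf{Step 1: one node, one child.} Fix an internal node $u$ with left child $a$ and right child $b$, and let $p_u = n^u_\ell/(n^u_\ell+n^u_r)$. For the copy of \Cref{algo:BalanceUnderNoise} running at $u$, the vectors it processes are exactly those in $S_u$, in arrival order. Those sent right (sign $p_u$) form $S_b$ and those sent left (sign $p_u-1$) form $S_a$. Its running vector after all of $S_u$ has been processed is therefore
\[
w^u = p_u\sum_{v\in S_b} v + (p_u-1)\sum_{v\in S_a} v = p_u \sum_{v\in S_u} v - \sum_{v \in S_a} v .
\]
Hence
\[
\sum_{v\in S_a} v - p_u\sum_{v\in S_u} v = -w^u, \qquad \sum_{v\in S_b} v - (1-p_u)\sum_{v\in S_u} v = w^u .
\]
Here $w^u$ is built from the true vectors $v^*_j$, which is exactly the quantity controlled by \Cref{corollary:infinity-norm-bound}.

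\textbf{Step 2: validity of $p_u$ and the union bound.} I need $p_u\in[1/2,2/3]$. I will check that in a complete full binary tree every internal node has $n^u_\ell \ge n^u_r$ and $n^u_\ell \le 2 n^u_r$, possibly after swapping the roles of the subtrees. The worst case is $n^u_\ell = 2$, $n^u_r = 1$.

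I then apply \Cref{corollary:infinity-norm-bound} at each of the $k-1$ internal nodes with failure probability $\delta/k$. A node $u$ sees at most $m$ vectors, all with $\|v^*_j\|_\infty \le 1$ and $\|\eta_j\|_\infty\le\varepsilon$. By a union bound, with probability at least $1-\delta$ every internal node satisfies $\|w^u\|_\infty \le B$, where
\[
B \coloneqq \tfrac{9\sqrt{n}}{4C}\log\tfrac{5nmk}{\delta} + \varepsilon m\sqrt{n}.
\]
Two small points need care here. First, the corollary is stated for $t\in[m]$, but the final vector $w_{m+1}$ is also needed; I will handle this by the same inductive bound with $c_{m+1}$, which only changes $B$ by the harmless replacement $m \to m+1$. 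Second, the constant inside the logarithm should be reconciled with the claim's $4nmk/\delta$; I will take whichever is larger.

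\textbf{Step 3: induction down the tree.} Define $e_u = \sum_{v\in S_u} v - \pi_u \sum_{v\in S_r} v$, so $e_r = 0$. For a child $a$ of $u$, set $q = \pi_a/\pi_u$, which equals $p_u$ or $1-p_u$. Step 1 gives $\sum_{S_a} v = q\sum_{S_u} v \mp w^u$, so
\[
e_a = q\, e_u \mp w^u, \qquad \|e_a\|_\infty \le q\|e_u\|_\infty + B .
\]
Since $q \le 2/3$, unrolling along the root-to-$u$ path gives
\[
\|e_u\|_\infty \le B\bigl(1 + \tfrac23 + (\tfrac23)^2 + \cdots\bigr) = 3B,
\]
which is the claim.

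\textbf{Main obstacle.} The delicate step is the one just done: ensuring every ratio $q$ is at most $2/3$, which rests on the balance structure of the complete full binary tree. This is also why the algorithm is restricted to $p\in[1/2,2/3]$. The discrepancy bound itself then follows by applying the claim to two leaves $i,j$: $\|\sum_{S_i} - \sum_{S_j}\|_\infty \le \|e_i\|_\infty+\|e_j\|_\infty$, since $\pi_i=\pi_j=1/k$.
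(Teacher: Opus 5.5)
Your proposal is correct and follows the paper's own proof: you union-bound the single-node guarantee of Theorem~\ref{corollary:infinity-norm-bound} over the $k-1$ internal nodes, then run a top-down induction through the identity $e_a = q\,e_u \mp w^u$ with $q\le 2/3$, which yields the factor $3$. Your explicit checks go beyond the paper's write-up, which skips all three: that $p_u\in[1/2,2/3]$ in a complete full binary tree, that the final running vector is $w_{m+1}$ rather than $w_m$, and that the union bound actually gives $\log(5nmk/\delta)$ rather than $\log(4nmk/\delta)$. These do not change the argument.
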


    Before proving this claim, we will show that it suffices to establish the desired multicolored discrepancy bound. Given Claim \ref{claim:multicolored-bound-induction}, for all leaves $i,j \in [k]$, we will have
    \begin{align*}
        \Big\|\sum_{v \in S_i} v - \sum_{v \in S_j} v\Big\|_\infty & = \Big\| \Big( \sum_{v \in S_i} v - \frac{1}{k} \sum_{v \in S_r} v \Big) - \Big(\sum_{v \in S_j} v - \frac{1}{k} \sum_{v \in S_r} v \Big) \Big\|_\infty \\
        & \leq \Big\| \sum_{v \in S_i} v - \pi_i \sum_{v \in S_r} v \Big\|_\infty + \Big\| \sum_{v \in S_j} v - \pi_j \sum_{v \in S_r} v \Big\|_\infty \tag{$\pi_i = \pi_j = 1/k$}\\
        & \overset{\text{(Claim }\ref{claim:multicolored-bound-induction}\text{)}}{\leq} 6 \cdot \left(\frac{9\sqrt{n}}{4C} \log{\frac{4nmk}{\delta}} + \varepsilon m \sqrt{n}\right) = \frac{27\sqrt{n}}{2C} \log{\frac{4nmk}{\delta}} + 6\varepsilon m \sqrt{n}.
    \end{align*}
    The above bound holds for all $i,j \in [k]$ with probability at least $1-\delta$, giving us the desired multicolored discrepancy bound. As the final step, we prove Claim \ref{claim:multicolored-bound-induction}.

    \begin{proof}[Proof of Claim \ref{claim:multicolored-bound-induction}]
    From \Cref{corollary:infinity-norm-bound}, we know that for each internal node $u$ in $\mathcal{T}$, $\|w^u_m\|_\infty \leq \frac{9\sqrt{n}}{4C} \log{\frac{4nmk}{\delta}} + \varepsilon t \sqrt{n}$ with probability at least $1-\delta/k$. Since $\mathcal{T}$ has exactly $k$ leaves, it has $k-1$ internal nodes. Therefore, through a union bound, we get that with probability at least $1-\delta$, all internal nodes $u$ will have $\|w^u_m\|_\infty \leq \frac{9\sqrt{n}}{4C} \log{\frac{4nmk}{\delta}} + \varepsilon m \sqrt{n}$. Additionally, note that if $u_\ell$ and $u_r$ are respectively the left and right children of $u$, then we have $\|w^u_m\|_\infty = \|p_u \sum_{v \in S_{u_r}} v - (1-p_u)\sum_{v \in S_{u_\ell}} v\|_\infty \leq \frac{9\sqrt{n}}{4C} \log{\frac{4nmk}{\delta}} + \varepsilon m \sqrt{n}$.
    
    We will prove the claim by induction, starting from the root node and going down the tree $\mathcal{T}$. For the base case note that for the root node $r$, we have $\|\sum_{v \in S_r} v - \pi_u \sum_{v \in S_r} v\|_\infty = \|\sum_{v \in S_r} v - 1 \cdot \sum_{v \in S_r} v\|_\infty = 0$, i.e., the bound is trivially satisfied. Now, assuming that this bound is satisfied for an internal node $u$, we will show that it will be satisfied for both its left child $u_\ell$ and right child $u_r$. We will prove the argument only for the left child $u_\ell$; the argument for the right child $u_r$ will follow similarly. 
    
    As per our induction hypothesis, we have $\|\sum_{v \in S_u} v - \pi_u \sum_{v \in S_r} v\|_\infty \leq 3  (\frac{9\sqrt{n}}{4C} \log{\frac{4nmk}{\delta}} + \varepsilon m \sqrt{n})$. Additionally, the copy of \Cref{algo:BalanceUnderNoise} running at $a$, guarantees us that $\|w^u_m\|_\infty = \|(1-p_u)\sum_{v \in S_{u_\ell}} v - p_u\sum_{v \in S_{u_r}} v\|_\infty \leq \frac{9\sqrt{n}}{4C} \log{\frac{4nmk}{\delta}} + \varepsilon m \sqrt{n}$; this follows from \Cref{corollary:infinity-norm-bound}. Using this we can bound,

    \begin{align*}
        & \ \|\sum_{v \in S_{u_\ell}} v - \pi_{u_\ell} \sum_{v \in S_{u_r}} v\|_\infty \\
      = & \ \|(p_u + 1- p_u) \sum_{v \in S_{u_\ell}} v - \pi_{u_\ell} \sum_{v \in S_{u_r}} v\|_\infty \\
      = & \ \|(p_u + 1- p_u) \sum_{v \in S_{u_\ell}} v - p_u \pi_u \sum_{v \in S_r} v\|_\infty \tag{$\pi_{u_\ell} = p_u \pi_u$}\\
      = & \ \|p_u(\sum_{v \in S_u} v - \sum_{v \in S_{u_r}} v) + (1- p_u) \sum_{v \in S_{u_\ell}} v - p_u \pi_u \sum_{v \in S_r} v\|_\infty \tag{$\sum_{v \in S_\ell} v = \sum_{v \in S_a} v - \sum_{v \in S_r} v$}\\
      = & \ \|p_u(\sum_{v \in S_u} v - \pi_u \sum_{v \in S_r} v ) + (1- p_u) \sum_{v \in S_{u_\ell}} v - p_u \sum_{v \in S_{u_r}} v \|_\infty \\
      \leq & \ p_u \|\sum_{v \in S_u} v - \pi_u \sum_{v \in S_r} v \|_\infty + \|(1- p_u) \sum_{v \in S_{u_\ell}} v - p_u \sum_{v \in S_{u_r}} v \|_\infty \\
      \leq & \ \frac{2}{3} \|\sum_{v \in S_u} v - \pi_u \sum_{v \in S_r} v \|_\infty + \|(1- p_u) \sum_{v \in S_{u_\ell}} v - p_u \sum_{v \in S_{u_r}} v \|_\infty  \tag{$p_a \in [\frac{1}{2}, \frac{2}{3}]$}\\
      \leq & \ \frac{2}{3} \cdot 3  (\frac{9\sqrt{n}}{4C} \log{\frac{4nmk}{\delta}} + \varepsilon m \sqrt{n}) + \frac{9\sqrt{n}}{4C} \log{\frac{4nmk}{\delta}} + \varepsilon m \sqrt{n} \\
      = & \ 3  (\frac{9\sqrt{n}}{4C} \log{\frac{4nmk}{\delta}}).
    \end{align*}
    This completes the induction step and therefore the proof.
    \end{proof}

   \end{proof}

\end{document}